\tikzset{
->, 
>=stealth', 
node distance=2cm, 
initial text=$ $, 
}
\pgfplotsset{compat=newest}
\definecolor{light-gray}{gray}{0.5}
\definecolor{lacecolor}{rgb}{1,0,0}
\begin{document}
\title{
Verifying Tree-Manipulating Programs via CHCs
}
%
%
 \author{Marco Faella\inst{1}\orcidID{0000-0001-7617-5489}
 \and
 Gennaro Parlato\inst{2}\orcidID{0000-0002-8697-2980} 
}
 \authorrunning{M. Faella and G. Parlato}
\institute{University of Naples Federico II, Italy\\ 
\email{m.faella@unina.it}\and
University of Molise, Italy\\
\email{gennaro.parlato@unimol.it}
}

%
\maketitle              

\newcommand\goesto{$\!\rightarrow$}

\newtheorem{problemdef}{Problem}

\newcommand{\FunUp}{\mathit{Up}}
\newcommand{\FunDown}{\mathit{Down}}
\newcommand{\FunInt}{\mathit{Internal}}

\newcommand{\MemSafe}
{\sc{MemSafety}}

\newcommand{\PtrFieldIsNil}{\mathit{field\_is\_nil}}
\newcommand{\FinalVal}{\mathit{final\_val}}

\newcommand{\FrameExit}
{\mathit{frame\_exit}}

\newcommand{\LabelExit}
{\mathit{label\_exit}}

\newcommand{\StepFrame}
{\mathit{step}}

\newcommand{\FirstFrame}
{\mathit{first\_frame}}

\newcommand{\ConsistentFirstFrame}
{\mathit{consistent\_first\_frames}}

\newcommand{\HeapCond}
{\mathit{heap\_cond}}

\newcommand{\FindPtr}
{\mathit{find\_ptr}}
\newcommand{\PointsHere}{\mathit{points\!\_\!here}}
\newcommand{\PtrHere}{\mathit{ptr\!\_\!here}}

\newcommand{\LastAssignmentToField}
{\mathit{last\_assignment\_to\_field}}

\newcommand{\Active}{\mathit{active}}

\newcommand{\PHatHere}
{\mathit{p\_hat\_here}}

\newcommand{\EndOfLace}
{\mathit{end\_of\_lace}}

\newcommand{\NoAssignmenToField}
{\mathit{no\_assignment\_to\_field}}

\newcommand{\Successor}{\mathit{lace\!\_\!succ}}
\newcommand{\SuccessorTree}{\mathit{lace\!\_\!succ}^{+}}
\newcommand{\SuccessorPost}{\mathit{lace\!\_\!succ}^{++}}

\newcommand{\State}{\mathbf{S}}

\newcommand{\state}{\mathit{state}}
\newcommand{\stateptr}{\mathit{state\!\_\!ptr}}
\newcommand{\statechld}{\mathit{state\!\_\!chld}}
\newcommand{\nothing}{\emptyset}

\newcommand{\EndOfLAce}
{\mathit{label\_exit}_\ExitStatus}

\newcommand{\lookforroot}
{\mathbf{LookingForRoot}}

\newcommand{\Apre}{\mathcal{A}_\mathrm{pre}}
\newcommand{\Apost}{\mathcal{A}_\mathrm{post}}
\newcommand{\Anotpost}{\mathcal{A}_{\neg\mathrm{post}}}

\newcommand{\psiPre}
{\psi_{\mathit{pre}}}

\newcommand{\chc}{\mathit{chc}}

\newcommand{\startframe}{\mathit{start}}

\newcommand{\ConsistentChild}{\mathit{consistent\!\_\!child}}
\newcommand{\ConsistentActive}{\mathit{consistent\_active\_input\_tree}}
\newcommand{\ConsistentUpd}{\mathit{consistent\_updNON\_ESISTE}}
\newcommand{\ConsistentChildTree}{\mathit{consistent\!\_\!child}^{+}}
\newcommand{\ConsistentChildPost}{\mathit{consistent\!\_\!child}^{++}}

\newcommand{\Step}{\mathit{step}}
\newcommand{\Continues}{\mathit{continues}}

\newcommand{\SAT}{\textsc{Sat}}
\newcommand{\Yes}{\sc Yes}
\newcommand{\No}{\sc No}

\newcommand{\prefix}{\mathit{prefix}}
\newcommand{\ExitStatus}{\mathit{Ex}}

\newcommand{\invCex}{\mathbf{Lab}}
\newcommand{\prob}{\mathcal{I}}
\newcommand{\ExitStatusP}[1]{\textsc{ExitStatus} \big(P,m,n,#1\big)}
\newcommand{\CexParam}[1]{\mathcal{C}_{\mathrm{ex}}\big(P,k,m,n,#1\big)}
\newcommand{\Cex}{\mathcal{C}_{\mathrm{ex}}(\mathcal{I})}
\newcommand{\Ckt}{\mathcal{C}_{\mathrm{kt}}\big(P,m,n\big)}
\newcommand{\leaf}{\mathit{leaf}}
\newcommand{\invCexPre}{\mathbf{Pre}}
\newcommand{\invPrePost}{\mathbf{PP}}
\newcommand{\invCexPost}{\mathit{Post}}
\newcommand{\CexPreParam}[2]{
\mathcal{C}_{\mathrm{pre}}\big(P,m,n,#1,#2\big)}
\newcommand{\CexPre}{\mathcal{C}_{\mathrm{pre}}(\mathcal{I},\Apre)}
\newcommand{\tainting}{\mathbf{T}}

\newcommand{\accept}{\mathit{accept}}
\newcommand{\extra}{\bar{\alpha}}
\newcommand{\invParam}{\mathcal{P}}
\newcommand{\CexTree}{\mathcal{C}_{\tainting}\big(P,m,n\big)}
\newcommand{\CexState}[1]{\mathcal{C}_{\State}\big(P,m,n,#1\big)}
\newcommand{\CexPrePost}[1]{\mathcal{C}_{\invPrePost}\big(P,m,n,#1\big)}
\newcommand{\wild}{\cdot}

\newcommand{\sidecomment}[1]{
\checkoddpage
\ifoddpage
{\marginnote{\makebox[3cm][l]{\hspace{0.8cm}\footnotesize\textcolor{gray}{#1}}}[-3ex]}
\else
{\reversemarginpar\marginnote{\makebox[3cm][l]{\hspace{0.8cm}\footnotesize\textcolor{gray}{#1}}}[-3ex]}
\fi}


\renewcommand{\epsilon}{\varepsilon}
\newcommand{\pre}{\mathit{pre}}
\newcommand{\out}{\mathit{heap}}
\newcommand{\inp}{\mathit{in}}
\newcommand{\post}{\mathit{post}}

\newcommand{\lacetree}{knitted-tree\xspace}
\newcommand{\Lacetree}{Knitted-tree\xspace}
\newcommand{\LaceTree}{Knitted-Tree\xspace}

\newcommand{\TWA}{TWA\xspace}

\renewcommand{\coloneq}{\coloneqq}
\newcommand{\child}{\mathit{child}}

\newcommand{\Type}{\mathit{type}}

\newcommand{\symb}[1]{\langle{#1}\rangle}
\newcommand{\fwd}{\boldsymbol{-}}

\newcommand{\dtree}{\mathcal{T}}
\newcommand{\ktree}{\mathcal{K}}

\newcommand{\prog}{P}

\newcommand{\nats}{\mathbb{N}}
\newcommand{\lace}{\mathit{kt}}

\newcommand{\pointername}{p}

\newcommand{\datanames}{\mathit{D}}
\newcommand{\dataname}{d}
\newcommand{\datatype}{\mathcal{D}}

\newcommand{\point}{\mathbf{point}} 
\newcommand{\data}{\mathbf{data}} 
\newcommand{\datafield}{\mathit{val}} 

\newcommand{\ptrid}{\mathit{p\_id}}

\newcommand{\activechild}{\mathit{active\!\_\!child}}

\newcommand{\pfield}{\mathit{pfield}}
\newcommand{\findptr}{\mathit{find\!\_ptr}}
\newcommand{\ptrage}{\mathit{ptr\_age}}
\newcommand{\ptrval}{\mathit{ptr\_val}}
\newcommand{\dir}{\mathit{dir}}

\newcommand{\myswarrow}{0}
\newcommand{\mysearrow}{1}
\newcommand{\ext}{\mathit{Ext}}
\newcommand{\dsig}{\mathcal{S}}
\newcommand{\Ksig}{\dsig_\ktree}
\newcommand{\inpsig}{\dsig^\Sigma}

\newcommand{\reals}{\mathbb{R}}
\newcommand{\ints}{\mathbb{Z}}
\newcommand{\bool}{\mathbb{B}}

\newcommand{\RBT}{{\sc Rbt}\xspace}

\newcommand{\BST}{{\sc Bst}\xspace}
\newcommand{\AVL}{{\sc AVL}\xspace}

\newcommand{\CHC}{\mbox{CHC}\xspace}
\newcommand{\MSOD}{\mbox{\sc Mso-D}\xspace}
\newcommand{\MSO}{\mbox{\sc Mso}\xspace}
\newcommand{\FOL}{\mbox{\sc Fol}\xspace}

\newcommand*{\defeq}{\stackrel{\mathsmaller{\mathsf{def}}}{=}}

\newcommand{\theorypreds}{\mathcal{D}^\mathrm{rel}}
\newcommand{\prop}{p}
\newcommand{\rel}{r}
\newcommand{\fun}{f}

\newcommand{\tuple}[1]{\langle #1 \rangle}
\newcommand{\LeftChild}{\mathit{left}}
\newcommand{\RightChild}{\mathit{right}}
\newcommand{\Child}{\mathit{child}}
\newcommand{\Root}{\mathit{root}}
\newcommand{\Leaf}{\mathit{leaf}}
\newcommand{\qdot}{\,.\,}

\newcommand{\SDTA}{{\sc Sdta}\xspace}
\newcommand{\SDTWA}{{\sc Sdtwa}\xspace}

\newcommand{\run}{\rho}

\newcommand{\DTL}{{\sc Dtl}\xspace}

\newcommand{\TWDTT}{{\sc Dtt}\xspace}

\newcommand{\aut}{\mathcal{A}}
\newcommand{\unreach}{\bot}

\newcommand{\taint}{\mathit{taint}}
\newcommand{\taintptr}{\mathit{taint\!\_\!ptr}}
\newcommand{\Taintnode}{\mathit{Tainted\!\_\!Node}}
\newcommand{\Taintpointer}{\mathit{Tainted\!\_\!Ptr}}

\newcommand{\taintnode}{\mathit{tainted\_node}}
\newcommand{\taintpointer}{\mathit{tainted\_ptr}}
\newcommand{\rootptr}{\widehat{p}}

\newcommand{\MSODEF}{\mbox{\sc Mso-D$_{\exists,\forall}$}\xspace}

\newcommand{\dataexp}{\mathit{exp}}
\newcommand{\stmt}{\mathit{stmt}} 
\newcommand{\pcstmt}{\mathit{pc\_stmt}} 
\newcommand{\ctrlstmt}{\mathit{ctrl\_stmt}} 
\newcommand{\recstmt}{\mathit{rec\_stmt}} 
\newcommand{\heapstmt}{\mathit{heap\_stmt}} 
\newcommand{\field}[2]{{\mathit{#1}}\!\rightarrow\!{\mathit{#2}}}

\newcommand{\New}{\mathbf{new}}
\newcommand{\Free}{\mathbf{free}}
\newcommand{\Skip}{\mathbf{skip}}
\newcommand{\WHILE}{\mathbf{while}}
\newcommand{\GOTO}{\mathbf{goto}}

\newcommand{\PC}{\mathit{PC}_P} 
\newcommand{\PV}{\mathit{PV}_{\!P}}
\newcommand{\DV}{\mathit{DV}_{\!P}} 

\newcommand{\Int}{\mathbf{int}} 
\newcommand{\Bool}{\mathbf{bool}} 

\newcommand{\pointer}{\mathbf{pointer}} 

\newcommand{\pointers}{\mathit{PF}} 
\newcommand{\nextpc}{\mathit{succ}} 

\newcommand{\NIL}{\mathbf{nil}} 
\newcommand{\NOP}{\textsc{nop}}
\newcommand{\Error}{\textsc{err}}
\newcommand{\OOM}{\textsc{oom}}
\newcommand{\RWD}{\textsc{rwd}}

\newcommand{\exit}{\mathbf{exit}} 
\newcommand{\type}{\mathit{type}} 

\newcommand{\true}{\ensuremath{\mathit{true}}}
\newcommand{\false}{\ensuremath{\mathit{false}}}

\newcommand{\exitst}{\mathit{exit}}

\newcommand{\activefield}{\mathit{active}}
\newcommand{\avail}{\mathit{avail}}
\newcommand{\instr}{\mathit{event}}
\newcommand{\Instr}{\mathit{Event}}
\newcommand{\prevfr}{\mathit{prev}}
\newcommand{\nextfr}{\mathit{next}}
\newcommand{\Dir}{\mathit{Dir}}
\newcommand{\here}{\mathbf{here}}
\newcommand{\pc}{\mathit{pc}} 
\newcommand{\edge}{\mathit{edge}} 
\newcommand{\chg}{\mathit{upd}}
\newcommand{\isnil}{\mathit{isnil}}

\newcommand{\below}{\mathrm{below}}
\newcommand{\prev}{\mathrm{prev}}

\newcommand{\EStatus}{\mathit{ExStatus}}
\newcommand{\estatus}{\mathit{ex}}
\newcommand{\stclean}{\mathit{C}}
\newcommand{\stoverflow}{\mathit{O}}
\newcommand{\stpartial}{\mathit{P}}
\newcommand{\sterror}{\mathit{E}}

\newcommand{\len}{\mathit{len}}

\newcommand{\KT}{\mathit{kt}}

\begin{abstract}
Programs that manipulate tree-shaped data structures often require complex, specialized proofs that are difficult to generalize and automate. This paper introduces a unified, foundational approach to verifying such programs. 
Central to our approach is the {\em \lacetree encoding}, modeling each program execution as a tree structure capturing input, output, and intermediate states. Leveraging the compositional nature of 
 {\lacetree}s, we encode these structures as constrained Horn clauses (\CHC{s}), reducing verification to \CHC\ satisfiability task. 
To illustrate our approach, we focus on {\em memory safety} and show how it naturally leads to simple, modular invariants.
%

\end{abstract}

\section{Introduction}\label{sec:intro}

Ensuring automatic or semi-automatic verification of programs that manipulate dynamic memory (heaps) presents numerous challenges. First, heaps can grow unboundedly in size and store unbounded data, requiring expressive logics to capture intricate invariants, preconditions, and postconditions. Next, dynamic shape changes complicate the maintenance of structural constraints, such as 
binary search tree 
ordering or balance. Aliasing further obscures these constraints and breaks invariants. Unbounded recursion and loops, common in tree algorithms, add complexity by making termination reasoning non-trivial. Finally, incomplete or missing specifications often force verifiers to infer properties on-the-fly, and integrating various verification techniques (e.g., SMT-solving, abstract interpretation, and interactive theorem proving) remains a nontrivial task.
 
This paper presents a foundational approach for automated analysis of heap-manipulating programs, particularly those involving tree data structures. By combining automata and logic-based methods, we reduce 
verification to checking satisfiability of constrained Horn clauses ({\CHC}s). This reduction allows us to capitalize on advancements in {\CHC} solvers~\cite{DBLP:journals/corr/abs-2404-14923,De_Angelis_2022}. 
We demonstrate our approach on the \emph{memory safety problem}, ensuring that no execution 
causes crashes (e.g., null-pointer dereferences, use-after-free, or illegal frees) or nontermination. 

The core of our methodology maps an entire program execution $\pi$ on an input data tree $T$ into a single tree data structure called a {\bf \lacetree}. 
This structure encapsulates the input, output, and all intermediate configurations of $\pi$. Its underlying tree, or {\em backbone}, is derived from~$T$ by adding a fixed number of 
inactive nodes to allow dynamic node allocation. 
Each node is labeled by a sequence of records, or \emph{frames}, connected in a global linear sequence called the \emph{lace},
where consecutive frames may belong to the same node or adjacent nodes,
resembling a knitting tree.
Each frame describes changes to the associated node (e.g., pointer updates) and records the current program state, preserving the backbone's original structure while also representing the final heap that may differ graph-wise. However, the {\lacetree}'s parameters -- the number of extra nodes added to the backbone and the number of frames per node -- may not capture every possible execution, potentially excluding some 
from our analysis.

\begin{wrapfigure}[18]{L}{4.9cm}
\vspace{-1.2cm}
\scriptsize
\centering
\begin{align*}
 & \pointer \, \mathit{head},\mathit{prev},\mathit{cur},\mathit{tmp}\\
 & \Int \, \mathit{key}\\
\mathit{0:}\ \ \  &\mathit{cur} \;\mathbf{\coloneq}\; \mathit{head}\,\mathbf{;}\\
\mathit{1:}\ \ \ & \WHILE\;(\mathit{cur} \,\mathbf{\neq}\,\NIL\, \&\&\,\field{cur}{\datafield} \,\mathbf{\not=}\,\mathit{key})\;\mathbf{do}\\
\mathit{2:}\ \ \ & \ \ \ \ \ \mathit{tmp} \;\mathbf{\coloneq}\; \field{\mathit{cur}}{\mathit{next}}\,\mathbf{;}\\
\mathit{3:}\ \ \ & \ \ \ \ \ \field{\mathit{cur}}{\mathit{next}} \;\mathbf{\coloneq}\; \mathit{prev}\,\mathbf{;}\\
\mathit{4:}\ \ \ & \ \ \ \ \ \mathit{prev} \;\mathbf{\coloneq}\; \mathit{cur}\,\mathbf{;}\\
\mathit{5:}\ \ \ & \ \ \ \ \ \mathit{cur} \;\mathbf{\coloneq}\; \mathit{tmp}\,\mathbf{;}\\
& \mathbf{od}\,\mathbf{;}\\
\mathit{6:}\ \ \ & \mathbf{if}\;(\mathit{cur} \,\mathbf{\not=}\,\NIL)\;\mathbf{then}\,\,\,\text{\textcolor{gray}{We found the key}}\\
\mathit{7:}\ \ \ &  \ \ \ \ \ 
\mathit{tmp} \;\mathbf{\coloneq}\; \field{cur}{\mathit{next}}\,\mathbf{;}\\
\mathit{8:}\ \ \ &  \ \ \ \ \ 
\field{head}{\mathit{next}} \;\mathbf{\coloneq}\; \mathit{tmp} \,\mathbf{;}\, \text{\textcolor{gray}{Rewind for $\mathit{head}$}}\\
\mathit{9:}\ \ \ &  \ \ \ \ \ 
\mathit{head} \;\mathbf{\coloneq}\; \mathit{cur} \,\mathbf{;} \quad \text{\textcolor{gray}{Rewind for $\mathit{cur}$}} \\
&\mathbf{else}\\
\mathit{10:}\ \ \ &  \ \ \ \ \ 
\mathit{head} \;\mathbf{\coloneq}\; \mathit{prev} \,\mathbf{;}\\
&\mathbf{fi}\,\mathbf{;}\\
\mathit{11:}\ \ \ &\exit\,\mathbf{;}
\end{align*}
\vspace{-0.6cm}
\caption{
Running example.
}
\label{fig:running example}
\end{wrapfigure}

\paragraph{Example.}
We illustrate our encoding method with a simple program shown in Fig.~\ref{fig:running example} that manipulates a singly linked list, specifically designed to highlight the key features of our encoding methodology.
The program takes a list of integers with $\mathit{head}$ pointing to the first node and a value stored in $\mathit{key}$. It reverses the list up to and including the first node containing the key, then appends the remaining nodes.    
For example, given the input list 1 \goesto 2 \goesto 3 \goesto 4 \goesto 5 and $\mathit{key} =$~3, 
the output list is: 3 \goesto 2 \goesto 1 \goesto 4 \goesto 5.
The \lacetree corresponding to this program execution is shown in~Fig.~\ref{fig:kt1}. 

The \lacetree's structure matches the input list.
The label of each node is displayed next to it, with the lace being
depicted by red arrows and numbers.
In our example, the lace starts at the frame with ordinal 1 of node $u_1$, 
takes two local steps to the frames with ordinals 2 and 3, then moves to the frame numbered 4 in node $u_2$, and so on.
Note that consecutive frames in the lace either belong to the same node,
or to adjacent nodes.
Due to space constraints, only a selection of the information contained within each frame is displayed.

\noindent Our encoding's main innovation is how it handles pointer fields and variables:
\begin{enumerate}
    \item an update to a pointer field is stored in its node;
    \item an update to a pointer variable is stored in the node it points to. 
\end{enumerate}
For example, the lace's first frame includes the event
$\symb{\mathit{head} \coloneq \here}$, indicating that 
$\mathit{head}$ initially points to the first node of the input structure.
This initial assignment is implicit.
The second frame corresponds to the execution of 
$\mathit{cur} \coloneq \mathit{head}$ at line 0.
%
\begin{figure}[t]
\begin{tikzpicture}[remember picture,node distance=0.6cm]
\newcommand{\id}[1]{\multicolumn{1}{c}{\tiny\textcolor{blue}{#1}}}
\newcommand{\headp}{\mathit{head}}
\newcommand{\tmp}{\mathit{tmp}}
\newcommand{\prevp}{\mathit{prev}}
\newcommand{\curp}{\mathit{cur}}
\newcommand{\nextf}{\mathit{next}}
\newcommand{\mycoloneq}{\!\coloneq\!}
\renewcommand{\here}{\mathbf{\scriptscriptstyle here}}
\newcommand{\printcounter}[1]{%
    \multicolumn{1}{c}{\tikzmarknode[lacelabel]{f#1}{#1}}
}
\tikzset{ball/.style={circle, thick, draw, minimum size=20pt}}
\tikzset{log/.style={font=\scriptsize,inner sep=1pt}}
\tikzset{lace/.style={->,draw=lacecolor}}
\tikzset{lacelabel/.style={text=lacecolor, inner sep=2pt}}
\tikzset{internal/.style={}}
\tikzset{across/.style={}}
\setlength\arrayrulewidth{1pt}\arrayrulecolor{blue}
\node[ball] (u1) at (0,0) {$u_1$};
\node[ball] (u2) at ([yshift=-2.1cm]u1) {$u_2$};
\node[ball] (u3) at ([yshift=-2.1cm]u2) {$u_3$};
\node[ball] (u4) at ([yshift=-1.8cm]u3) {$u_4$};
\node[ball] (u5) at ([yshift=-1.2cm]u4) {$u_5$};
\draw[thick] (u1) -- (u2);
\draw[thick] (u2) -- (u3);
\draw[thick] (u3) -- (u4);
\draw[thick] (u4) -- (u5);
\node[log, right=of u1] (U1) {
\begin{tabular}{|c|c|c|c|c|c|c|c|c|}
\multicolumn{1}{c}{} &\printcounter{1} &\printcounter{2} &\printcounter{3}       &\printcounter{5} &\printcounter{6} &\printcounter{18}      \\ \hline
$\datafield\!:1$  &$\pc\!:0$                      &$\pc\!:1$                     &$\pc\!:2$ &$\pc\!:4$                     &$\pc\!:5$            &$\pc\!:9$ \\ 
          &$\mathit{key}\!:3$                      &$\symb{\curp\mycoloneq\here}$ &        &$\symb{\nextf\mycoloneq\NIL}$ &$\symb{\prevp\mycoloneq\here}$ &$\symb{\nextf\mycoloneq\tmp}$ \\
          &$\symb{\headp\mycoloneq\here}$ &                            &        &$\chg_{\tmp}$               &                   &$\chg_{\tmp,\curp,\prevp}$ \\
\hline
\id{1}    &\id{2}                       &\id{3}                      &\id{4}  &\id{5}                      &\id{6}             &\id{7} 
\end{tabular}
};
\node[log, right=of u2] (U2) {
\begin{tabular}{|c|c|c|c|c|c|c|c|}
\multicolumn{1}{c}{}         &\printcounter{4}                          &\printcounter{7}                          &\printcounter{8}       &\printcounter{10}                            &\printcounter{11}                           &\printcounter{17}        &\printcounter{19} \\ \hline
$\datafield\!:2$ &$\pc\!:3$                    &$\pc\!:1$                    &$\pc\!:2$ &$\pc\!:4$                       &$\pc\!:5$                      &$\pc\!:8$   &$\pc\!:9$ \\
         &$\symb{\tmp\mycoloneq\here}$ &$\symb{\curp\mycoloneq\here}$ &       &$\symb{\nextf\mycoloneq\prevp}$ &$\symb{\prevp\mycoloneq\here}$ &$\RWD_2$ &$\RWD_{7}$ \\ 
         &         &$\chg_{\prevp}$        &  &$\chg_{\tmp}$ & &$\chg_{\tmp,\curp}$ & \\ \hline
\id{1}   &\id{2}                     &\id{3}                     &\id{4}  &\id{5}                        &\id{6}                      &\id{7}    &\id{8}
\end{tabular}
};
\node[log, right=of u3] (U3) {
\begin{tabular}{|c|c|c|c|c|c|c|c|c|c|}
\multicolumn{1}{c}{}         &\printcounter{9}                          &\printcounter{12}                          &\printcounter{13}      &\printcounter{14}      &\printcounter{16}            &\printcounter{20}       \\ \hline
$\datafield\!:3$ &$\pc\!:3$                    &$\pc\!:1$                     &$\pc\!:6$ &$\pc\!:7$ &$\pc\!:8$       &$\pc\!:11$ \\ 
         &$\symb{\tmp\mycoloneq\here}$ &$\symb{\curp\mycoloneq\here}$ &        &        &$\RWD_2$      &$\symb{\headp\mycoloneq\here}$ \\ 
         &                           &$\chg_{\prevp}$             &        &        &$\chg_{\tmp}$ &         \\ \hline
\id{1}   &\id{2}                     &\id{3}                      &\id{4}  &\id{5}  &\id{6}        &\id{7}
\end{tabular}
};
\node[log, right=of u4] (U4) {
\begin{tabular}{|c|c|c|c|c|c|c|}
\multicolumn{1}{c}{}         &\printcounter{15}                         \\ \hline
$\datafield\!:4$ &$\pc\!:8$                    \\ 
         &$\symb{\tmp\mycoloneq\here}$ \\ \hline
\id{1}   &\id{2}                          
\end{tabular}
};
\node[log, right=of u5, yshift=-0.1cm] (U5) {
\begin{tabular}{|c|c|c|c|c|c|c|}
\hline
$\datafield\!:5$         \\ \hline
\id{1}
\end{tabular}
};
\pgfmathsetmacro{\botoffset}{-1.1}
\draw[lace,internal] (f1) to (f2);
\draw[lace,internal] (f2) to (f3);
\draw[lace,across] (f3)+(-3pt,\botoffset) to[controls={+(0,-0.5) and +(-0.1,1)}] (f4.west);
\draw[lace,across] (f4.east) to[controls={+(0.3,0.7) and +(-0.3,-0.6)}] ($(f5)+(-3pt,\botoffset)$);
\draw[lace,internal] (f5) to (f6);
\draw[lace,across] (f6)+(-3pt,\botoffset) to[controls={+(0,-0.8) and +(0.5,0.6)}] (f7.north east);
\draw[lace,internal] (f7) to (f8);
\draw[lace,across] (f8)+(-3pt,\botoffset) to[controls={+(0,-0.5) and +(-0.1,1)}] (f9.west);
\draw[lace,across] (f9.east) to[controls={+(0.3,0.7) and +(-0.3,-0.6)}] ($(f10)+(-3pt,\botoffset)$);
\draw[lace,internal] (f10) to (f11);
\draw[lace,across] (f11)+(-3pt,\botoffset) to[controls={+(0,-0.5) and +(0.5,0.6)}] (f12.north east);
\draw[lace,internal] (f12) to (f13);
\draw[lace,internal] (f13) to (f14);
\draw[lace,across] (f14)+(-3pt,\botoffset) to[controls={+(0,-0.7) and +(-0.1,1)}] (f15.west);
\draw[lace,across] (f15.east) to[controls={+(0.5,0.1) and +(-0.3,-0.6)}] ($(f16)+(-3pt,\botoffset)$);
\draw[lace,across] (f16) to[controls={+(0.3,0.5) and +(-0.3,-0.6)}] ($(f17)+(-3pt,\botoffset)$);
\draw[lace,across] (f17) to[out=90,in=-90] ($(f18)+(-0.2,\botoffset)$);
\draw[lace,across] (f18)+(0.2,\botoffset) to[out=-90,in=90] (f19);
\draw[lace,across] (f19)+(-3pt,\botoffset) to[controls={+(0,-0.5) and +(0.5,0.6)}] (f20);
\arrayrulecolor{black}
\end{tikzpicture}
\caption{A \lacetree of the program in Fig.~\ref{fig:running example}
on the input list 1 \goesto 2 \goesto 3 \goesto 4 \goesto 5 and $\mathit{key} = 3$.
Blue numbers below the frames represent positions within the label,
while red numbers and arrows refer to the lace.} \label{fig:kt1}
\end{figure}
According to rule~2, when a pointer is dereferenced, it may be necessary to traverse the lace backward to find its latest assignment, a process called \emph{rewinding}.
In our example, the first rewinding occurs at line 8 when the current value of $\mathit{head}$ is needed.
In the \lacetree, frame 15 in node $u_4$ reaches line 8, but since the label of $u_4$ does not contain information about $\mathit{head}$, rewinding is triggered. Frames 16 and 17 are then added to the lace to go back to node $u_1$,
where $\mathit{head}$ currently points, and  frame 18 in $u_1$ reports the effect of the instruction
at line 8, consisting in the event 
$\symb{\mathit{next} \coloneq \mathit{tmp}}$.


\smallskip
{\Lacetree}s enjoy {\bf compositional properties} that are essential for our {\CHC}-based verification method. These properties allow subtree replacement: a subtree rooted at node~$v$ in one {\lacetree} can be swapped with a subtree rooted at node~$v'$ in another,
provided that the labels of $v$ and $v'$ satisfy a local consistency condition expressible in a quantifier-free first-order data theory. This replacement rule enables us to build a {\CHC} system whose minimal model precisely captures the set of valid node labels for the {\lacetree}s, allowing the detection of any reachable error configuration during execution.
If our analysis reports no errors, we need to check whether any execution
was truncated by the current choice of parameters.
With a small modification, the same {\CHC} system can also reveal whether 
that is the case;
if so, we can increase the parameters to encompass more executions. 
While verification is undecidable and may not terminate, our method significantly broadens the class of programs and properties amenable to automation. In particular, executions of several well-known programs are fully captured by {\lacetree}s with suitable parameters, as they traverse nodes a bounded number of times and admit simple {\CHC} solutions.

\smallskip

\noindent{\em Organization of the paper.} The rest of the paper is structured as follows.
\Cref{sec:preliminaries} introduces notation and definitions. 
\Cref{sec:lace} defines the \lacetree encoding, and \Cref{sec:compositionality}
details its compositionality. \Cref{sec:reasoning} describes our reduction to the {\CHC} satisfiability problem, presents a sound procedure to solve the memory safety problem,
and analyzes the structure and complexity of the required invariants. 
\Cref{sec:related_work} reviews related work, and
\Cref{sec:conclusions}
concludes with future directions. The appendices include substantial supplementary material.



\section{Preliminaries}\label{sec:preliminaries}



 
Let $\mathbb{N}$ denote the natural numbers that include $0$, $[i,j] \defeq \{k \in \mathbb{N} \mid i \leq k \leq j\}$, and $[j] \defeq [1,j]$.

\medskip
\noindent{\em Trees.} 
A $k$-ary tree $T$ is a finite, prefix-closed subset of $[k]^*$, where $k\in\mathbb{
N}$. Each element in $T$ is a {\em node}, with the {\em root} represented by the empty string $\epsilon$. The tree {\em edge relation} is implicit: for any $d \in [k]$, if both $\mathit{v}$ and $\mathit{v.d}$ are nodes in $T$, then $(\mathit{v}, \mathit{v.d})$ is an {\em edge}, making $\mathit{v.d}$ the $d$\textsuperscript{th} child of $\mathit{v}$, and $\mathit{v}$ the {\em parent} of $\mathit{v.d}$. 
 
\smallskip
\noindent{\em Data signatures.}
A \emph{data signature} $\dsig$ consists of pairs $\{ \mathit{id}_i : \mathit{type}_i \}_{i\in[n]}$, defining field names and their types (e.g., integers, Booleans $\bool$). An {\em evaluation} $\nu$ of $\dsig$ 
assigns each field name $\mathit{id}$ a type-specific value, denoted $\nu.\mathit{id}$. The {\em language} of $\dsig$, $L(\dsig)$, is the set of all its evaluations.

\smallskip
\noindent{\em Data trees.} A {\em data tree} with data signature $\dsig$, or $\dsig$-{\em tree}, is a pair $(T,\lambda)$ where $T$ is a tree and $\lambda$ is a labeling function $\lambda:T\rightarrow L(\dsig)$ that assigns an evaluation of $\dsig$ to each node $t \in T$. To simplify notation, the value of a field $\mathit{id}$ at node $t$ can be written as $t.\mathit{id}$ when $\lambda$ is clear from the context.
%


\smallskip
\noindent\emph{Constrained Horn clauses.} We use standard first-order logic (FOL) with equality~\cite{DBLP:conf/unu/MannaZ02} and formulas from a many-sorted, quantifier-free first-order theory $\mathcal{D}$ that includes program-relevant data types like arithmetic, reals, and arrays. 
We refer to $\mathcal{D}$ as the {\em data theory}.
\begin{definition}
Let $R$ be a set of uninterpreted fixed-arity relation symbols  representing 
unknowns. A {\bf constrained Horn clause} ({\CHC}) is a formula of the form $H\leftarrow C\wedge B_1\wedge\cdots\wedge B_n$  
where: {\bf(i)} $C$ is a {\em constraint}, 
    a 
    formula of the data theory $\mathcal{D}$  
    without symbols from $R$;
    {\bf(ii)} each $B_i$ is an application $r(v_1,\ldots,v_k)$ of a relation symbol $r\in R$ to first-order variables $v_1,\ldots,v_k$; 
    {\bf(iii)} $H$ (the  
    {\em head}) is either $\false$,
    or an application $r(v_1,\ldots,v_k)$ as in $B_i$. 
A {\CHC} is a {\em fact} if its body is only $C$  and a {\em query} if its head is $\false$. A finite set $\mathcal{C}$ of {\CHC}s forms a {\em system} by conjoining all 
{\CHC}s with free variables universally quantified. We assume the constraint semantics is predefined as a structure.  \qed
\end{definition}
A {\CHC} system $\mathcal{S}$ with relation symbols $R$ is \emph{satisfiable} if there exists an interpretation for each $r\in R$ that makes all clauses in $\mathcal{S}$ valid. Any such interpretation is called a \emph{solution} of $\mathcal{S}$. The {\CHC} {\em satisfiability problem} is the computational task of 
determining whether a given system $\mathcal{S}$ of {\CHC}s is satisfiable. 

Each {\CHC} system $\mathcal{S}$ has a unique minimal model under subset ordering,\footnote{ See~\cite{DBLP:journals/jacm/EmdenK76} for logic programs and \cite[Prop.~4.1]{DBLP:journals/jlp/JaffarM94} for constr. 
logic programs (or {\CHC}s).}
computable as the fixed-point of an operator derived from its clauses~\cite{DBLP:journals/jacm/EmdenK76,DBLP:journals/jlp/JaffarM94}. We use this fixed-point semantics to ensure the correctness of our reductions.

\subsubsection*{Heap-manipulating Programs. 
} 
\label{sec:programming language}



The {\em heap} is essential for dynamic memory allocation, allowing
memory blocks ({\em nodes}) to be allocated and deallocated during execution. 
We assume that nodes have a single data field and one or more pointer fields.
A {\em specific heap state} is 
defined as follows.
\begin{definition} A {\bf heap} is a tuple $\mathcal{H} = (N, \dsig, \data, \pointers)$ where 
\begin{itemize}

\item $N$ is a finite set of nodes, including a unique element $\NIL$ for free memory. 

\item $\dsig$ is a data signature defining the type of data that can be stored in a node. 

\item $\data: N\setminus\{\NIL\}\rightarrow L(\dsig)$ 
is a map modeling the data field of each node. 

\item $\pointers$ is a finite sequence of distinct pointer fields, each defined as a function of type $(N\setminus\{\NIL\})\rightarrow N$, representing the pointers of each node.
\qed
\end{itemize}
\end{definition}

We define $k$ as the number of pointer fields. For example,  
in Fig.~\ref{fig:running example}, $\pointers = \{\mathit{next} \}$ and $k=1$, while for 
binary trees use $\pointers = \{\mathit{left}, \mathit{right} \}$ and $k=2$.
\smallskip
\begin{wrapfigure}[15]{R}{8.5cm}
\centering
\vspace{-1.0cm}
{
  \scriptsize
\begin{mdframed}\raggedleft
\vspace{-0.4cm}
\begin{align*}
\text{\bf Program} \defeq \;\:& \mathit{decl} \:\: \mathit{block} \\ 
\mathit{decl}   \defeq \;\:&
(\pointer \, \mathit{id}(\mathbf{,}\mathit{id})^*)^* \: (\type \; \mathit{id}(\mathbf{,}\mathit{id})^*)^* \\
\mathit{block} \defeq \;\:& (\pc: (\ctrlstmt \mid \heapstmt) \,;)^+ \\
\ctrlstmt \defeq \;\:& d \,\mathbf{\coloneq}\, \dataexp\mid d_\mathrm{bool} \,\mathbf{\coloneq}\, \HeapCond\mid \Skip\mid \exit\\
& 
\mid \mathbf{if}\;\mathit{cond}\;\mathbf{then}\;
\mathit{block} \; \mathbf{else}\; \mathit{block}\;\mathbf{fi}\\
& 
\mid \WHILE\;\mathit{cond}\;\mathbf{do}\;\mathit{block}\;\mathbf{od} \mid \GOTO\;\pc \\
\heapstmt \defeq \;\:& \New\; p \mid \Free\; p 
\mid p \,\mathbf{\coloneq}\,\NIL
  \mid p \;\mathbf{\coloneq}\; q 
  \mid p \;\mathbf{\coloneq}\; \field{q}{\pfield} \\
& 
\mid \field{p}{\pfield} \;\mathbf{\coloneq}\; \NIL
  \mid \field{p}{\pfield} \;\mathbf{\coloneq}\; q \\
  & 
\mid \field{p}{dfield} \;\mathbf{\coloneq}\; \dataexp \mid d\;\mathbf{\coloneq}\;\field{p}{dfield}\\
\dataexp \defeq \;\:& d \mid \fun( \dataexp, \ldots, \dataexp )\\ 
\mathit{cond} \defeq \;\:& \rel( \dataexp, \ldots, \dataexp ) \mid (\mathbf{\neg})?\, \HeapCond
\\ 
\HeapCond \defeq \;\:&  p \,\mathbf{=}\, q 
 \mid p \,\mathbf{=}\,\NIL
\mid \field{p}{\pfield} \,\mathbf{=}\, q
\mid \field{p}{\pfield} \,\mathbf{=}\,\NIL
\end{align*}
\end{mdframed}
}
\label{fig: programming language syntax}
\end{wrapfigure}


\noindent{\em Syntax.}
The syntax of our programming language 
is shown on the right. 
Programs begin with declarations of pointer and data variables, followed by 
labeled instructions. 
%
Instructions include assignments, control flow, and heap operations. Data assignments are of the form $d \,\mathbf{\coloneq}\,\dataexp$, where $d$ is a data variable set to the value of the data expression $\dataexp$. Data expressions are built from data variables and combined using function symbols of the data theory $\mathcal{D}$.
Control flow instructions include $\Skip$, $\exit$, {\bf if-then-else} statements, and   $\WHILE$ loops. Boolean conditions ($\mathit{cond}$) are exclusively either \emph{data conditions} or \emph{heap conditions}.  
Heap conditions can be assigned to a Boolean variable with $d_\mathrm{bool} \,\mathbf{\coloneq}\, \HeapCond$, 
integrating them into Boolean theory.
Heap operations include 
$\New\; p$ (creates a new node, initializes its fields to undefined or $\NIL$, and assigns it to $p$) and $\Free\; p$ (deallocates the node pointed by $p$ and sets all pointers pointing to the node to $\NIL$). We also allow assignment and retrieval of pointer fields (i.e., $\pfield \in \pointers$) 
and data fields (i.e., $\mathit{dfield} \in \dsig)$.
%
Programs are {\bf valid} if they are well-formed, type-correct, uniquely labeled, and terminate with $\exit$.
Fig.~\ref{fig:running example} shows an example of a program.\footnote{
The $\WHILE$ condition goes beyond our syntax but is easily translatable into it.}
%
For a program $P$, $\PC$, $\PV$, and $\DV$ represent program counters, pointer variables, and data variables, respectively. The function $\nextpc$
defines the successor(s) of a program counter.
Statements can have from 0 to 2 successors:
most statements have a single successor, denoted by
$\nextpc(\pc)$;
the $\exit$ statement has no successor; {\bf if-then-else} and $\WHILE$ have two successors based on a Boolean condition: $\nextpc(\pc,\true)$ and $\nextpc(\pc,\false)$.
%
%
The language does not support function calls directly: non-recursive calls are inlined, and limited recursion, typical in tree-based algorithms, can be simulated (see~\Cref{app:recursion} for details).

\smallskip
\noindent{\em Semantics.} 
A program $P$ operates on a specialized heap called a {\em $P$-heap}, that includes all its pointers and data. 
A {\em configuration} of $P$ is a tuple $(\mathcal{H},\nu_p,\nu_d,\pc)$ consisting of a $P$-heap, an evaluation of the pointer variables, an evaluation of the data variables, and next instruction label. 
Focusing on tree-based programs, a configuration $c$ is {\em initial} if it meets the following conditions:
\begin{itemize}
    \item $\mathcal{H}$ is {\em isomorphic} to a data tree $\mathcal{T}$ via a bijection $\rho$ that maps each node in $\mathcal{H}$ to a node in $\mathcal{T}$, such that for all nodes $x,y$ in $\mathcal{H}$ and 
    $i\in |\pointers|$, 
    $y=\mathit{pf_i}(x)$ iff $\rho(y)=\rho(x).i$, where $\mathit{pf_i}$ is the $i$-th pointer field in $\pointers$. We refer to $\mathcal{T}$ as the data tree of $c$, and may use $\mathcal{T}$ in place of $\mathcal{H}$.
    
    \item $\nu_p$ maps the first pointer variable declared in $P$, conventionally denoted by $\rootptr$, to the $\mathcal{H}$ node corresponding to the root of $\mathcal{T}$ and maps the other pointer variables to $\NIL$.
   
    \item $\nu_d$ assigns each variable a non-deterministic value. 
    
    \item  
    $\pc$ is the label of the first statement in $P$.
\end{itemize}

A {\em transition} $c\rightarrow_P c'$ in $P$ occurs by executing the instruction at $\pc$ using standard semantics unless noted otherwise. If $\pc$ is an $\exit$ statement, $c$ becomes a {\em final configuration} with no further transitions. Attempting to dereference or deallocate a $\NIL$ pointer makes $c$ an {\em error configuration}.
An {\bf execution} $\pi$ of $P$ is a (possibly infinite) sequence of configurations $c_0c_1\ldots$ where:
    {\bf(i)} $c_0$ is initial, and
    {\bf(ii)} $c_{i-1}\rightarrow_P c_i$ for each $i\in\nats$. 
A finite $\pi$ that ends in a final or error configuration is a {\em terminating} or {\em buggy} {\em execution}, respectively. 
We aim to solve the following (undecidable) problem: 

\begin{problem}
A program $P$ is \emph{memory safe} if all its executions
terminate without reaching an error configuration. The {\bf memory safety problem} asks whether a given program is memory safe. 
\end{problem}

\section{\LaceTree{s}: Representing Executions as Data Trees} \label{sec:lace}
Our approach to solving the memory safety problem uses the \lacetree encoding, which models a program execution as a single data tree capturing inputs, outputs, and intermediate configurations. 

We first fix notation and assumptions. The encoding uses two parameters $m,n\in\nats$, explained later. 
For simplicity, we assume that heap nodes have a single data field $\datafield$ of an arbitrary type $\datatype$.
We consider a fixed program $P$ and omit it from most notations and statements.
Let $\pi$ be an execution of $P$ starting from an initial configuration $c_0$, where $\dtree = (T, \lambda)$ is a $k$-ary data tree of $c_0$ with signature $\inpsig = \{\datafield: \datatype\}$, and  $\rootptr$ 
points to the root of $\mathcal{T}$. 
The encoding maps $\pi$ to a set of data trees $\KT(\pi, m, n)$,
called the $(m,n)$-\emph{{\lacetree}s} of $\pi$.
We now describe a generic \lacetree $\ktree = (K, \mu)$ from the set $\KT(\pi, m, n)$.


\subsubsection{The backbone.} 

The {\em backbone} $K$ of $\ktree$ is the smallest tree such that:
\begin{enumerate*}[label=(\roman*)]
    \item the input tree $T$ is a subset of $K$ ($T \subset K$),
    \item all nodes from $T$ are internal nodes in $K$,
    \item each internal node of $K$ has degree $k+m$, and
    \item  $K$ has at least one internal node.
\end{enumerate*}
Note that, in the special case where $T$ is empty,
the backbone $K$ is a full tree
of height $2$, consisting of a root and its
$k+m$ children. 
 
Each node of $K$ represents a distinct heap node. Initially, all nodes in $T$ are active, and the rest are inactive; freeing an active node makes it inactive. 


\subsubsection{The node signature.} 

The backbone of a \lacetree depends only on the input data tree and parameters $m,n$, independent of the execution it represents.
Each backbone node is labeled with a sequence of \emph{frames} (a \emph{log}) tracking changes along $\pi$.
Frames form a doubly linked list called \emph{lace}, to maintain chronological order and enable bidirectional navigation. When consecutive operations involve different nodes, frames are inserted along the backbone path connecting them. 

\noindent
The data signature $\Ksig$ of an $(m,n)$-{\lacetree} is the following:
\setlength{\jot}{2.1pt}
{\small
\begin{align*}
\Ksig = \Big\{\: 
&\avail^i:\bool, &&\text{\textcolor{gray}{Is this frame available?}} \\[-1ex]
&\activefield^i:\bool, &&\text{\textcolor{gray}{Is this node allocated?}} \\
&\datafield^i: \datatype, &&\text{\textcolor{gray}{Current value of this node's data field}}\\
&\pc^i : \PC, &&\text{\textcolor{gray}{Program counter}} \\
&\{ d^i : \datatype_d \}_{d\in\DV}, 
&&\text{\textcolor{gray}{Current value of the data variables}} \\
&\{ \chg^i_p: \bool \}_{p \in \PV},
&&\text{\textcolor{gray}{Has $p$ been updated since the frame $i-1$?}} \\
&\{ \isnil^i_p: \bool \}_{p \in \PV},
&&\text{\textcolor{gray}{Is $p$ $\NIL$?}} \\
&\instr^i : \Instr, &&\text{\textcolor{gray}{A pointer update, rewind, or error}} \\
&\activechild^i: \bool^{k+m} &&\text{\textcolor{gray}{Is each child allocated?}} \\
&\nextfr^i: \Dir \times [2,n+1], &&\text{\textcolor{gray}{Link to the next frame}} \\[-1ex]
&\prevfr^i: \Dir \times [2,n]  &&\text{\textcolor{gray}{Link to the previous frame}}
&\!\!\!\!\!\Big\}_{i \in [n+1]},
\end{align*}
}
%
\hspace{-0.20cm}
where $\Dir=\{ \fwd,\, \uparrow \} \cup[k+m]$ encodes the position of an adjacent frame relative to the 
reference frame. 
Each node's label, or \emph{log}, has $n+1$ indexed frames in time order. Once a frame $f$ in a log $\sigma$ is named, its fields are referenced without the index (e.g.,
$f.\pc$ instead of $\sigma.\pc^i$). The last frame of a log handles \emph{label overflow} (if more than $n$ frames are needed), via the  $(n+1)$\textsuperscript{th} frame. The $\prevfr$ field holds a value from $\Dir \times [n]$, since a frame with index $n+1$ has no successor.
The $\instr$ field holds
an {\em event} taken from:
\begin{align*}
\Instr =\: &\big\{ \symb{\pfield \coloneq p} ,\,
                 \symb{\pfield \coloneq \NIL} ,\, 
                 \symb{p \coloneq \here} \mid 
                  \pfield \in \pointers, p \in \PV \big\} \\              
            &\cup\;\big\{\RWD_i \mid i\in [n] \big\} \cup \big\{\RWD_{i,p} \mid i\in [n], p \in \PV \big\}
            \;\cup\; \big\{ \NOP, \Error, \OOM \big\} \,.
\end{align*}
The first group represents updates to pointer fields and variables. $\RWD_i$ and $\RWD_{i,p}$ represent \emph{lace rewinding} events. Other symbols denote the empty event ($\NOP$), null-pointer dereference ($\Error$), and out-of-memory error ($\OOM$) caused by excessive use of the statement $\New$. 

\subsection{The Labeling Function}\label{sec: labelling of lacetrees} \label{sec:backbone}

The labeling function $\mu$ of $\ktree$ is defined inductively on the length of $\pi$.

\subsubsection{\bf Base case.}\label{sec:baselabel}
$\pi$ consists of an initial configuration, say $(\dtree,\nu_p,\nu_d,\pc)$. We encode the input tree $\dtree=(T,\lambda)$ into the backbone by setting the first frame of each node $t \in K$ as follows:
\begin{align*}
\mu(t).\avail^1 &= \false, &
\mu(t).\activefield^1 &= \begin{cases}
      \true  &\text{if } t \in T \\
      \false &\text{otherwise}
\end{cases} \\
\mu(t).\datafield^1 &= \begin{cases} 
      \lambda(t).\datafield &\text{if } t \in T \\
      \text{unspecified}    &\text{otherwise}
   \end{cases}
&
\mu(t).\activechild^1_j &= \begin{cases}
      \true  &\text{if } t.j \in T \\
      \false &\text{otherwise.}
      \end{cases}
\end{align*}
All other fields of the first frame are unspecified. 
The root’s \emph{second} frame stores 
$\pi$’s initial configuration:
$\mu(\epsilon).\avail^2 = \false$, 
$\mu(\epsilon).\prevfr^2 = (\fwd, 2)$ (a self-loop), $\mu(\epsilon).\pc^2$ is the first statement's label in $P$, and $\mu(\epsilon).\isnil_q^2 = \true$
for all pointer variables $q$ different from $\rootptr$. If $T$ is not empty, $\mu(\epsilon).\instr^2 = \symb{\rootptr \coloneq \here}$
and $\mu(\epsilon).\isnil_{\rootptr}^2 = \false$;
otherwise, $\mu(\epsilon).\instr^2 = \NOP$ and $\mu(\epsilon).\isnil_{\rootptr}^2 = \true$. All other frames 
are marked as available. 
The root's second frame also copies  $\activefield$, $\datafield$, and $\activechild$ from the first frame.

\subsubsection{Inductive case.}
We start with an overview of the encoding method, its properties, and the required notation. Let $\pi=\overline{\pi}c$, where $c$ is a configuration and $\overline{\pi}$ is a non-empty execution. 
Assume that $\overline{\ktree} = (K, \overline{\mu})$ is a {\lacetree} in $\KT(\overline{\pi}, m, n)$.
We define the labeling $\mu$ for $\pi$ by extending the lace of $\overline{\ktree}$ based on the last instruction executed in $\pi$. To aid understanding, we list some invariants for all {\lacetree}s, providing an informal explanation for brevity.

\smallskip
\noindent{\em The lace.} Besides individual node logs, we maintain a chronological order of all frames across all nodes. 
All the unavailable frames in the \lacetree with index greater than $1$ form a doubly linked list called the {\em lace} using the $\nextfr$ and $\prevfr$ fields of the frames. The first frame in the lace is the root's second frame.
Each frame is identified by a pair $(u, i)$, where $u$ is a node and $i \in [n+1]$ is the frame's index. A frame $(v,j)$ is the {\em lace successor} of frame $(u,i)$, denoted $(u,i) \to_{\nextfr} (v,j)$ (and $(u,i)$ is the {\em lace predecessor} of $(v,j)$, written $(v,j) \to_{\prevfr}(u,i)$) if $i,j>1$ and one of the following holds:
\begin{itemize}
\item $u=v$, $j=i+1$, $\mu(u).\nextfr^i = (\fwd,j)$, and $\mu(v).\prevfr^{j} = (\fwd,i)$;
\item $v$ is the $l$\textsuperscript{th} child of $u$, $\mu(u).\nextfr^i = (l,j)$, and $\mu(v).\prevfr^j = (\uparrow,i)$;
\item $u$ is the $l$\textsuperscript{th} child of $v$, $\mu(u).\nextfr^i = (\uparrow,j)$, and $\mu(v).\prevfr^j = (l,i)$.
\end{itemize}
Available frames' unspecified fields contribute to $\KT(\pi, m, n)$'s non-determinism.

\smallskip
\noindent{\em Properties of the frame fields.} In our inductive definition, we obtain the \lacetree for $\pi$ by extending that of $\overline{\pi}$ by appending frames for $\pi$'s final step. This helps us assign meanings to fields like $\chg_p$, $\isnil_p$, $d$, $\datafield$,  $\pc$, and $\activechild$ for the unavailable frames.
\label{page:upd}
The $\chg_p$ flag tracks changes to the pointer $p$ outside the current node: it is set to  $\true$ in frame $(u, i)$ (for $i>1$) if $p$ was assigned a non-$\NIL$ value in the part of the lace between frames $(u,i-1)$ and $(u,i)$ (excluding these frames). 
Thus, if frames $(u,i-1)$ and $(u,i)$ are adjacent in the lace (i.e., $(u,i-1) \to_\nextfr (u,i)$, aka an \emph{internal step}), all $\chg_p$ flags in $(u,i)$ are $\false$. 
The $\isnil_p$ flag is $\true$ in a frame if $p=\NIL$ at that point in the execution. 
The $\activechild$ flags help track the allocation of the child nodes of the backbone.
Other fields preserve their usual meanings.

\smallskip
\noindent{\em Pushing a frame.} 
Appending a frame to a log involves: \emph{(1)} finding the smallest index $i$ with $\avail^i$ is $\true$, and \emph{(2)} adding the new frame at position $i$. Hence, a log behaves like a stack, with the bottom frame at index $1$ and the {\em top frame} being the highest-index frame where $\avail$ is $\false$.

\smallskip
\noindent{\em Default values for a frame.} \label{sec:defaults}
Any frame pushed onto a log assumes default values unless specified otherwise. When pushing a frame $f$ on a node $u$, default values come from the preceding frame in the lace, $f^\prev$, or the frame below $f$ in $u$'s log, $f^\below$.  
Note that $f^\prev$ and $f^\below$ can be the same.
The default values for the fields of $f$ are as follows:  for all $p \in \PV$ and $d \in \DV$, 
\begin{itemize}
\item $\avail = \false$, $\instr = \NOP$, and $\chg_p = \false$;
\item $\activefield$ and $\datafield$ are copied from $f^\below$;
\item $\isnil_p$, $d$, and $\pc$ are copied from $f^\prev$; 
\item $\activechild_j$ is copied from $f^\prev.\activefield$ if $f^\prev$ belongs to
the $j$\textsuperscript{th} child of $u$; otherwise, it is copied from $f^\below.\activechild_j$;
\item $\prevfr$ points to $f^\prev$;
\item $\nextfr$ is unspecified and can take any value in different {\lacetree}s for the same execution.
\end{itemize}
Moreover, $f^\prev.\nextfr$ is updated to point to $f$, eliminating the non-determinism
of the $\nextfr$ field of the previous frame.

Despite the non-determinism in the $\nextfr$ field, identifying the last frame $f$ in a lace can be done by
checking $f$ and its lace successor $f'$.  
Specifically, $f$ is the last frame of the lace if $f'$ is available or $f'$ precedes $f$ in the lace,
which happens when field of $f'.\prevfr$ does not point to $f$.

Henceforth, assume that $\overline{f}$ is the final lace last frame in $\overline{\ktree}$, located as the top frame of node $\overline{t}$.
%
We first present the encoding of the statements that push a single new frame $f$ on the current node $\overline{t}$. The fields of $f$ are set to default values, except for those specified below.

\smallskip
\noindent{\bf Encoding of $p \coloneq \NIL$:} $f.\pc = \nextpc(\overline{f}.\pc)$
and $f.\isnil_p = \true$.

\smallskip

\noindent{\bf Encoding of $d \coloneq \dataexp$:} $f.\pc = \nextpc(\overline{f}.\pc)$ and $f.d$ is set to the value of $\dataexp$, with variables in $\DV$ evaluated using their values from $\overline{f}$.

\smallskip

\noindent{\bf Encoding of $\Skip$:} $f.\pc = \nextpc(\overline{f}.\pc)$.


\smallskip
The other statements may operate on different nodes in addition to $\overline{t}$. 
The main reasons to move to another node are to dereference a pointer or identify the node a pointer field refers to. To get this information, we {\em rewind the lace} by moving backward to find the most recent assignment to the relevant pointer. For example, to identify the node a pointer variable $p$ points to, we rewind the lace until we find a frame with the event $\symb{p \coloneqq \here}$. 

\smallskip
\noindent{\em Lace rewinding function.}
To enable rewinding, we define the auxiliary function $\findptr(p,\mathit{id})$, which takes a pointer $p \in \PV$ and a frame ID and returns a sequence of frame IDs by traversing the lace backward from $\mathit{id}$,
until the most recent assignment to $p$.
The sequence uses \emph{shortcuts}, including only the IDs where the lace moves between nodes and where such moves are \emph{relevant} to track $p$, as indicated by the $\chg_p$ flags. For example, in Fig.~\ref{fig:kt1},  rewinding from frame 15 to resolve $\mathit{head}$ gives the following:
\footnote{Using the global lace positions as frame IDs (red numbers in Fig.~\ref{fig:kt1}).}
$$\findptr(\mathit{head}, (u_4,2) ) = \findptr(\mathit{head}, 15) = 
\big( (u_3, 2), (u_2, 2), (u_1, 2) \big) 
= (9, 4, 1).$$
Frames 9 and 4 have a predecessor that belongs to another node; moreover, they represent the earliest occurrence of their node in the lace.
Contrast this with frame 12, which also follows a frame in another node, but is \emph{not} in the sequence because 12 is not the first visit to $u_3$, and
$\chg_\mathit{head}$ is false in 12.
Frame 1 is included as 
the value of $\mathit{head}$ in frame 15 was established in frame 1. Thus, rewinding from $15$ adds frames $16,17$ and $18$ for these IDs.

\smallskip
\noindent{\em Null pointer dereference.}
If the instruction located at $\overline{f}.\pc$ dereferences a pointer $p$, and $p$ is $\NIL$, it indicates an error.
Thus, if the flag $\isnil_p$ is $\true$ in $\overline{f}$, we push a new frame with the event $\Error$ onto the current node $\overline{t}$  to indicate a runtime error.
We now present the encoding for the other types of statement. 

\smallskip
\noindent{\bf Encoding of $\field{p}{\pfield} \coloneq \NIL$.}
We rewind the lace to find the latest assignment to $p$. 
Let $(\overline{t},\overline{i})$ be the identifier of $\overline{f}$,
and $id_1 = (u_1,i_1),\ldots, id_l = (u_l,i_l)$ be the sequence $\FindPtr\big( p, (\overline{t},\overline{i}) \big)$.
We push a new frame for each element of the sequence 
$id_1,\ldots,id_l$, to keep a faithful record of the movements necessary to simulate the current statement: 
for each $j \in [l-1]$, we push a frame onto $u_j$ with the event $\RWD_{i_j}$, without advancing the program counter. 
Finally, 
we push a frame $f$ onto $u_l$ with $f.\pc = \nextpc(\overline{f}.\pc)$ and $f.\instr = \symb{\pfield \coloneq \NIL}$. 

\smallskip
\noindent{\bf Encoding of $\field{p}{\pfield} \coloneq q$.}  
Same as the encoding for $\field{p}{\pfield} \coloneq \NIL$, but the final frame's event is set to $\symb{\pfield \coloneq q}$. 

\smallskip
\noindent{\bf Encoding of $p \coloneq q$.}
If $\isnil_q$ evaluates to $\true$ in $\overline{f}$,
we push a new frame onto $\overline{t}$ with $\isnil_p$ set to $\true$.
Otherwise, a rewinding operation takes the lace to the node pointed by $q$, where we push a frame with $\symb{p \coloneq \here}$.
In either case, the last frame also updates the program counter.

\smallskip
\noindent{\bf Encoding of $\New\ p$.} 
Let $j$ be the smallest index in $[k+1, k+m]$
such that $\activechild_j$ is $\false$.
If no such index exists, we push a frame on $\overline{t}$
with the event $\OOM$, representing an out-of-memory error.
Otherwise, the lace moves to the $j$\textsuperscript{th} child of $\overline{t}$
and pushes a frame $f$ there with
$f.\pc = \nextpc(\overline{f}.\mathit{pc})$, $f.\instr = \symb{p \coloneq \here}$,
$f.\activefield = \true$, and $f.\isnil_p = \false$.

\smallskip
\noindent{\bf Encoding of $\Free\ p$.}
A rewinding operation takes the lace to the node $u$ pointed by $p$,
where we push a frame $f$ with $\activefield = \false$ and
an updated $\pc$. 
In the new frame, $\isnil_q$ is set to $\true$ for every pointer
$q$ that currently points at $u$, including $p$.
To find such pointers, 
let $(u,i)$ be the id of $f$. Then, $q$ points at $u$ if the log of $u$  contains another frame $(u,j)$ such that: $j<i$, $(u,j)$ contains $\symb{q \coloneqq \here}$, and the $\chg_q$ flag is $\false$ in all frames
from $(u,j+1)$ to $(u,i)$.
%

\smallskip
\noindent{\bf Encoding of $p \coloneq \field{q}{\pfield}$.}
First, the lace moves to the node $u$ pointed by $q$ 
through a rewinding process.
Then, we search in $u$'s log for the most recent assignment to $\pfield$. 
We look for the largest index $i$ s.t.\ the frame $(u,i)$ is in use and contains an event of the form $\symb{\pfield \coloneqq \alpha}$, for some $\alpha$.
If no such index exists, $\pfield$ is interpreted as having its default value pointing to a child in the original input tree. 
We then distinguish the following cases:
\begin{description}
\item[{[$\alpha = \NIL$]}] We push a frame with $\isnil_p = \true$ on $u$.
\item[{[$\alpha = r$, for some $r \in \PV$]}] If $\isnil_r$ is $\true$ in the current frame,
the lace pushes a frame with $\isnil_p = \true$.
Otherwise, the lace moves again to the node pointed by $r$,
and there it pushes a frame with $\symb{p \coloneqq \here}$.
\item[{[The log of $u$ does not contain an explicit assignment to $\pfield$]}] If $\pfield$ is the $j$\textsuperscript{th} field in $\pointers$, $u.j \in T$, 
and $u.j$ is active (as encoded in the flag $\activechild_j$),
the lace moves to $u.j$ and pushes
a frame with event $\symb{p \coloneqq \here}$ there.
Otherwise, a frame with $\isnil_p = \true$ is pushed on $u$.
\end{description}
The last pushed frame always updates the program counter. 

\smallskip
\noindent{\bf Encoding Boolean conditions and control-flow statements.} 
Data conditions are evaluated locally using variable values in the current frame $\overline{f}$. For heap conditions, we may need to traverse the lace. We focus on conditions of the form $p=q$, since others (e.g., $\field{p}{\pfield} = q$) can be reduced to this form using auxiliary variables. 
%
To evaluate $p = q$, we first check the $\isnil$ flags: if both pointers have their $\isnil$ flags set to true, the condition is true; if the flags differ, it is false.
If this is inconclusive, we rewind the lace to find an assignment to $p$ or $q$. For example, upon finding $\symb{p \coloneqq \here}$ in frame $(u, i)$, we search in $u$'s log for the largest index $j<i$ where frame $(u, j)$ has $\symb{q \coloneqq \here}$. If none exists, the condition is false. If found, the condition holds if $q$ was unchanged between $(u, j)$ and $(u, i)$, verified via $\neg \bigvee_{l\in[j+1,i]} \chg_q^l$. A new frame is then pushed, updating the program counter accordingly. 
The process is symmetric if $\symb{q \coloneqq \here}$ is found first.

\begin{table}
\small
\setlength{\tabcolsep}{8pt}
\renewcommand{\arraystretch}{1.1}
\begin{center}
\begin{tabular}{lm{3.7cm}m{4.4cm}}
{\bf Statement} &{\bf Movement} &{\bf Information stored} \\ \hline\hline
$\field{p}{\pfield} \coloneq \NIL$ &Find $p$ or fail &$\symb{\pfield \coloneqq \NIL}$ \\ \hline
$\field{p}{\pfield} \coloneq q$    &Find $p$ or fail &$\symb{\pfield \coloneq q}\!$ or $\!\symb{\pfield \coloneqq \NIL}$ \\ \hline
$p \coloneq q$                         &Find $q$         &$\symb{p \coloneq \here}$ or set $\isnil_p$ \\ \hline
$p \coloneq \field{q}{\pfield}$    &\raggedright Find $q$ or fail, then find last assignment to $\pfield$ 
                                                 &$\symb{p \coloneq \here}$ or set $\isnil_p$ \\ \hline
$\field{p}{\datafield} \coloneq \dataexp$ &Find $p$ or fail   &Update $\datafield$ \\ \hline
$d \coloneq \field{p}{\datafield}$        &Find $p$ or fail   &Update $d$ \\ \hline
$\New\ p$                            &Move to the first inactive child or fail   &$\symb{p \coloneq \here}$ and set $\activefield$ \\ \hline
$\Free\ p$                           &Find $p$ or fail   &Reset $\activefield$ \\
\hline\hline
\end{tabular}
\end{center}
\caption{Summary of the encodings.}\label{tab:walking}
\end{table}



\paragraph{On the choice of parameters $m$ and $n$.} Parameter $m$ bounds the number of allocations: $m=0$ for programs without allocations, while $m=1$ is adequate for programs that insert a single new node. The parameter $n$ limits the passes and instructions executed per node; while some programs need unbounded labels, typical tree-like algorithms work with moderate $n$ (usually $\leq 10$).

\subsection{Relations with Program Executions}

Our first result establishes that \lacetree{s} provide an accurate and faithful representation of program executions.
\begin{restatable}{theorem}{thmcomputable} \label{thm:computable} 
    Given a program $P$ and parameters $m,n$, $\KT(\cdot, m, n)$ is computable.
    Moreover, there is a computable function $\mathit{exec}$ 
    such that, 
    for any data tree $\ktree$ that is a \lacetree of $P$, $\mathit{exec}(\ktree)$ is 
    an execution $\pi$ of $P$ s.t.\ $\ktree \in \KT(\pi,m,n)$.
\end{restatable}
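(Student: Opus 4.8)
The plan is to establish the two claims separately. For the computability of $\KT(\cdot,m,n)$, I would observe that, given an input data tree $\dtree$ (equivalently, an initial configuration $c_0$), the backbone $K$ is determined outright by $\dtree$ and the parameters $m,n$: it is the smallest tree containing $T$ with every node of $T$ internal and every internal node of degree $k+m$. This is clearly effective. The set $\KT(\pi,m,n)$ is then the set of all $\Ksig$-labelings $\mu$ of $K$ obtained by running the inductive labeling procedure of \Cref{sec: labelling of lacetrees} along $\pi$, where the only nondeterminism is (i) the choice among valid labelings at the branching points of the construction and (ii) the unspecified fields of available frames and of the $\nextfr$ field. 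Since $\pi$ has finite length (or, for the purpose of producing a \emph{single} \lacetree, we only ever need the finite prefix that fits within the $n{+}1$ frames per node before overflow forces termination of the encoding), the procedure performs finitely many frame pushes, each push is a finitely-branching effective step, and each leaves only finitely many boolean/enumerated unspecified fields — so $\KT(\pi,m,n)$ is a computable finite set of data trees, and membership $\ktree\in\KT(\pi,m,n)$ is decidable.

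For the second part I would \emph{invert} the labeling construction. Given a data tree $\ktree=(K,\mu)$ that is some \lacetree of $P$, I first recover the input data tree $\dtree$ by reading $\datafield^1$ and $\activefield^1$ off the first frames (the active nodes of the first frames, with their $\datafield^1$ values, form $T$ with labeling $\lambda$); this yields a candidate initial configuration $c_0$, using the convention that $\rootptr$ points to the root and all other pointer variables are $\NIL$, with $\nu_d$ read from the root's second frame. Then I reconstruct $\pi$ by \emph{walking the lace}: start at the first lace frame (the root's second frame) and repeatedly follow $\nextfr$. Each maximal block of lace frames that the encoding of \Cref{sec: labelling of lacetrees} attributes to a single source statement is recognizable from the local data — a $\pc$ value together with the statement $P$ places at that label tells us which encoding case applies, and the $\instr$ fields ($\symb{p\coloneq\here}$, $\symb{\pfield\coloneq q}$, $\RWD_i$, $\RWD_{i,p}$, $\Error$, $\OOM$, $\NOP$) plus the $\pc$ transitions let us parse the block and read off exactly one program transition $c_{j-1}\to_P c_j$. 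Concatenating these transitions starting from $c_0$ defines $\mathit{exec}(\ktree)=\pi$. The function is computable because the lace is finite and the parsing at each step is a finite table lookup (essentially Table~\ref{tab:walking}).

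Finally I would argue the two correctness obligations: that $\mathit{exec}(\ktree)$ is indeed an execution of $P$ — i.e.\ $c_0$ is a legal initial configuration and each parsed step is a genuine transition — and that $\ktree\in\KT(\mathit{exec}(\ktree),m,n)$. Both follow by induction on the number of lace blocks (equivalently on $|\pi|$): the base case matches the ``Base case'' labeling of \Cref{sec:baselabel}, and the inductive step matches the corresponding ``Encoding of $\ldots$'' clause, since the frame-push rules and the default-value rules are exactly the ones the parser reverses; the invariants listed in \Cref{sec: labelling of lacetrees} (meaning of $\chg_p$, $\isnil_p$, the lace structure, the $\findptr$ shortcut property) guarantee that the heap state and pointer valuations reconstructed from the lace agree with those of the program semantics at each point, so that the parsed transition is the one the program actually takes.

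The main obstacle I expect is not any single hard lemma but the bookkeeping of the inversion: showing that the lace can be \emph{uniquely parsed} into statement-blocks, i.e.\ that from the local frame data one can always tell where one simulated instruction ends and the next begins (distinguishing, e.g., $\RWD$ frames produced by a $\field{p}{\pfield}\coloneq q$ rewind from those produced by a $\Free\,p$ rewind, and correctly locating the terminal frame of each block via the ``last frame of a lace'' criterion of \Cref{sec:defaults}), and that the heap/pointer information needed to justify the reconstructed transition is in fact faithfully encoded — this is where the stated \lacetree invariants must be invoked carefully. Making this rigorous requires spelling out, case by case, the structure of the frame blocks emitted by each encoding clause, which is routine given \Cref{sec: labelling of lacetrees} but lengthy; I would relegate the full case analysis to an appendix.
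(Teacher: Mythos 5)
Your plan is correct and follows essentially the same route as the paper's proof: the first part by observing that the encoding is constructive (each new frame is computable from the program and the logs of the current and neighboring nodes), and the second part by recovering the initial configuration from the first frames and then scanning the lace chronologically, identifying where each simulated statement ends and rebuilding the corresponding program configuration. The paper merely packages the state-reconstruction step as two explicit claims (recovering the heap and the pointer-variable valuation from any \lacetree prefix), which is the same bookkeeping you defer to your case analysis.
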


Notice that 
the relation between executions $\pi$ and \lacetree{s} $\ktree$ defined by $\ktree \in \KT(\pi,m,n)$
is neither injective nor functional.
It is not functional due to the non-determinism in the encoding.
It is not injective because a \lacetree ending in a label overflow represents only a prefix of an execution and
thus relates to all executions sharing that prefix.



\smallskip
\noindent{\em Exit status of a \lacetree.} 
To distinguish how {\lacetree}s terminate, we introduce the notion of {\em exit status} for individual frames and for the entire {\lacetree}. Each frame $f$ of a {\lacetree} is assigned one of five statuses in $\EStatus = \{ \bf{C}, \bf{E}, \bf{O},$ $\bf{M}, \bf{N} \}$, with the following meanings:
\begin{itemize}
\item {\bf C}lean exit: The program counter (\textit{pc}) of $f$ points to an $\exit$ instruction.
\item Runtime {\bf E}rror: $f.\instr = \Error$, indicating a null-pointer dereference.
\item Label {\bf O}verflow: The index of $f$ in its label is $n+1$, indicating log overflow.
\item Out of {\bf M}emory: $f.\instr = \OOM$, indicating a failed attempt to allocate a node with the $\New$ statement due to the absence of inactive nodes.
\item {\bf N}one: Indicates that frame $f$ is not a terminal frame.
\end{itemize}
A frame $f$ is {\em terminal} if its status is not {\bf N}one. 
Indeed, the exit statuses different from {\bf N} terminate the lace,
hence only the last frame in the lace may have an exit status
different from {\bf N}.
The {\em exit status} of a \lacetree $\ktree$,
denoted $\exitst(\ktree)$, is the status of the last frame in its lace.
The theorem below links {\lacetree}'s exit status to its corresponding executions.

\begin{restatable}{theorem}{thmktexits} \label{thm:kt-exits}
For all executions $\pi$ and 
$\ktree \in \KT(\pi,m,n)$, the following hold:
\begin{enumerate}
\item \label{prop:tree-error} If $\exitst(\ktree) = \bf{E}$, then $\pi$ ends in an error 
configuration.
\item \label{prop:execution-error} If $\pi$ ends in an error configuration, then $\exitst(\ktree) \in \{ \bf{E}, \bf{O}, \bf{M} \}$.
\item \label{prop:infinite} If $\pi$ is infinite, then $\exitst(\ktree) = {\bf O}$.
\end{enumerate}
\end{restatable}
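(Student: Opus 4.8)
The proof will proceed by induction on the length of the execution $\pi$, tracking how the labeling function $\mu$ is extended frame-by-frame and correlating each frame's exit status with the semantics of the corresponding program step. The key invariant to maintain, which is really the engine of the whole argument, is a \emph{simulation relation}: after encoding the prefix $\overline{\pi}$ ending in configuration $c$, the top frame $\overline{f}$ of the current node $\overline{t}$ in $\overline{\ktree}$ faithfully records $c$'s program counter, data-variable values, and the $\isnil$ status of each pointer, \emph{and} the rewinding machinery (via $\findptr$ and the $\chg_p$ flags) correctly locates, for each pointer $p$, the heap node $c$ points to. Granting this, each of the three claims follows by inspecting the single encoding rule that applies to $\pi$'s last instruction.

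\textbf{Claim~1 ($\exitst(\ktree) = \mathbf{E} \Rightarrow \pi$ ends in error).} If the last frame has status $\mathbf{E}$, then by definition $f.\instr = \Error$, and by the encoding this frame is pushed exactly when the instruction at $\overline{f}.\pc$ dereferences or frees a pointer $p$ with $\overline{f}.\isnil_p = \true$. By the simulation invariant, $\isnil_p = \true$ means $p = \NIL$ in configuration $c$, so executing that instruction makes $c$ an error configuration by the operational semantics. Since $\ktree$ is obtained by extending $\overline{\ktree}$ with exactly this step, $\pi = \overline{\pi}\,c'$ where $c \to_P c'$ is the erroneous transition, so $\pi$ ends in an error configuration.

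\textbf{Claim~2 ($\pi$ ends in error $\Rightarrow \exitst(\ktree) \in \{\mathbf{E},\mathbf{O},\mathbf{M}\}$).} Suppose $\pi = \overline{\pi}\,c'$ and $c \to_P c'$ is a transition into an error configuration, i.e. the instruction at $c.\pc$ dereferences/frees a $\NIL$ pointer. Consider how the encoding handles this last step. If, while extending the lace, we never run out of available frames in any log and never run out of inactive children, then the rewinding reaches the relevant frame, discovers $\isnil_p = \true$ (again by the simulation invariant applied to $\overline{\ktree}$), and pushes an $\Error$ frame, giving status $\mathbf{E}$. The only ways this can fail are: some log fills up during rewinding or pushing, forcing use of the $(n+1)$-th frame — status $\mathbf{O}$; or the step is a $\New$ that cannot find an inactive child — status $\mathbf{M}$. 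These are exhaustive because the encoding of every statement either completes normally (here producing $\mathbf{E}$) or aborts via label overflow or out-of-memory. Hence $\exitst(\ktree) \in \{\mathbf{E},\mathbf{O},\mathbf{M}\}$.

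\textbf{Claim~3 ($\pi$ infinite $\Rightarrow \exitst(\ktree) = \mathbf{O}$).} The backbone $K$ is finite (it is $T$ plus finitely many added nodes, each of bounded degree $k+m$, and, crucially, the construction adds inactive nodes only to internal nodes a bounded number of times — the backbone depends only on $T,m,n$). Each log has exactly $n+1$ frames. An infinite execution produces an infinite lace, but a lace visiting a finite backbone with finitely many frames per node must either terminate or exhaust some log; since it is infinite it cannot terminate with $\mathbf{C}$, $\mathbf{E}$, or $\mathbf{M}$ (each of those caps the lace), so it must hit a frame of index $n+1$, i.e. $\exitst(\ktree) = \mathbf{O}$. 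More precisely, for $\ktree$ to be a genuine member of $\KT(\pi,m,n)$ with $\pi$ infinite, the encoding process must have halted by label overflow, since it is the only non-terminating-execution-compatible stopping condition; one argues that if no overflow occurred the lace would encode a complete (hence finite) execution, contradicting infinitude.

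\textbf{Main obstacle.} The delicate part is \emph{formulating and proving the simulation invariant} precisely — in particular that $\findptr(p,\mathit{id})$, together with the shortcut rule driven by the $\chg_p$ flags, always terminates at the frame recording the current target of $p$, across all the intervening pointer-field reassignments, $\Free$s, and $\New$s. This requires a careful auxiliary lemma stating what the lace "means" at every frame (what heap each frame encodes, via the most-recent $\symb{p \coloneq \here}$ events and most-recent $\symb{\pfield \coloneq \cdot}$ events in each node's log), and showing each encoding rule preserves it. Much of this is presumably established in the course of proving \Cref{thm:computable}; here I would isolate the statement I need as a corollary and then the three claims above are short case analyses. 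A secondary subtlety for Claim~3 is making rigorous that "infinite execution forces overflow" rather than, say, the encoding simply never being defined — this is where the membership $\ktree \in \KT(\pi,m,n)$ must be used to assert that \emph{some} $(m,n)$-\lacetree exists, and its exit status is then pinned down by finiteness of the backbone.
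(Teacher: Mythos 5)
Your proof is correct and follows essentially the same route as the paper's: Claims~1 and~2 are settled by a case analysis on the last encoding step using the faithfulness of the frame fields (which the paper obtains by invoking the $\mathit{exec}$ function of \Cref{thm:computable}, exactly the ``simulation invariant'' you propose to isolate), and Claim~3 is the same finiteness/pigeonhole argument that an infinite lace over a finite backbone with $n+1$ frames per log must occupy a frame of index $n+1$.
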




\section{Properties of {\LaceTree}s} \label{sec:compositionality}

\noindent{\em Prefix of a \lacetree.}
An $(m,n)$-\lacetree is a \lacetree associated with an execution $\pi$ in $\KT(\pi,m,n)$.
A \emph{prefix} of a \lacetree $\ktree$ is a data tree obtained from $\ktree$ by truncating its lace at a frame $f$, setting $f'\!.\avail$ to \true\ for all subsequent frames, 
and leaving $f.\nextfr$ unconstrained.

\medskip

\noindent{\bf Locality.}
For a label $\sigma$ and $i\in\mathbb{N}$,
we define $\sigma^{<i}$ (resp., $\sigma^{\leq i}$) as 
the label obtained by setting 
$\avail$ to true in all frames of $\sigma$ with indices $\geq i$ (resp., $>i$).

The following lemma states that each new frame in a \lacetree depends only on \emph{local} information from neighboring nodes. 
We use $f_1 \equiv f_2$ to indicate that frames $f_1$ and $f_2$
differ only in their $\nextfr$ field.
\begin{restatable}[Locality]{lemma}{lemlocality} \label{lem:locality}
There exist functions $\FunUp, \FunDown, \FunInt$ such that for logs $\sigma,\tau_1,\ldots,\tau_{k+m}$  of a node $u$ and of its children in a \lacetree prefix:
\begin{itemize}
    \item For all steps $(u.j,b) \rightarrow_\mathit{next} (u, a)$ in the lace,
    it holds $\sigma^a \equiv \FunUp(\tau_j^{\leq b}, j, \sigma^{< a})$. 
    
    \item For all steps $(u,a) \rightarrow_\mathit{next} (u.j,b)$ in the lace,
    it holds $\tau_j^b \equiv \FunDown(\sigma^{\leq a}, j, \tau_j^{<b})$.

    \item For all steps $(u,a) \rightarrow_\mathit{next} (u, a+1)$ in the lace,
    it holds $\sigma^{a+1} \equiv \FunInt(\sigma^{\leq a})$.
\end{itemize}
\end{restatable}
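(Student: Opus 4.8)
The plan is to prove the Locality Lemma by structural induction on the length of the execution $\pi$, tracking exactly which pieces of the surrounding logs the inductive labeling rules in \Cref{sec: labelling of lacetrees} consult when a new frame is pushed. The three functions $\FunUp$, $\FunDown$, $\FunInt$ will be defined by a case analysis over the program statement being simulated (i.e., over $\overline{f}.\pc$ and the corresponding clause in the encoding), precisely mirroring the ``default values for a frame'' recipe together with the statement-specific overrides. The key observation driving the whole argument is that a pushed frame $f$ on node $u$ is determined by (a) its \emph{below} frame $f^\below$ in $u$'s own log, (b) its \emph{prev} frame $f^\prev$ in the lace, which is the top frame of the neighboring node from which the lace arrived (or of $u$ itself for an internal step), and (c) in the case of statements that inspect $u$'s own history (e.g.\ resolving $\field{q}{\pfield}$ by scanning for the most recent $\symb{\pfield\coloneqq\alpha}$, or locating pointers into $u$ for $\Free$, or matching $\symb{p\coloneqq\here}$ against $\symb{q\coloneqq\here}$ for equality tests), the \emph{already-constructed} frames of $u$ with smaller index. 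All of (a)--(c) are contained in either $\sigma^{<a}$ (for an up- or internal step onto $u$) or $\tau_j^{<b}$ (for a down step onto $u.j$) together with the single argument $\tau_j^{\leq b}$ or $\sigma^{\leq a}$ representing the source frame; the child index $j$ is passed explicitly so the function knows which $\activechild$ slot and which $\Dir$ value to read and write.

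First I would set up notation: fix a \lacetree prefix and a step in its lace, say an up-step $(u.j,b)\rightarrow_\nextfr(u,a)$, so that the top frame of $u.j$ at the moment of the push is the frame with index $b$, and the new frame being created in $u$ is the one with index $a$. I would note that the encoding never rereads frames of $u.j$ with index $>b$ (they do not yet exist) nor frames of $u$ with index $\geq a$, so passing $\tau_j^{\leq b}$ and $\sigma^{<a}$ loses nothing, and that $\sigma^{<a}$ in particular carries $f^\below=\sigma^{a-1}$ and all of $u$'s earlier history needed for the statement-local scans. Then I would define $\FunUp(\tau_j^{\leq b},j,\sigma^{<a})$ to carry out the exact computation prescribed in \Cref{sec: labelling of lacetrees}: it reads $\overline f.\pc$ from the source (the top frame of $\tau_j^{\leq b}$) to select the statement; applies the ``default values'' rule using $f^\prev$ = that source top frame and $f^\below$ = top frame of $\sigma^{<a}$; then applies the statement-specific overrides (set $\pc$ to $\nextpc$, write the appropriate $\instr$, update $\isnil_p$, $\chg_p$, $\activefield$, $\datafield$, $\activechild_l$, etc.); finally it sets $\prevfr$ to point back along direction $j$, and leaves $\nextfr$ unspecified — which is exactly why the conclusion is stated up to $\equiv$ rather than equality. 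The definitions of $\FunDown$ and $\FunInt$ are analogous, with $\FunInt$ covering precisely the statements in the ``single new frame on the current node'' group ($p\coloneqq\NIL$, $d\coloneqq\dataexp$, $\Skip$, and the $\Error$/$\OOM$-pushing cases and the non-moving branches of equality tests), for which $f^\prev=f^\below=\sigma^a$, so no neighbor is consulted at all.

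The inductive step then amounts to checking, statement by statement, that the frame actually appended by $\mu$ in passing from $\overline\ktree$ to $\ktree$ agrees (up to $\nextfr$) with the value produced by the appropriate function on the relevant restricted logs. Here I would lean on the fact that rewinding sequences are decomposed into individual lace steps: a $\RWD_i$ frame pushed mid-rewind is itself just an ``internal-ish'' or ``cross'' step whose new frame takes default values plus the $\RWD$ event and an unchanged $\pc$, so it too is captured by $\FunUp/\FunDown$ applied at that step — the lemma is about \emph{each} step of the lace, not about whole statement simulations, which is what makes the local functions well-defined. The main obstacle I anticipate is the bookkeeping around the history-scanning statements: showing that the information those scans need (the most recent $\symb{\pfield\coloneqq\alpha}$ in $u$, or the set of pointers currently aliasing $u$, or the matching $\symb{q\coloneqq\here}$ for an equality test) is genuinely a function of $\sigma^{<a}$ alone and does not secretly depend on frames of $u.j$ or on the $\nextfr$ links that $\equiv$ allows to differ. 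I would discharge this by observing that all such scans range over indices strictly below the current frame's index within the \emph{same} log and read only $\instr$, $\chg_p$, $\isnil_p$, $\pc$ and $\activechild$ fields — never $\nextfr$ — so they are invariant under $\equiv$ and are fully visible inside $\sigma^{<a}$ (resp.\ $\tau_j^{<b}$). The remaining cases ($\Free$ resetting $\activefield$, $\New$ choosing the first inactive child and writing $\activechild$) are handled the same way, reading only the local $\activechild$ vector, which the restricted log exposes.
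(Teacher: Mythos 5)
Your proposal is correct and matches the paper's (implicit) proof: the paper never argues this lemma separately, but instead exhibits the three functions constructively as the predicates $\Psi_\FunUp$, $\Psi_\FunDown$, $\Psi_\FunInt$ and $\StepFrame$ in its detailed-encoding appendix, which is exactly the statement-by-statement case analysis (default values from $f^\prev$ and $f^\below$, plus statement-specific overrides, with $\nextfr$ left unspecified and hence the conclusion only up to $\equiv$) that you describe. One small correction to how you discharge the ``main obstacle'': the rewind navigation \emph{does} read $\prevfr$ and $\nextfr$ fields of earlier frames (e.g., to locate the frame from which the lace last entered the current node), so the claim that the scans ``never read $\nextfr$'' is too strong; the argument still goes through because every such frame already has a determined lace successor and lies inside the restricted logs $\sigma^{<a}$ or $\tau_j^{<b}$ passed to the functions --- the only unconstrained $\nextfr$ is that of the frame being produced, which is precisely what $\equiv$ discounts.
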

\noindent{\bf Compositionality.}
Using the functions $\FunUp$ and $\FunDown$ from \Cref{lem:locality}, we define the predicate $\ConsistentChild(\tau, j, \sigma)$.
This predicate is meant to check whether two logs $\sigma$ and $\tau$ may
belong to the same {\lacetree} as the logs of a node and its $j$\textsuperscript{th} child.
Specifically, it verifies that all pairs of consecutive frames, linked by $\nextfr$ and $\prevfr$, with one frame belonging to $\tau$ and the other frame belonging to $\sigma$, adhere to the functions $\FunUp$ and $\FunDown$. A detailed definition of $\ConsistentChild$ is in~\Cref{app:detailed-encoding-consistency-child}.


\smallskip


From $\ConsistentChild$ and \lacetree prefix definitions, the next lemma follows, ensuring that $\ConsistentChild$ holds on all parent-child log pairs.



%
\begin{restatable}{lemma}{lemconsistentchild} 
\label{lem:consistent-child1}
For all labels $\sigma, \tau \in L(\Ksig)$ and indices $j \in [k + m]$, if there exists a {\lacetree} prefix where $\sigma$ and $\tau$ are the logs of a node and its $j$\textsuperscript{th} child respectively, then $\ConsistentChild(\tau, j, \sigma)$ holds.
\end{restatable}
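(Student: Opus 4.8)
\textbf{Proof plan for \Cref{lem:consistent-child1}.}
The plan is to unfold the definition of $\ConsistentChild(\tau, j, \sigma)$ and show that each condition it imposes is a consequence of \Cref{lem:locality} applied to the given \lacetree prefix. Recall that $\ConsistentChild(\tau, j, \sigma)$ is, by construction, a conjunction ranging over pairs of frame indices $(b,a)$: for each way in which a frame of index $b$ in $\tau$ and a frame of index $a$ in $\sigma$ could be consecutive in a lace (in either direction, as signalled by the $\nextfr$/$\prevfr$ fields pointing the appropriate way), it requires $\sigma^a \equiv \FunUp(\tau^{\leq b}, j, \sigma^{<a})$ for an upward step, or $\tau^b \equiv \FunDown(\sigma^{\leq a}, j, \tau^{<b})$ for a downward step. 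So the proof is essentially a matching exercise: each conjunct of $\ConsistentChild$ is discharged by the corresponding bullet of \Cref{lem:locality}.

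First I would fix a \lacetree prefix $\ktree' = (K,\mu')$ in which $\sigma$ is the log of some node $u$ and $\tau$ is the log of its $j$\textsuperscript{th} child $u.j$; such $\ktree'$ exists by hypothesis. Then I would enumerate the pairs of consecutive frames in the lace of $\ktree'$ with one endpoint in $u$ and the other in $u.j$. By the definition of the lace-successor relation, every such adjacency is either an upward step $(u.j, b) \to_\nextfr (u, a)$ or a downward step $(u, a) \to_\nextfr (u.j, b)$, and the indices and directional fields are exactly those that the corresponding conjunct of $\ConsistentChild$ tests for. For an upward step, \Cref{lem:locality} gives $\sigma^a \equiv \FunUp(\tau^{\leq b}, j, \sigma^{<a})$; for a downward step it gives $\tau^b \equiv \FunDown(\sigma^{\leq a}, j, \tau^{<b})$ --- precisely the equalities $\ConsistentChild$ demands. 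The conjuncts of $\ConsistentChild$ corresponding to index pairs that are \emph{not} realised as lace adjacencies in $\ktree'$ must be handled too: here I would appeal to the way $\ConsistentChild$ is phrased, namely that a conjunct is vacuously satisfied (an implication with false premise) whenever the directional fields of $\sigma^a$ and $\tau^b$ do not jointly witness a lace step, so there is nothing to check. Assembling these observations, every conjunct of $\ConsistentChild(\tau, j, \sigma)$ holds, hence the predicate holds.

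The one genuinely delicate point --- and the step I expect to be the main obstacle --- is the bookkeeping that reconciles the \emph{truncated} logs appearing in \Cref{lem:locality} (the $\sigma^{<a}$, $\sigma^{\leq a}$, $\tau^{\leq b}$, $\tau^{<b}$ with their $\avail$ flags reset) with the \emph{full} logs $\sigma, \tau$ that $\ConsistentChild$ receives, and the fact that the equalities hold only up to $\equiv$ (i.e.\ modulo the $\nextfr$ field). I would argue that $\ConsistentChild$ is itself stated in terms of the same truncations and the same $\equiv$, so the two notions align by design; concretely, one checks that for the frame index $a$ occurring in a given lace step of $\ktree'$, the restriction $\sigma^{\leq a}$ computed from $\ktree'$ agrees with the restriction of the abstract label $\sigma$ that $\ConsistentChild$ uses, because frames of index $>a$ in that step's analysis are irrelevant to $\FunUp$/$\FunDown$ and frames of index $\le a$ are untouched by the truncation. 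Finally, since \Cref{lem:locality} is stated for \lacetree \emph{prefixes} and $\ktree'$ is already a prefix, no additional extension argument is needed; the lemma applies directly.
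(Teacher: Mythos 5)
Your proposal is correct and takes essentially the same route as the paper: unfold $\ConsistentChild$, observe that each guarded conjunct fires exactly when a pair of frames of $\sigma$ and $\tau$ witnesses a lace step between parent and $j$\textsuperscript{th} child (and is vacuous otherwise), and discharge the fired conjuncts by the corresponding case of \Cref{lem:locality}. The only omission is that the actual definition of $\ConsistentChild$ also contains the conjunct $\ConsistentFirstFrame(\tau,j,\sigma)$, relating the $\activefield$/$\activechild$ fields of the two first frames, which is not about lace adjacencies and must be discharged separately; it follows immediately from the base-case labeling rules of the backbone, as the paper's proof notes.
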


The following lemma establishes a key property of $\ConsistentChild$ for our verification approach, enabling {\lacetree} composition from different subtrees. 

\begin{restatable}[Compositionality]{lemma}{lemcompositionality} \label{lem:compositionality}
Let $\sigma_1,\sigma_2 \in L(\Ksig)$ be the logs of nodes $t_1,t_2$ in $(m,n)$-\lacetree prefixes $\ktree_1,\ktree_2$.
If $\ConsistentChild(\sigma_2, j, \sigma_1)$ holds true
for some $j \in [k+m]$,
then there is an $(m,n)$-\lacetree prefix $\ktree$ where $\sigma_1$ is the log of a node and $\sigma_2$
is the log of its $j$\textsuperscript{th} child.
Moreover, $\ktree$ is obtained by replacing the $j$\textsuperscript{th} subtree of $t_1$ in $\ktree_1$ with the subtree rooted at $t_2$ in $\ktree_2$.
\end{restatable}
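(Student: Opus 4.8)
\textbf{Proof plan for the Compositionality Lemma (\Cref{lem:compositionality}).}

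The plan is to build the composed tree $\ktree$ syntactically, then certify it is a genuine $(m,n)$-\lacetree prefix by re-running the inductive labeling process on its backbone and showing that the run that produced $\ktree_2$ on the subtree of $t_2$ can be "transplanted" verbatim, because the only way the labeling of that subtree interacts with the rest of the tree is through the parent-child interface at $t_1$/$t_2$, which $\ConsistentChild$ guarantees is compatible. Concretely: let $B_1$ be the backbone of $\ktree_1$ and $B_2$ the backbone of $\ktree_2$; form $B$ by deleting the $j$\textsuperscript{th} subtree of $t_1$ in $B_1$ and grafting in its place the subtree of $B_2$ rooted at $t_2$. Since backbones depend only on the input data tree and the parameters $(m,n)$, and since grafting one legal backbone-subtree under an internal node of another legal backbone again satisfies the four defining conditions of a backbone (prefix-closed, old tree-nodes internal, every internal node has degree $k+m$, at least one internal node), $B$ is the backbone of some input tree $\dtree$. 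Define $\mu$ on $B$ by keeping $\ktree_1$'s labels on the part coming from $\ktree_1$ and $\ktree_2$'s labels on the subtree coming from $\ktree_2$, with the single exception of reconciling the $\nextfr$/$\prevfr$ links that cross the $t_1$--$t_2$ edge (and symmetrically the now-deleted $t_1$--old-child and old-$t_2$-parent links, which disappear). The $\equiv$-slack in \Cref{lem:locality} (frames agreeing up to $\nextfr$) is exactly what lets us re-point these links without disturbing anything else.

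The heart of the argument is then to exhibit an execution $\pi$ of $P$ and show $\ktree \in \KT(\pi,m,n)$ by induction on the length of the lace of $\ktree$, mirroring the inductive definition of the labeling function in \Cref{sec: labelling of lacetrees}. At each step, the lace is extended by one of the encoding rules (internal step, move up, move down, rewind, $\New$, etc.), and the frame pushed is determined by $\FunInt$, $\FunUp$, or $\FunDown$ applied to the logs truncated just below the new frame. For lace positions lying entirely within the $\ktree_1$-part or entirely within the $\ktree_2$-part, the pushed frame agrees with what $\ktree_1$ (resp. $\ktree_2$) already has there, since those trees are themselves \lacetree prefixes and the relevant local neighbourhood of logs is unchanged by the surgery. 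The only positions needing fresh verification are the lace steps that traverse the $t_1$--$t_2$ edge: a step $(t_2,b)\to_\nextfr(t_1,a)$ or $(t_1,a)\to_\nextfr(t_2,b)$. For these, $\ConsistentChild(\sigma_2,j,\sigma_1)$ is precisely the hypothesis that $\sigma_1^a \equiv \FunUp(\sigma_2^{\le b}, j, \sigma_1^{<a})$ and $\sigma_2^b \equiv \FunDown(\sigma_1^{\le a}, j, \sigma_2^{<b})$ hold at every such crossing — which is what the definition of $\ConsistentChild$ checks frame-by-frame — so the transplanted frames are exactly the ones the encoding would produce. Here one uses that $\sigma_1$ appears as a node log in the \emph{some} \lacetree prefix $\ktree_1$ and $\sigma_2$ in $\ktree_2$, so each log is individually "reachable" by the encoding; $\ConsistentChild$ upgrades this to joint reachability of the pair as a parent-$j$\textsuperscript{th}-child.

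I expect two places to require care. First, one must check that the rewinding behaviour is preserved: a rewind that starts in the $\ktree_1$-part and descends into the $t_2$-subtree (or vice versa) follows the $\chg_p$ "shortcut" structure encoded in the frames, and I need that the $\chg_p$/$\isnil_p$ bookkeeping across the $t_1$--$t_2$ interface is consistent — but this is again part of what $\FunUp$/$\FunDown$, and hence $\ConsistentChild$, enforces, so it reduces to the frame-wise check above rather than to any global reasoning. Second, and this is the main obstacle, is the bookkeeping of \emph{which} lace positions change: when we excise the old $j$\textsuperscript{th} subtree of $t_1$ we delete all frames the original execution spent there, and when we graft in the $t_2$-subtree we insert the frames $\ktree_2$ spent below $t_2$; I must argue the resulting global frame sequence is still a single well-formed doubly linked lace (no orphaned frames, no cycles, correct first frame at the root's second frame, correct terminal frame) and that marking the deleted frames as $\avail=\true$ yields a legal prefix. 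This is a finite combinatorial verification that the $\nextfr$/$\prevfr$ relinking at the unique crossing edge composes the two laces into one; the $\equiv$-relation from \Cref{lem:locality} and the "last frame of a lace is detectable locally" remark in \Cref{sec:defaults} are the tools that make it go through. Once $\pi$ is shown to be a genuine execution with $\ktree \in \KT(\pi,m,n)$, the "Moreover" clause is immediate from the construction, since $\ktree$ was \emph{defined} as that subtree replacement.
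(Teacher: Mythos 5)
Your plan assembles the same ingredients as the paper's proof — define $\ktree$ by the subtree replacement, use locality (Lemma~\ref{lem:locality}) to certify every lace step whose neighbourhood is untouched by the surgery, and invoke $\ConsistentChild(\sigma_2,j,\sigma_1)$ to certify exactly the steps that cross the $t_1$--$t_2$ edge. That part of the argument is right, and your observation that the per-frame checks are index-local (so they transfer unchanged from $\ktree_1$ and $\ktree_2$) is the correct reason the interior steps need no new verification.

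The gap is precisely the point you flag as ``the main obstacle'' and then defer to ``a finite combinatorial verification'': showing that after excising one child-subtree's worth of lace segments and splicing in another's, the $\nextfr$/$\prevfr$ fields of $\ktree$ still describe a \emph{single} well-formed lace — every down-crossing recorded in $\sigma_1$ matched in order by a down-entry recorded in $\sigma_2$ and vice versa, no orphaned occupied frames, a correctly detectable terminal frame. An induction on the length of the lace of $\ktree$ presupposes that this global chain exists and covers all occupied frames, which is most of what must be proved; $\ConsistentChild$ only constrains crossings frame-by-frame, so the monotone pairing and completeness of the crossings still need an argument (it does follow, via the $\len$ constraints inside $\Psi_{\FunUp}$/$\Psi_{\FunDown}$ and the chronological indexing of logs, but you never give it). The paper avoids the issue by choosing a different induction variable: the \emph{number of interactions} between $\sigma_1$ and $\sigma_2$. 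It truncates one of the two prefixes just before the last interaction, applies the inductive hypothesis to obtain a genuine combined \lacetree prefix, and then reattaches the removed suffix of frames one at a time, using $\ConsistentChild$ for the first reattached frame and Lemma~\ref{lem:locality} for the rest. Because the object in hand is a genuine \lacetree prefix at every stage, global well-formedness of the lace is maintained by construction rather than verified post hoc. If you keep your induction scheme, you must supply the lace-reassembly argument explicitly; switching to the paper's interaction-count induction makes it unnecessary.
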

\begin{proof}
    Let $\ktree$ be the data tree obtained by replacing the subtree rooted at the $j$\textsuperscript{th} child of $t_1$ in $\ktree_1$ with the subtree rooted at $t_2$ in $\ktree_2$.
    We prove that $\ktree$ is a \lacetree prefix by induction on the number of pairs $(a,b)$ where frames $\sigma_1^a$ and $\sigma_2^b$ are adjacent in the lace, each such pair representing an interaction between the parent's and the child's labels.

    When the above number is zero, there are no interactions between $t_1$ and its $j$\textsuperscript{th} child in $\ktree_1$.    
    Let $\pi_1$ be an execution s.t.\ $\ktree_1$ is a prefix of a \lacetree representing $\pi_1$. It is direct to show that $\ktree$ is a \lacetree prefix for an execution $\pi$ following the same steps as $\pi_1$, 
    starting with the input tree of $\ktree$. Since $\pi_1$ never visits the $j$\textsuperscript{th} child of $t_1$, and this subtree is the only difference between $\ktree_1$ and $\ktree$, $\pi$ is a valid execution of $P$.

    For the inductive case, consider the last interaction $(a,b)$ between $\sigma_1^a$ and $\sigma_2^b$.
    First, assume that such interaction is a step \emph{up} from frame $\sigma_2^b$ to $\sigma_1^a$. Let $\ktree_1'$ be derived from $\ktree_1$ by truncating its lace to end just before $\sigma_1^a$.
    Clearly, $\ktree_1'$ is still a \lacetree prefix, and the modified label $\sigma_1'$ of $t_1$ is obtained from $\sigma_1$ by removing the frames with index at least $a$, by setting their $\avail$ flags to true. We can now apply the inductive hypothesis to the labels $\sigma_1'$ and $\sigma_2$, because we have removed one interaction between them. Hence, we can assume that there is a single \lacetree prefix $\ktree'$ containing both labels as the logs of a parent and its $j$\textsuperscript{th} child. We then obtain the desired \lacetree prefix $\ktree$ from $\ktree'$ by reintroducing the sequence of frames removed from $\ktree_1$. We need to prove that adding those frames respects all rules of \lacetree{s}. The correctness of the first added frame, $\sigma_1^a$, is ensured by $\ConsistentChild(\sigma_2, j, \sigma_1)$, because it applies the function $\FunUp$ to all upward interactions between $\sigma_2$ and $\sigma_1$. In turn, Lemma~\ref{lem:locality} ensures that adhering to that function is 
    sufficient to establish the correctness of the next frame. 
    Subsequent frames can be reintroduced due to the unchanged surroundings. Lemma~\ref{lem:locality} ensures that no other information is relevant.  
    
    The other case to prove is when the last interaction is a step \emph{down} from the parent's frame $\sigma_1^a$ to the child's frame $\sigma_2^b$. Define $\sigma_2'$ as the label obtained from $\sigma_2$ with the frames of indices $b$ and above removed. Similar to the previous case, we apply the inductive hypothesis to the shortened label $\sigma_2'$ and its shortened \lacetree prefix $\ktree_2'$, resulting in a \lacetree prefix $\ktree'$. We then reintroduce the frames removed from $\ktree_2$ into $\ktree'$. The correctness of the first reintroduced frame is ensured by $\ConsistentChild$ checking the function $\FunDown$ on every downward interaction. The subsequent reintroduced frames are still valid because there are no steps returning to the parent of $t_2$, and their surroundings remain unchanged from $\ktree_2$.\qed
\end{proof}

\begin{example}
Consider the \lacetree in Fig.~\ref{fig:kt1}, 
with node labels $\sigma_1,\ldots,\sigma_5$,
 and another {\lacetree} of the same program on the input
 list  7 \goesto 8 \goesto 9 \goesto 3 \goesto 10 \goesto 11 \goesto 12 and $\mathit{key} = 3$.
 Let $v_4$ be the node of the second {\lacetree} with value $3$,
 and let $\tau_4$ be its label.
 By inspecting the second \lacetree, one can observe
 that the occupied frames of $\tau_4$ (i.e., those with $\avail=\false$) contain the same information as the occupied frames of $\sigma_3$. Therefore, $\ConsistentChild(\tau_4, 1, \sigma_2)$ holds, and by Lemma~\ref{lem:compositionality}, the two {\lacetree}s can be composed at nodes $u_2$-$v_4$ to construct a
 {\lacetree} for the input list 1 \goesto 2 \goesto 3 \goesto 10 \goesto 11 \goesto 12.
\end{example}



\section{Reasoning about {\LaceTree}s with {\CHC}s}\label{sec:reasoning}


We introduce a {\CHC} system $\Ckt$ for a program $P$ with parameters $m,n\in\nats$. 
It 
employs a single uninterpreted relation symbol, $\invCex(\sigma)$, where $\sigma$ matches the data signature $\Ksig$ of \lacetree{s}, ensuring the following:

\begin{restatable}
{theorem}{thmlabels} 
\label{thm:lab}
In the minimal model of $\Ckt$, 
$\invCex(\sigma)$ holds for a label $\sigma$ iff $\sigma$ labels a node in some $(m,n)$-\lacetree prefix $\ktree$ of $P$.
\end{restatable}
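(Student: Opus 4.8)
The plan is to prove the two directions of the biconditional separately, exploiting the fixpoint (minimal-model) semantics of CHC systems recalled in the preliminaries together with the compositionality machinery of \Cref{sec:compositionality}.

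\textbf{Soundness ($\invCex(\sigma)$ in the minimal model $\Rightarrow$ $\sigma$ labels a node of some \lacetree prefix).} The minimal model is the least fixpoint of the immediate-consequence operator, so it is the union of the stages $\invCex^{(0)} \subseteq \invCex^{(1)} \subseteq \cdots$ obtained by iterating that operator from the empty interpretation. I would prove by induction on the stage $\ell$ at which $\sigma$ first enters the relation that there is an $(m,n)$-\lacetree prefix $\ktree$ having $\sigma$ as one of its node labels. The base case ($\ell = 1$) corresponds to the \emph{facts} of $\Ckt$: these encode exactly the initial-configuration labelling from \Cref{sec:baselabel} (the first frame encoding the input data field and the active-children flags, and, for the root, the second frame encoding the initial configuration with $\symb{\rootptr \coloneq \here}$); each such fact is witnessed by the base-case \lacetree of the appropriate one-node or minimal backbone. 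For the inductive step, a clause that newly derives $\invCex(\sigma)$ from premises $\invCex(\sigma_1),\dots,\invCex(\sigma_r)$ (together with a data-theory constraint) must be — by the intended construction of $\Ckt$ — a clause whose constraint asserts $\ConsistentChild$-type conditions and the encoding rules of \Cref{sec: labelling of lacetrees}. Each $\sigma_i$ is, by the induction hypothesis, a node label of some prefix $\ktree_i$; I would then use \Cref{lem:compositionality} repeatedly to glue the $\ktree_i$ together along the parent–child slots dictated by the clause, obtaining a single prefix in which $\sigma$ appears. The slightly delicate point here is handling the clause that \emph{extends the lace by one step} (the step that appeals to $\FunUp$, $\FunDown$, or $\FunInt$ from \Cref{lem:locality}): one argues that appending the new frame produced by the appropriate locality function to an existing prefix yields a prefix again, which is exactly what \Cref{lem:locality} guarantees at the level of individual frames, and \Cref{lem:compositionality} lifts to whole subtrees.

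\textbf{Completeness ($\sigma$ labels a node of some prefix $\ktree$ $\Rightarrow$ $\invCex(\sigma)$ in the minimal model).} Here I would induct on the structure/height of the \lacetree prefix $\ktree$, or — more robustly — on the length of the lace of $\ktree$ together with the position of the node carrying $\sigma$. The initial (base-case) labelling is captured by the facts of $\Ckt$. For the inductive case, peel off the last lace step of $\ktree$ to obtain a shorter prefix $\ktree'$; by \Cref{lem:locality} the label(s) affected by that last step are obtained from the label(s) of $\ktree'$ via $\FunUp/\FunDown/\FunInt$, and $\Ckt$ contains a clause realising precisely this relation; the induction hypothesis supplies $\invCex$ of the relevant labels of $\ktree'$, and for every other node $u$ of $\ktree$ its label is unchanged from $\ktree'$ (or is a label of a prefix in its own right), so $\invCex$ holds of it too; firing the clause then puts $\sigma$ in the relation. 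One also needs that all parent–child label pairs in $\ktree$ satisfy $\ConsistentChild$, which is exactly \Cref{lem:consistent-child1}, so the $\ConsistentChild$ conjuncts in the clause's constraint are met.

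\textbf{Main obstacle.} The real work is not in the induction skeleton but in pinning down that the (not-yet-displayed) clauses of $\Ckt$ exactly mirror, on the one hand, the base-case labelling and, on the other, each of the per-statement encoding rules of \Cref{sec: labelling of lacetrees} and the locality functions of \Cref{lem:locality} — i.e.\ that the constraint of every clause is satisfiable precisely by the frame-tuples that the encoding produces, and conversely that every encoding move is covered by some clause. Equivalently, one must check the clauses are simultaneously \emph{sound} (never assert a label that cannot occur) and \emph{complete} (assert every label that can occur) with respect to the encoding. This is a somewhat tedious but essentially mechanical case analysis over the statement kinds in the grammar (the eight rows of \Cref{tab:walking}, plus control-flow and the label-overflow frame $n+1$); the conceptual leverage that makes it go through is entirely provided by \Cref{lem:locality} (local definability of each new frame) and \Cref{lem:compositionality} (subtree replacement), so no new ideas beyond \Cref{sec:compositionality} are needed — only careful bookkeeping matching clauses to encoding clauses.
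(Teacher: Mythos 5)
Your proposal is correct and follows essentially the same route as the paper: the soundness direction is an induction on the fixpoint stage at which $\sigma$ enters $\invCex$, with \Cref{lem:compositionality} used to merge the two prefixes witnessing the body atoms of {\CHC}s~IV/V into one prefix and \Cref{lem:locality} justifying the appended frame, while the completeness direction is an induction on the lace length, peeling off the last frame and invoking \Cref{lem:consistent-child1} to discharge the $\ConsistentChild$ conjuncts. Your closing remark that the remaining work is the mechanical correspondence between the clause constraints ($\startframe$, $\Psi_{\FunInt}$, $\Psi_{\FunUp}$, $\Psi_{\FunDown}$, etc.) and the per-statement encoding rules matches exactly how the paper disposes of that burden, namely by spelling out those predicates in the appendix.
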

We define $\Ckt$ using the {\lacetree} rules constructing constructing partial {\lacetree}s (impractical for enumerating all {\lacetree}s). Instead, we rely on the compositionality lemma (\Cref{lem:compositionality}), which states that two consistent labels imply the existence of 
a {\lacetree} where those labels are logs of an internal node and one of its children, and the locality lemma (\Cref{lem:locality}) to extend the lace involving these nodes. This lemma entails that constructing a {\lacetree} involves adding frames to node logs so that any two consecutive frames belong to the same node or to neighboring backbone nodes. 
We use this property in the {\CHC} system, employing independent {\CHC}s to simulate adding a single frame. In the {\CHC} system, we simulate adding a single frame via $\FunUp$ (upward), $\FunDown$ (downward), and $\FunInt$ (within the same log). We define predicates $\Psi_{\FunDown}(\sigma, j, \tau_j, f), \Psi_{\FunUp}(\tau_j, j, \sigma, f)$, and $\Psi_{\FunInt}(\sigma, f) $ to constrain logs accordingly, with $f$ as the resulting frame.

\begin{figure}[t]
\begin{mdframed}[skipabove=0pt,skipbelow=0pt,innertopmargin=-7.0pt]
\begin{align*}
\texttt{\bf (I)}\:\,\,\,\quad\invCex(\sigma) &\leftarrow \len(\sigma, 1) \wedge \FirstFrame(\sigma^1) \quad\,\,\,\,\text{\textcolor{gray}{\em\small Initializing non-root nodes}}\\[1ex]
\texttt{\bf (II)}\,\,\,\quad\invCex(\sigma) &\leftarrow \len(\sigma, 2) 
\wedge \startframe(\sigma) 
\qquad\qquad\quad\,\text{\textcolor{gray}{\em\small Initializing the root node}}  \\[1ex]
\texttt{\bf (III)}\,\,\quad\invCex(\sigma) &\leftarrow 
\len(\sigma, i) \wedge\invCex(\sigma^{<i}) 
\wedge  \Psi_\FunInt(\sigma^{<i}, \sigma^i)\,\quad
\text{\textcolor{gray}{\em\small Internal step}} \\[1ex]
&\hspace{-1.1cm}\text{\textcolor{gray}{\em\small A step from the $j$\textsuperscript{th} child to its parent}}\\
\texttt{\bf (IV)}\,\,\quad \invCex(\sigma) &\leftarrow \
\len(\sigma, i)\,\wedge\,\invCex(\sigma^{<i}) \,\wedge \,\invCex(\tau) \;\;   \\ 
&\quad\,\,\,\wedge \, \ConsistentChild(\tau, j, \sigma^{<i})\,\wedge\, \Psi_\FunUp(\tau, j, \sigma^{<i}, \sigma^i) \\[1ex]
&\hspace{-1.1cm}\text{\textcolor{gray}{\em\small A step from a node to its $j$\textsuperscript{th} child}}\\
\texttt{\bf (V)}\quad\,\,\,\,\invCex(\tau) &\leftarrow
\len(\tau, i) \,\wedge\, \invCex(\sigma) \,\wedge\,\invCex(\tau^{<i})\\
&\quad\,\,\, 
\wedge \, \ConsistentChild(\tau^{<i}, j, \sigma)\,\wedge \,\Psi_\FunDown(\sigma, j, \tau^{<i}, \tau^i) 
\\[1.2ex] \hline\\[-1.2ex]
\texttt{\bf (VI)}\qquad\,\,\,\;\;\;\perp &\leftarrow \invCex(\sigma) \wedge \LabelExit(\sigma, \ExitStatus) \quad\,\:\: \text{\textcolor{gray}{\em\small Lace ends with status in Ex}}
\end{align*}
\end{mdframed}
\vspace{-0.8em} 
\caption{{\CHC}s (I)-(V) form the {\CHC} system $\Ckt$,
while the {\CHC} system $\Cex$ includes all the {\CHC}s in the figure. Here, $i \in [2,n]$ and $j \in [k+m]$.}
\label{fig:CHC-rules}
\vspace{-1em} 
\end{figure}

\Cref{fig:CHC-rules} details the {\CHC}s of $\Ckt$. 
While describing each {\CHC}, we  establish the ``only if'' direction of \Cref{thm:lab}, by induction on the number of {\CHC} applications needed to insert $\sigma$ into the minimal interpretation of $\invCex$. For completeness, the proof of the ``if'' direction appears in \Cref{app:proof6}.  

Before detailing the {\CHC}s, we introduce some notation. Let $\sigma\in L(\Ksig)$ and $i\in\mathbb{N}$. The formula $\len(\sigma, i)$ is true if all frames with indices in $[i]$ are unavailable and all other frames are available, i.e., $\len(\sigma, i) \defeq \neg \sigma^i.\avail \wedge \bigwedge_{j = i+1}^{n+1} \sigma^{j}.\avail \,.$
With an abuse of notation, we write $\sigma^{<i}$ in a {\CHC} as a shorthand for a fresh variable $\theta$, together with the conjunct 
$\bigwedge_{\ell \in [i-1]} (\theta^\ell = \sigma^\ell) \wedge \bigwedge_{\ell \in [i,n+1]} \theta^\ell.\avail$.

\smallskip

{\bf {\CHC}s~I} and {\bf II} ensure that $\invCex$ includes labels of all {\lacetree}s of $0$-length executions of $P$, forming the base case for the ``only if'' direction of \Cref{thm:lab}, since both are facts.
{\CHC}~I defines non-root labels with all but the first frame available, and 
consistent $\activefield$/$\activechild$, while {\CHC}~II defines root labels by disabling the first two frames and using  $\startframe(\sigma)$  for the second frame,
as per the base-case labeling (\Cref{sec:baselabel}).

\smallskip

The remaining {\CHC}s extend each node's log 
frame by frame,
following the inductive \lacetree label definition. For \Cref{thm:lab} (``only if'' direction), we assume inductively  that the body labels satisfy the claim, i.e., they label a node in a $(m,n)$-\lacetree prefix, and show that the head label does too. 

\smallskip

{\bf{\CHC}~III} handles  {\em internal steps}, where $\sigma^i$ follows $\sigma^{i-1}$ in the lace.
Here, $\invCex(\sigma^{<i})$ ensures 
that $\sigma^{<i}$ labels a node in an $(m,n)$-\lacetree prefix, while  $\Psi_\FunInt(\sigma^{<i}, \sigma^i)$ 
constrains 
$\sigma^i$ to encode the next internal step, as per \Cref{lem:locality}.

{\bf {\CHC}~IV} handles cases where the lace extends with a new frame pushed to the parent of the previous frame, typically during a rewind phase. 
The predicate $\ConsistentChild$ ensures that $\sigma^{<i}$ and $\tau$ belong to the same \lacetree prefix, as the log of a parent and its $j$\textsuperscript{th} child (\Cref{lem:compositionality}). Then, $\Psi_\FunUp$ extends the lace by adding a frame to $\sigma^{<i}$, following the topmost frame of $\tau$.



 
{\bf\CHC~V} handles the reverse of \CHC~IV, where the current lace extends from a parent to its $j$\textsuperscript{th} child. Using $\Psi_\FunDown$, it ensures that $\tau_j$ correctly extends $\tau_j^{<i}$ with a new frame for the latest step.

\subsubsection{The Exit Status Problem.}
We present a method to check whether a program can lead to a memory safety error via an execution that can be represented by an $(m,n)$-{\lacetree}. This reduces to solving a {\CHC} system: if unsatisfiable, such an execution exists.
%
%
We formalize this as the following decision problem.

\begin{problem}
An instance  of the {\bf exit status problem}
is a tuple $(P, m, n, \ExitStatus)$,
where $P$ is a program,
$m,n\in\nats$, and $\ExitStatus$ is a set of exit statuses excluding {\bf N}.
The exit status problem asks whether there exists an $(m,n)$-{\lacetree} of $P$ whose exit status is in $\ExitStatus$. 
\end{problem}


We solve the exit status problem using the {\CHC} system $\Cex$ (\Cref{fig:CHC-rules}), which includes 
    (i) all {\CHC}s from $\Ckt$, 
    crucial for \Cref{thm:lab},
    and (ii) a single query, {{\CHC}~VI}, to check for a \lacetree corresponding to a program execution with an exit status in $\ExitStatus$.
%
Combining \Cref{thm:lab} with the definition of {\CHC}~VI yields the main result of this section.

\begin{restatable}
{theorem}{thmExitStatusProblem} 
\label{thm:exit status problem} 
Let $\mathcal{I}$ be an instance of the exit-status problem. Then, $\mathcal{I}$ admits a positive answer if and only if the \CHC system $\Cex$ is unsatisfiable.
\end{restatable}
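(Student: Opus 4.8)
The plan is to derive \Cref{thm:exit status problem} as a straightforward corollary of \Cref{thm:lab}, treating it mostly as a bookkeeping exercise about the one extra query clause \texttt{(VI)}. First I would recall the standard fact about \CHC\ systems already cited in the preliminaries: a \CHC\ system augmented with a query $\perp \leftarrow \invCex(\sigma) \wedge \LabelExit(\sigma,\ExitStatus)$ is \emph{unsatisfiable} if and only if its minimal model contains some $\sigma$ satisfying the body, i.e.\ some $\sigma$ with $\invCex(\sigma)$ and $\LabelExit(\sigma,\ExitStatus)$ both true. (More precisely: the query-free part $\Ckt$ has a minimal model $M$; adding \texttt{(VI)} keeps the system satisfiable iff no tuple of $M$ satisfies $\LabelExit(\cdot,\ExitStatus)$, since any solution of $\Cex$ restricts to a solution of $\Ckt$ hence contains $M$, and conversely $M$ itself solves $\Cex$ when the query body is unsatisfiable in $M$.) So the proof reduces to showing: the minimal model of $\Ckt$ contains a label $\sigma$ with $\LabelExit(\sigma,\ExitStatus)$ true $\iff$ there is an $(m,n)$-\lacetree of $P$ with exit status in $\ExitStatus$.

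Next I would discharge this equivalence using \Cref{thm:lab} together with the semantics of the $\LabelExit$ predicate. By \Cref{thm:lab}, $\invCex(\sigma)$ holds in the minimal model iff $\sigma$ labels a node in some $(m,n)$-\lacetree \emph{prefix} $\ktree$. The predicate $\LabelExit(\sigma,\ExitStatus)$ should be read as asserting that $\sigma$ contains the last frame of a terminated lace whose exit status lies in $\ExitStatus$ — in particular its top frame is terminal with a status in $\ExitStatus \subseteq \EStatus \setminus \{\mathbf N\}$. For the ($\Leftarrow$) direction: given an $(m,n)$-\lacetree $\ktree$ with $\exitst(\ktree) \in \ExitStatus$, the node $u$ holding the last frame of the lace has a label $\sigma$ with $\invCex(\sigma)$ (by \Cref{thm:lab}, since a full \lacetree is in particular a prefix) and $\LabelExit(\sigma,\ExitStatus)$ holds by the definition of exit status of a \lacetree; hence the query body is satisfiable and $\Cex$ is unsatisfiable. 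For the ($\Rightarrow$) direction: if $\Cex$ is unsatisfiable, then some $\sigma$ in the minimal model of $\Ckt$ satisfies $\LabelExit(\sigma,\ExitStatus)$; by \Cref{thm:lab} this $\sigma$ labels a node $u$ in some \lacetree prefix $\ktree$; because $\sigma$'s top frame is terminal with a non-$\mathbf N$ status, that frame is the last frame of $\ktree$'s lace (recall only the last lace frame may carry a status $\neq\mathbf N$), so $\ktree$ is already a complete \lacetree — no label overflow truncation is pending — and $\exitst(\ktree)\in\ExitStatus$, giving a positive instance.

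The one genuine subtlety — and the step I expect to need the most care — is the gap between \emph{\lacetree prefixes} (which is what \Cref{thm:lab} talks about) and \emph{\lacetrees} (which is what the exit status problem asks about). A prefix can end mid-execution at an arbitrary frame with status $\mathbf N$; only when the terminal frame itself carries one of $\mathbf C,\mathbf E,\mathbf O,\mathbf M$ is the prefix forced to be a genuine, maximal \lacetree. So I would include an explicit lemma-style remark: if $\sigma$ labels a node in a \lacetree prefix $\ktree$ and the top frame of $\sigma$ has exit status $\neq \mathbf N$, then $\ktree$ is a \lacetree (equivalently, a prefix that truncates nothing), using the stated fact that a status other than $\mathbf N$ terminates the lace. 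This is what lets me promote the prefix produced by \Cref{thm:lab} to an actual \lacetree in the ($\Rightarrow$) direction. Everything else is routine: unfolding the definition of $\LabelExit$, and the standard minimal-model/query argument already justified by the references in \Cref{sec:preliminaries}.
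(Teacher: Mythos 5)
Your proposal is correct and follows essentially the same route as the paper: both reduce the statement to Theorem~\ref{thm:lab} plus the standard observation that adding the query \texttt{(VI)} preserves satisfiability iff no label in the minimal model of $\Ckt$ satisfies $\LabelExit(\cdot,\ExitStatus)$ (the paper unpacks this by exhibiting the minimal model as a solution in one direction and noting that any model contains it in the other). Your explicit remark that a \lacetree prefix whose label carries a terminal frame is already a complete \lacetree is a welcome refinement of a step the paper leaves implicit, but it does not change the structure of the argument.
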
 

\subsection{Verifying Memory Safety}
\label{sec:memory safety}
We begin by establishing two key theorems linking the exit status problem to memory safety, forming the basis for our method's correctness.

\begin{theorem}\label{thm:exitstatus-positive--memorysafety-negative}
    If the answer to the exit status problem $(P,m,n,\{ \mathbf{E} \})$ is positive,
    then the answer to the memory safety 
    for $P$ is negative.
\end{theorem}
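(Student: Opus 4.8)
The statement is essentially a corollary chaining together results already established in the excerpt, so the plan is to unwind the definitions rather than do any new technical work. First I would assume that the exit status problem $(P,m,n,\{\mathbf{E}\})$ has a positive answer. By the definition of the exit status problem, this means there exists an $(m,n)$-\lacetree $\ktree$ of $P$ with $\exitst(\ktree) = \mathbf{E}$. Since $\ktree$ is a \lacetree of $P$, by \Cref{thm:computable} there is an execution $\pi = \mathit{exec}(\ktree)$ of $P$ with $\ktree \in \KT(\pi, m, n)$.

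Next I would invoke \Cref{thm:kt-exits}, item~\ref{prop:tree-error}: since $\exitst(\ktree) = \mathbf{E}$ and $\ktree \in \KT(\pi, m, n)$, the execution $\pi$ ends in an error configuration. By the definition of memory safety (Problem~2), a program is memory safe only if \emph{all} its executions terminate without reaching an error configuration; having exhibited an execution $\pi$ of $P$ that ends in an error configuration, $P$ is not memory safe, i.e., the answer to the memory safety problem for $P$ is negative.

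The only thing worth flagging is that the argument relies on \Cref{thm:computable} guaranteeing that every \lacetree of $P$ actually \emph{comes from} some genuine execution of $P$ (the $\mathit{exec}$ function), not merely that executions map to \lacetree{s}; this is the direction that makes the soundness claim meaningful, and it is exactly what \Cref{thm:computable} provides. There is no real obstacle here — the proof is a two-line composition of \Cref{thm:computable} and \Cref{thm:kt-exits}(\ref{prop:tree-error}) — so I would simply write it out in that form, perhaps noting in passing that one could alternatively route through \Cref{thm:exit status problem} to connect the exit status problem to unsatisfiability of $\Cex$, but that detour is not needed for this particular implication.
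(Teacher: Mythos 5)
Your proof is correct and follows exactly the route the paper takes: the paper dispatches this theorem as ``an immediate consequence of Claim~\ref{prop:tree-error} of Theorem~\ref{thm:kt-exits},'' and your write-up simply makes explicit the (necessary, and correctly identified) intermediate appeal to \Cref{thm:computable} to obtain an actual execution $\pi$ with $\ktree \in \KT(\pi,m,n)$ before applying that claim.
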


\begin{restatable}
{theorem}{thmexitstatustomemorysafety} 
\label{thm:exitstatus-to-memorysafety}
    If the answer to the exit status problem $(P,m,n,\{ \mathbf{O}, \mathbf{M}, \mathbf{E} \})$ is negative,
    then the answer to the memory safety problem 
    for $P$ is positive.
\end{restatable}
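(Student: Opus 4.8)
The plan is to prove the contrapositive: assuming $P$ is \emph{not} memory safe, exhibit an $(m,n)$-\lacetree of $P$ whose exit status lies in $\{\mathbf{O},\mathbf{M},\mathbf{E}\}$, which makes the answer to the exit status problem $(P,m,n,\{\mathbf{O},\mathbf{M},\mathbf{E}\})$ positive. If $P$ is not memory safe, then by definition there is an execution $\pi$ that either is infinite or ends in an error configuration. First I would invoke the fact (established earlier in \Cref{sec:lace}) that $\KT(\pi,m,n)$ is nonempty for every execution $\pi$ — i.e., every execution admits at least one $(m,n)$-\lacetree — so fix some $\ktree \in \KT(\pi,m,n)$. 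Then I would apply \Cref{thm:kt-exits}: by item~\ref{prop:infinite}, if $\pi$ is infinite then $\exitst(\ktree) = \mathbf{O}$; by item~\ref{prop:execution-error}, if $\pi$ ends in an error configuration then $\exitst(\ktree) \in \{\mathbf{E},\mathbf{O},\mathbf{M}\}$. In either case $\exitst(\ktree) \in \{\mathbf{O},\mathbf{M},\mathbf{E}\}$, so $\ktree$ witnesses a positive answer to the exit status problem, contradicting the hypothesis. Hence $P$ is memory safe.

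The only genuine subtlety — and the step I'd want to state carefully — is the claim that $\KT(\pi,m,n)$ is nonempty for \emph{every} execution $\pi$, including infinite ones and ones that overrun the label budget $n$ or the allocation budget $m$. This is exactly what the label-overflow (status $\mathbf{O}$) and out-of-memory (status $\mathbf{M}$) frames are designed to guarantee: the inductive labeling procedure of \Cref{sec: labelling of lacetrees} never gets stuck, because whenever a log would exceed $n$ frames it uses the $(n+1)$\textsuperscript{th} frame to record a label overflow and terminates the lace, and whenever $\New$ fails it records $\OOM$. For an infinite $\pi$, since the program visits nodes and the lace grows without bound but each node's log is capped at $n+1$ frames while the backbone is finite, some log must overflow after finitely many steps, so a \lacetree for a finite prefix of $\pi$ exists and \emph{is} in $\KT(\pi,m,n)$ (recall from \Cref{thm:computable} that a \lacetree ending in label overflow relates to every execution sharing that prefix, hence to $\pi$). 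So strictly, for the infinite case one takes a \lacetree of the appropriate finite prefix; \Cref{thm:kt-exits}\ref{prop:infinite} already packages this and yields $\exitst(\ktree) = \mathbf{O}$ directly.

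I expect the main obstacle to be purely presentational rather than mathematical: making sure the three cases of \Cref{thm:kt-exits} are glued together so that the disjunction ``$\pi$ infinite or $\pi$ ends in error'' cleanly maps onto ``$\exitst(\ktree) \in \{\mathbf{O},\mathbf{M},\mathbf{E}\}$'', and being explicit that emptiness of $\KT(\pi,m,n)$ is never an issue. No new machinery is needed — the proof is a two-line contrapositive argument once \Cref{thm:kt-exits} and the non-emptiness of $\KT(\pi,m,n)$ are in hand.

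\begin{proof}
We prove the contrapositive. Suppose $P$ is not memory safe; we show that the answer to the exit status problem $(P,m,n,\{\mathbf{O},\mathbf{M},\mathbf{E}\})$ is positive, i.e., there is an $(m,n)$-\lacetree of $P$ with exit status in $\{\mathbf{O},\mathbf{M},\mathbf{E}\}$. By the definition of memory safety, there is an execution $\pi$ of $P$ that is either infinite or ends in an error configuration. The set $\KT(\pi,m,n)$ is nonempty, since the inductive labeling of \Cref{sec: labelling of lacetrees} never gets stuck: a log that would exceed $n$ frames uses its $(n+1)$\textsuperscript{th} frame to signal a label overflow and ends the lace, and a failed $\New$ records an $\OOM$ event; so pick any $\ktree \in \KT(\pi,m,n)$. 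We apply \Cref{thm:kt-exits}. If $\pi$ is infinite, then by item~\ref{prop:infinite} we have $\exitst(\ktree) = \mathbf{O}$. If $\pi$ ends in an error configuration, then by item~\ref{prop:execution-error} we have $\exitst(\ktree) \in \{\mathbf{E},\mathbf{O},\mathbf{M}\}$. In both cases $\exitst(\ktree) \in \{\mathbf{O},\mathbf{M},\mathbf{E}\}$, so $\ktree$ witnesses a positive answer to the exit status problem $(P,m,n,\{\mathbf{O},\mathbf{M},\mathbf{E}\})$. This contradicts the hypothesis that this answer is negative; hence $P$ is memory safe.\qed
\end{proof}
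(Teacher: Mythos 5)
Your proof is correct and follows essentially the same route as the paper: both arguments reduce the claim to Claims~\ref{prop:execution-error} and~\ref{prop:infinite} of \Cref{thm:kt-exits}, concluding that every execution must terminate cleanly. Your explicit justification that $\KT(\pi,m,n)$ is nonempty for every execution $\pi$ is a point the paper's proof leaves implicit, and is a welcome clarification rather than a deviation.
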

\begin{wrapfigure}[11]{R}{7.3cm}
\vspace{-1.2cm}
\centering
\hspace*{-0.14\linewidth}
\begin{tikzpicture}[auto, node distance=1.1cm, shorten >= 1pt, shorten <= 2pt,every node/.style={font=\scriptsize}]
\node (input) at (0,0){};

\node [draw,
	fill=gray!25,
 	minimum width=4.0cm, 
	minimum height=0.5cm,
	below=0.6cm of input
]  (init) {$m\leftarrow m_0, n\leftarrow n_0$}; 

\node [draw,
	fill=gray!25,
 	minimum width=4.0cm, 
	minimum height=0.5cm,
	below=0.5cm of init
]  (error) {$\ExitStatusP{\{\mathbf{E}\}}$};

 \draw[-stealth] (init.south) -- (error.north) 
 	node[midway,right]{};

 \draw[-stealth] (error.east) -- ++ (0.65,0.0)
           node[at end,above right, xshift=-5.0pt, yshift=-2.3pt]{
           { \bf Memory}}
           node[at end,below right, xshift=-5.0pt, yshift=2.3pt]{
           { \bf unsafe}}
           node[midway, above]{\bf \Yes};

 \node [draw,
 	fill=gray!25, 
 	minimum width=4.0cm, 
 	minimum height=0.5cm,
 	below=0.5cm of error
 ] (correct) {$\ExitStatusP{\{\mathbf{M}\}}$};

 \draw[-stealth] (input.south) -- (init.north)
 	node[midway,right]{$P,m_0,n_0$};

 \draw[-stealth] (error.south) -- (correct.north) 
 	node[midway,right]{\No};

  \draw[-stealth] (correct.west) 
         -- ++ (-0.85,0) node[midway, above]{\Yes}
         node[midway, below]{$m\!\mathrel{+}\mathrel{+}$}
         -- ++ (0, 0.5) 
         |-(error.west)node[pos=0.5, above,text width=1.8cm, text centered]{};

 \node [draw,
 	fill=gray!25, 
 	minimum width=4.0cm, 
 	minimum height=0.5cm,
 	below=0.5cm of correct
 ] (nottree) {$\ExitStatusP{\{\mathbf{O}\}}$};

 \draw[-stealth] (nottree.west) 
         -- ++ (-0.85,0) node[midway, above ]{\Yes}
         node[midway, below ]{ $n\!\mathrel{+}\mathrel{+}$}
         -- ++ (0, 0.5) 
         |-(error.west)node[pos=0.5, above,text width=1.8cm, text centered]{};

 \draw[-stealth] (correct.south) -- (nottree.north) 
 	node[midway,right]{\No};


\draw[-stealth] (nottree.east) -- ++ (0.65,0.0)
           node[at end,above right, xshift=-4pt, yshift=-2pt]{
           { \bf Memory}}
           node[at end,below right, xshift=-4pt, yshift=2pt]{
           { \bf safe}}
          node[midway,above]{\No};
\end{tikzpicture}
\label{fig:MemSafe}
\end{wrapfigure}
\noindent{\bf Algorithm~\MemSafe:} We outline our algorithm on the right. 
We are given a program $P$ and initial values $m_0$ and $n_0$ for the two
parameters $m$ and $n$ of {\lacetree}s. 
Verification starts by solving the problem $\ExitStatusP{\{\mathbf{E}\}}$ \\
to detect null-pointer dereference errors. 
If the answer is positive, by  \Cref{thm:exitstatus-positive--memorysafety-negative} $P$ violates memory safety.
%
Otherwise, memory safety is not guaranteed, as the current values of $m$ and $n$ may not cover all executions. 
%
To address this, we solve $\ExitStatusP{X}$ with:
\begin{enumerate}
    \item 
    $X=\{\mathbf{M}\}$ 
    to detect out-of-memory failure from $\New$ allocations, and 
    \item 
    $X=\{\mathbf{O}\}$ to detect label overflow errors.
\end{enumerate} 
If both instances 
    are negative,  \Cref{thm:exitstatus-to-memorysafety} ensures that $P$ is memory-safe.
Otherwise, we increment $m$ or $n$ to broaden coverage and restart.
This may continue indefinitely if the semi-algorithm never terminates (due to undecidability) or no parameter values suffice to establish memory safety.    

\begin{theorem}
Algorithm~{\MemSafe} is a sound solution to the memory safety problem,
i.e., if it terminates,
it yields the correct answer. 
\end{theorem}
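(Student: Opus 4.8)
The plan is to reduce the statement to a short case analysis on how Algorithm~\MemSafe\ can halt, invoking the two bridge results \Cref{thm:exitstatus-positive--memorysafety-negative} and \Cref{thm:exitstatus-to-memorysafety} for the two possible exits. First I would observe, by inspection of the control flow pictured alongside the algorithm, that there are exactly two halting points: the algorithm outputs \emph{memory unsafe} immediately after a call $\ExitStatusP{\{\mathbf{E}\}}$ returns \Yes, and it outputs \emph{memory safe} after a call $\ExitStatusP{\{\mathbf{O}\}}$ returns \No\ while, in the \emph{same} iteration, the earlier calls $\ExitStatusP{\{\mathbf{E}\}}$ and $\ExitStatusP{\{\mathbf{M}\}}$ have also returned \No. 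Every other outcome of a subcall either issues another subcall or re-enters the loop with $m$ or $n$ incremented, so no further halting branches need be considered. In either halting case, let $(m,n)$ denote the parameter values in force during the terminating iteration; since \Cref{thm:exitstatus-positive--memorysafety-negative} and \Cref{thm:exitstatus-to-memorysafety} are stated for all $m,n\in\nats$, the particular values reached after finitely many increments are immaterial.

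For the \emph{memory unsafe} exit, the branch guard guarantees that $\ExitStatusP{\{\mathbf{E}\}}$ has a positive answer for the current $(m,n)$, so \Cref{thm:exitstatus-positive--memorysafety-negative} yields directly that $P$ is not memory safe, matching the output. For the \emph{memory safe} exit, I would first record an elementary fact: the exit status problem distributes over unions of status sets, because its defining question is the \emph{existence} of an $(m,n)$-\lacetree whose exit status lies in the given set, so $\ExitStatusP{X\cup Y}$ is positive iff $\ExitStatusP{X}$ or $\ExitStatusP{Y}$ is. Hence the three negative answers to $\ExitStatusP{\{\mathbf{E}\}}$, $\ExitStatusP{\{\mathbf{M}\}}$, and $\ExitStatusP{\{\mathbf{O}\}}$ collected in a single iteration are jointly equivalent to a negative answer to $\ExitStatusP{\{\mathbf{O},\mathbf{M},\mathbf{E}\}}$, whereupon \Cref{thm:exitstatus-to-memorysafety} gives that $P$ is memory safe, again matching the output. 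This exhausts the halting branches.

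It remains to justify the hypothesis ``if it terminates'': termination of the algorithm entails that each of the finitely many invocations of the exit-status subroutine terminated with an answer, and each such invocation is a decision step obtained via the reduction of \Cref{thm:exit status problem} to \CHC\ satisfiability, so under a sound \CHC\ engine every returned answer is correct and the two bridge theorems may legitimately be applied to it. I do not expect a genuine obstacle here: the substantive work has already been discharged in \Cref{thm:exitstatus-positive--memorysafety-negative}, \Cref{thm:exitstatus-to-memorysafety}, and \Cref{thm:exit status problem}, and what is left is the bookkeeping of pairing each halting branch with the right theorem plus the trivial union-distributivity of the exit-status problem. The only point demanding a little care is that the \emph{memory safe} branch be shown to be reachable \emph{only} when all three relevant subproblems come back negative within one and the same iteration — which is exactly how the control flow of Algorithm~\MemSafe\ is wired.
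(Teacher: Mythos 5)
Your proposal is correct and follows essentially the same route as the paper, which justifies the theorem directly from \Cref{thm:exitstatus-positive--memorysafety-negative} for the \emph{unsafe} exit and from \Cref{thm:exitstatus-to-memorysafety} for the \emph{safe} exit, with the (straightforward) observation that negative answers to the three singleton exit-status queries are jointly equivalent to a negative answer for $\{\mathbf{O},\mathbf{M},\mathbf{E}\}$. Your explicit note on the union-distributivity of the exit-status problem is a small piece of bookkeeping the paper leaves implicit, but it is exactly the right glue between the algorithm's three separate queries and the hypothesis of \Cref{thm:exitstatus-to-memorysafety}.
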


\subsection{Invariant Structure 
}\label{sec:invariants}
In this section, we examine the essential properties that a solution to the \CHC system presented in Fig.~\ref{fig:CHC-rules} must satisfy, with particular emphasis on the structure and complexity of the required invariants. To ground the discussion, we refer to the running example introduced in Section~\ref{sec:intro} and shown in Fig.~\ref{fig:running example},  representative of a broad class of procedures manipulating tree data structures. This example highlights both the challenges and the recurring structural patterns encountered in the synthesis of suitable invariants for \CHC systems.

The minimal model of 
$\invCex$ for logs of knitted trees (from lists containing the $\mathit{key}$ value) results in labels that can be classified as follows:
\begin{description}
    \item[Initial Node:] the first node of the lists is shaped as $u_1$  (see Fig.~\ref{fig:kt1}).
    
    \item[Intermediate Nodes:] one or more nodes like $u_2$ (depending on the length of the input list).
    
    \item[Target Node:] the node containing the special value, $u_3$.
    
    \item[Post-Target Node:] the node immediately following the special value, $u_4$.
    
    \item[Subsequent Nodes:] remaining nodes, similar to $u_5$.
\end{description}

Labels associated with nodes of the same category differ only in their data fields ($\mathit{val}$ and $\mathit{key}$), while all other fields (e.g., program counters) are equal across nodes  (for instance, nodes like $u_2$ consistently exhibit program counter $3$ in the second frame). Only the data fields
$\mathit{val}$ and $\mathit{key}$ depend on the input list, and their constraints are simple:
\begin{itemize}
    \item In each node, both $\mathit{val}$ and $\mathit{key}$ retain the same value across all frames.

    \item In an intermediate node, $\mathit{val} \neq \mathit{key}$;

    \item In the target node ($u_3$), $\mathit{val} = \mathit{key}$.
\end{itemize}
Input lists that do not contain the $\mathit{key}$ induce additional node types, subject to analogous constraints.
In Section~\ref{sec:conclusions}, we briefly outline how the regularities discussed above could be exploited 
in an implementation of our framework.



\section{Related Work}\label{sec:related_work}
Our work is related to many works in the literature in different ways. Here we focus on those that seem to be the closest to the results presented in this paper.

Our approach uses {\CHC} engines for backend analysis, similar to other verification methods~\cite{DBLP:journals/jar/ChampionCKS20,DBLP:conf/pldi/FedyukovichAB17,DBLP:journals/corr/GarocheKT16,DBLP:conf/cav/GurfinkelKKN15,DBLP:conf/fm/HojjatKGIKR12,DBLP:conf/cav/KahsaiRSS16,DBLP:conf/pldi/KobayashiSU11,DBLP:journals/toplas/MatsushitaTK21,DBLP:conf/cav/Gurfinkel22,DBLP:conf/aaai/FaellaP23,DBLP:conf/fmcad/EsenR22,DBLP:conf/esop/ItzhakySV24,DBLP:conf/ecai/FaellaP24}. {\CHC}s serve as an intermediate 
language, allowing us to focus on proof rules while solvers implement algorithms within a standard framework.
A primary challenge is encoding heap-allocated mutable data structures. While array theory is often used (e.g.,~\cite{DBLP:conf/fmcad/KomuravelliBGM15,DBLP:journals/fuin/AngelisFPP17a}), it can result in complex {\CHC}s. Our approach uses simple theories for basic data types, avoiding array theory unless necessary.
Traditional heap program analysis often relies on abstractions like shape analysis~\cite{DBLP:conf/popl/SagivRW99} to scale. Refinement types and invariants can be used to transform complex data structures, avoiding array theory (e.g.~\cite{DBLP:conf/pldi/RondonKJ08,DBLP:conf/sas/BjornerMR13,DBLP:conf/sas/MonniauxG16,DBLP:conf/lpar/KahsaiKRS17}).
This can lead to over-approximation in {\CHC}s and false positive by replacing heap operations with local object assertions, potentially missing global invariants but enabling efficient verification when local invariants suffice. A recent proposal suggests using an SMT-LIB theory of heaps for {\CHC}s to standardize heap data-structure representation~\cite{DBLP:conf/smt/EsenR22}.  

Our technique relates to tree automata, automata with auxiliary storage, and bounded tree-width graphs representing their executions. It also relates to Courcelle's theorem (proof), which reduces analysis to tree automata emptiness~\cite{DBLP:series/txtcs/FlumG06}. Inspired by Alur and Madhusudan's nested words to represent pushdown automata  executions~\cite{DBLP:journals/jacm/AlurM09}, and  their extensions for multistack and distributed automata by Madhusudan and Parlato~\cite{DBLP:conf/popl/MadhusudanP11}, 
we represent tree-manipulating program executions as {\lacetree}s, where nodes are frames and edges are $\nextfr$ and $\prevfr$ frame fields. Similar to La~Torre et al.~\cite{DBLP:conf/mfcs/TorreNP14} 
for multistack pushdown automata, our approach provides tree decompositions with a bounded tree width.
Instead of using tree automata emptiness for reachability analysis, we leverage {\CHC} solvers to enable a tree automata-like method with
enhanced data reasoning.
 Additionally, like Heizmann et al.~\cite{DBLP:conf/cav/HeizmannHP13}, we use automata for program analysis but replace counterexample-guided abstraction refinement with precise {\lacetree} representations and {\CHC} solvers for approximation and refinement. 

Our work extends decidable methods for bounded-pass heap-manipulating programs by supporting a broader range of properties, potentially at the cost of undecidability.  Mathur et al.~\cite{DBLP:journals/pacmpl/MathurMKMV20} achieve decidable memory safety for forest-like initial heaps and single-pass traversals, building on uninterpreted coherent programs ~\cite{DBLP:journals/pacmpl/MathurMV19,DBLP:conf/tacas/MathurM020}. They handle memory freeing but leave support for more complex postconditions for future work.
Alur and Černý~\cite{DBLP:conf/popl/AlurC11} reduce assertion checking of single-pass list-processing programs to data string transducers, achieving decidability with a single advancing variable. 
This approach is less flexible than Mathur et al.’s, as it doesn’t address memory safety or heap shape changes and is limited to data ordering and equality without handling explicit memory freeing.

Heap verification has been extensively studied using decidable logics such as first-order logic with reachability~\cite{DBLP:journals/corr/abs-0904-4902}, {\sc Lisbq} in the {\sc Havoc} tool~\cite{DBLP:conf/popl/LahiriQ08}, and fragments of separation logic~\cite{DBLP:conf/fsttcs/BerdineCO04,DBLP:conf/cav/PiskacWZ13,DBLP:conf/cade/EchenimIP21}. Some approaches interpret bounded tree width data structures on trees~\cite{DBLP:conf/cade/IosifRS13,DBLP:conf/popl/MadhusudanPQ11}. While these logics are often restrictive, others methods handle undecidable cases using heuristics, lemma synthesis, and programmer annotations~\cite{DBLP:conf/vstte/BanerjeeBN08,DBLP:journals/jacm/BanerjeeNR13,DBLP:conf/ecoop/BanerjeeNR08,DBLP:conf/aplas/BerdineCO05,DBLP:conf/fmco/BerdineCO05,DBLP:journals/jacm/CalcagnoDOY11,DBLP:journals/scp/ChinDNQ12,DBLP:conf/pldi/ChuJT15,DBLP:conf/tacas/DistefanoOY06,DBLP:conf/nfm/JacobsSPVPP11,DBLP:journals/pacmpl/MuraliPBLM22,DBLP:conf/cav/NguyenC08,DBLP:conf/pldi/PekQM14,DBLP:conf/cav/PiskacWZ14,DBLP:conf/fm/TaLKC16}.
In contrast, our \lacetree encoding promotes a separation of concerns, offloading the algorithmic burden to the underlying \CHC solver.

\section{Conclusions and Directions for Future Research}\label{sec:conclusions}
We presented a foundational, compositional approach to verifying programs that manipulate tree data structures. By modeling executions as \lacetree{s} and encoding them as {\CHC}s,  verification reduces to \CHC\ satisfiability. This enables modular reasoning and supports simple invariants. Overall, our method offers a uniform and scalable framework for automating the verification of a broad class of tree-manipulating programs.

\medskip
\noindent{\bf Future Work}.
Our approach opens multiple research directions.

\smallskip
\noindent{\underline{\em Efficient Implementation}.}
Labels are currently handled by a single predicate, $\invCex$. 
Performance can be improved via case splitting---encoding enumerated fields into predicate names---to simplify invariants (see \Cref{sec:invariants}). Moreover, precomputing all possible configurations of the enumerated fields arising in every possible frame of the program, together with (an overapproximation of) the within-node and across-neighbor adjacency relations between frames would produce 
a larger set of significantly simplified {\CHC}s that enforce consistency of the data with the enumerated structure of the knitted tree.

\smallskip
\noindent{\underline{\em Beyond Memory Safety}.} Full correctness requires verifying structural and functional properties. Using {\em symbolic data-tree automata} ({\SDTA}s), which integrate well with {\CHC}-based verification~\cite{DBLP:conf/cav/FaellaP22}, we can formally specify pre/postconditions--e.g., that inputs form a red-black tree and outputs a sorted list. Preconditions are easy to encode 
as they involve only the first frame of each node in the {\lacetree}, while postconditions require more effort due to the complexity of encoding the output heap within the {\lacetree} structure. For set-like trees, it is also important to verify that operations such as insertion, deletion, and search preserve key invariants, which can be checked via {\lacetree} logs. Termination can be verified by ensuring that labels do not overflow.


\smallskip
\noindent{\underline{\em Deductive Verification}.}
Our methodology is particularly suited for deductive verification of procedures with linear-time complexity—i.e., those that traverse each node a bounded number of times. We aim to develop a verification framework where program correctness is established by breaking down verification conditions into preconditions and postconditions for code segments, with each segment provably executable in linear time. This would bridge the gap between our approach and classical deductive verification techniques.

\smallskip
\noindent{\underline{\em More Structures}.}
Our approach naturally extends to programs manipulating multiple data structures, especially those with bounded treewidth. 
While there is no general method for all combinations, many can be handled with suitable encodings. For example, a program that traverses a red-black tree in-order and inserts values into a singly linked list can be modeled using knitted trees with bounded labels; our method can then verify that the output list contains the input values in non-increasing order. Some scenarios require more inventive encodings. For instance, checking equality of two lists via separate dummy roots leads to unbounded log growth. Instead, modeling both as a single list of paired elements keeps the log size bounded and tractable.

As noted in the previous section, the graphs induced by \lacetree{s} have bounded treewidth, suggesting applicability to a broad range of structures, including arrays, doubly-linked lists, trees with parent pointers, and, more generally, any structure with bounded treewidth and a canonical tree decomposition.

\smallskip
\noindent{\underline{\em Program Synthesis}.}
We also plan to explore syntax-guided synthesis (SyGuS) of tree-manipulating code. By expressing correctness properties as {\SDTA}s and reducing synthesis to {\CHC} solving, we aim to generate correct-by-construction procedures, extending extending recent work on synthesis from specifications~\cite{DBLP:journals/pacmpl/PolikarpovaS19}.

\subsubsection{Acknowledgements}
We sincerely thank the anonymous reviewers for their insightful feedback and constructive suggestions, which significantly improved the quality of this paper.
This work was partially supported by INdAM-GNCS Project 2025 - CUP: E53C24001950001, 
and the MUR projects SOP (Securing sOftware Platforms, CUP: H73C22000890001) as
part of the SERICS project (n. PE00000014 - CUP: B43C22000750006), and MUR project FAIR (Future AI Research, CUP: E63C220 02150007).

\bibliographystyle{splncs04}
\bibliography{ref2}

\newpage
\appendix

\section{Encoding Restricted Recursion Mechanisms in the Programming Language}\label{app:recursion}
In Section~\ref{sec:programming language}, we introduced a simple programming language for heap-manipulating programs that does not directly support procedure calls. For non-recursive procedures, we can handle calls by inlining their code at each call site. In this section, we describe an approach to simulating a specific form of recursive procedures, commonly used in procedures that manipulate tree data structures, as found in the literature. 

We focus on procedures where the first parameter (later denoted by $\mathit{first}$)
is a pointer. 
The corresponding argument that is passed at runtime must satisfy the following conditions:
 \begin{enumerate} 
    \item It references a node that has not been previously used for invocation. 
    \item It corresponds to a pointer field of the current node (i.e., the node on which the procedure was last invoked). 
    \item Its value is unchanged from its original value at the beginning of the program. 
 \end{enumerate} 
The remaining parameters can be any data values or pointer variables. Additionally, if the procedure is invoked with a $\NIL$ pointer as the first parameter, it must not call another procedure before returning. The code handling this case must be guarded by an $\mathbf{if}$ statement that checks whether the first argument of the function is equal to or different from $\NIL$. These constraints are typically satisfied by
functions that manipulate Binary Search Trees ({\BST}s), red-black trees, AVL trees, and similar structures.

We assume that procedures do not return data or pointer values; instead, any return values are managed via global variables. We also assume that the $\mathbf{main}$ (i.e., outmost) procedure is never called.  

Under these constraints, recursive programs can be transformed into a program without procedures by extending the input tree's structure to include additional fields for local data and pointer variables used within the procedures. We also include additional fields to orchestrate the simulation of the control flow during calls and returns, thereby eliminating the need for a system stack. 
To formalize this approach, we define an extended heap as follows:

\begin{definition} An {\bf extended heap} is a tuple $\mathcal{H}_{\mathit{ext}} = (N, \dsig, \data, \pointers, \dsig_{\mathit{ext}},$ $ \data_{\mathit{ext}}, \pointers_{\mathit{ext}})$ where 
$(N, \dsig, \data,  \pointers)$ and $(N, \dsig_{\mathit{ext}}, \data_{\mathit{ext}},  \pointers_{\mathit{ext}})$ are heaps.
\end{definition}

\noindent
The extended data signature $\dsig_{\mathit{ext}}$  includes: 
\begin{description}
    \item[Local Variable Fields:] Fields for the distinct local variables of the procedures (including the procedure arguments). The types of these fields match the types of the corresponding local variables. 
    \item[Program Counter:] A field named $\mathit{return\_lab}$ to keep track of the next statement to execute within a procedure after returning  from the last ongoing procedure call.
    \item[Parent Field:] A pointer field that points to the parent node in the input tree, facilitating the return control. This field is initialized accordingly and is never modified in the translated code.
\end{description}


\subsection{Code to Code Translation}
Here, we outline a method to rewrite the code and entirely eliminate recursive procedures, resulting in a program that behaves equivalently to the original.

\paragraph{\bf Initialize the local variables of the root:}
We add a fresh assignment to set the local variables related the main procedure to point at $\rootptr$. This establishes the property that the first argument of the procedure refers to the node on which it is invoked. For other nodes, this assignment is performed before the simulation starts. 

\paragraph{\bf Local variable replacement:} Within a procedure's body, replace all occurrences of local variables as follows. If $\mathit{loc}$ is a local variable, replace it with $\mathit{first}\rightarrow\mathit{loc}$. This modification is applied throughout the procedure code except for the code block executed when the procedure is called with $\NIL$ as the first parameter. This ensures that local variables correctly reference the appropriate heap node.

\paragraph{\bf Consolidate into a single procedure:}
Combine all existing functions into a single procedure by removing individual function declarations. We assume that all statement labels, variable names, and function parameter names are distinct to avoid naming conflicts.

\paragraph{\bf Replace procedure calls:} This process is divided into four phases. 
Consider a procedure call statement, and assume that the first parameter of the calling procedure is $\mathit{first}$, whereas the arguments of this call are $\mathit{arg}_1,\ldots,\mathit{arg}_\ell$.
Recall that, according to Condition~2 at the beginning of this section,
$\mathit{arg}_1 = \field{first}{\pfield}$, for some pointer field $\pfield$.
We then perform the following steps, that add new statements in place of the
procedure call.

\paragraph{Phase 1 – Argument Assignment:}
Assign the values of the arguments of the called procedure to the
corresponding fields of the node referenced by $\mathit{arg}_1$.
Specifically, let $\mathit{param}_1,\ldots,\mathit{param}_\ell$ be the list of parameters of the callee, for each $i \in [\ell]$, add an assignment of the form:
$$\mathit{arg}_1 \rightarrow \mathit{param}_i \coloneq \mathit{arg}_i.$$  
\paragraph{Phase 2 – Set Return Label Assignment:} Assign the label of the statement immediately following the current function call to $\mathit{arg}_1\rightarrow\mathit{return\_lab}$. This sets up the point to which the program should jump after the simulated procedure execution corresponding to the current call:
$$\mathit{arg}_1\rightarrow\mathit{return\_lab} \coloneq \langle \text{label of the next statement in the orginal code} \rangle$$

\paragraph{Phase 3 – Jump to the First Statement of the Called Procedure:}
Transfer control in the monolithic program to the first statement of the called procedure via a $\GOTO$ statement:

$$\GOTO~\langle\text{label of the first statement of the called function}\rangle.$$

\paragraph{Phase 4 – Jump Back to the Caller Procedure:}
Replace $\mathbf{return}$ statements in the original code with a $\GOTO$ statement that uses the label stored in the auxiliary field of the current node set at the time of procedure invocation, namely:
$$\GOTO~ \mathit{first}\rightarrow\mathit{return\_lab}.$$

This systematic transformation effectively simulates recursive procedure calls without using actual function calls or a system stack. By extending the heap with additional fields to store local variables, return addresses, and parent pointers, the program maintains the necessary state to manage control flow and variable references during execution.

\section{Additional Proofs}

\subsection{Proofs for~\Cref{sec:lace}} \label{app:proof4}

\thmcomputable*
\begin{proof}
The first part of the statement (i.e., the computability of $\KT$) can be shown by inspecting the encoding described in Section~\ref{sec:lace}, with further details in Appendix~\ref{app:encoding}. One can clearly see that the definition of \lacetree is entirely constructive.
In fact, the content of each new frame in the lace can be computed from the program and the log of the current node and
of the neighboring nodes.

The second part of the statement requires a function $\mathit{exec}$ that maps \lacetree{s} back to executions. To this aim, we first prove two supporting claims.

Claim 1: We can retrieve the state of the heap from any {\lacetree} prefix.
We show that we can recover the state of the heap at each point of the lace of a \lacetree.
We select the nodes where $\activefield$ is $\true$ in their top frame. We then examine the logs of the selected nodes to determine the values of their fields. 
For any node $u$, the current value of its data field $\datafield$ is the value of the
$\datafield$ field in the topmost frame of the log.
The current value of a pointer field $\pfield$ (assumed to be the $j$\textsuperscript{th} field in $\pointers$) can be recovered as follows: 
\begin{itemize}
\item If $\pfield$ has never been updated, $u.\pfield$ points to the $j$\textsuperscript{th} child of $u$, if $u.j$ exists and is active in the \lacetree; otherwise, $u.\pfield$ points to $\NIL$.
\item If the last update in $u$'s log is $\symb{\pfield \coloneqq \NIL}$, then that pointer field is also $\NIL$. 
\item Otherwise, the last update is $\symb{\pfield \coloneqq p}$ for some pointer variable $p$. 
We navigate the lace backward using $\FindPtr$ until finding a frame that reports $\symb{p \coloneqq \here}$, say in the log of node $v$, and set $u.\pfield$ to $v$.
\end{itemize}
We denote by $\out(\ktree)$ the heap corresponding
to the \lacetree (prefix) $\ktree$, as described above.

Claim 2: We can retrieve the value of all pointer variables from any \lacetree prefix.
The $\isnil$ fields in the last frame of the \lacetree prefix indicate which pointer variables are $\NIL$.
We can find the current value of the other pointer variables $p \in \PV$, by retracing the lace backward until the most recent assignment to it (i.e., $\symb{p \coloneq \here}$).
We denote by $\mathit{pv}(\ktree)$ the map that assigns each $p \in \PV$ to the node of $\out(\ktree)$
pointed to by $p$ at the end of the lace of $\ktree$. 

We can now prove the second part of the statement.
    Let $\ktree = (K, \mu)$ be a data tree that is the \lacetree of some execution.
    To define $\mathit{exec}(\ktree)$, we start by recovering the initial program configuration
    $c_0 = (\mathcal{H}_0,\nu_p,\nu_d,\pc_0)$, where the initial heap $\mathcal{H}_0$ 
    can be obtained from Claim~1 above, and the initial value of the pointer variables $\nu_p$
    from Claim~2. The initial value of the program counter is simply the label of the first
    line in the program,
    and the initial value for the data variables $\nu_d$ can be read from the
    $d$ fields in the second frame of the root of $\ktree$.
    To obtain each subsequent program configuration, it is sufficient to scan the lace chronologically, iteratively identify the frame $f$ that marks the end of each statement\footnote{Recall that the execution of a statement may
    require multiple frames in the lace, especially if a rewind is involved.},
    and at that point build a new program configuration using Claims~1 and~2 above,
    and reading the current value of the data variables and program counter from $f$.
    %
\qed
\end{proof}

\thmktexits*
\begin{proof}
{\bf Claim~\ref{prop:tree-error}.}
To prove the first statement, assume that $\ktree$ has exit status $\bf{E}$,
occurring when the last frame $f$ in $\ktree$'s lace has the event $\Error$.
By Thm~\ref{thm:computable}, the last configuration of $\pi$
can be recovered from $\ktree$ via the $\mathit{exec}$ function. 
By definition of \lacetree{s}, 
the event $\Error$ arises only from a null pointer dereference, so
 $c$ is an error configuration.

{\bf Claim~\ref{prop:execution-error}.}
For the second statement, assume that $\pi$ ends in an error configuration
and the exit status of $\ktree$ is  neither $\bf O$ nor $\bf M$. We show that the exit status of $\ktree$ is $\bf E$.
By Thm~\ref{thm:computable}, the last configuration of $\pi$
can be recovered from $\ktree$ via the $\mathit{exec}$ function. 
Since $\pi$ ends in an error configuration, the program counter
in the last frame of the lace of $\ktree$ points to an instruction causing a null pointer dereference, leading to the event $\Error$
and hence to the exit status $\bf E$.

{\bf Claim~\ref{prop:infinite}.}
Each step of an execution adds at least one frame to the lace
of its \lacetree. Since the total number of nodes and frames in a {\lacetree}
is finite, the lace will  eventually occupy a frame with index $n+1$
in the log hosting that frame, 
resulting in the exit status $\bf O$. \qed
\end{proof}

\subsection{Proofs for Section~\ref{sec:compositionality}} \label{app:proof5}

\lemconsistentchild*
\begin{proof}
\sloppy
    We follow the definition of $\ConsistentChild$ presented in Appendix~\ref{app:detailed-encoding-consistency-child}.
    The $\allowbreak\ConsistentFirstFrame(\tau,j,\sigma)$ predicate holds true because it follows the definition
    of the $\activefield$ and $\activechild$ fields in the first frame of every node, as detailed in Sec.~\ref{sec:baselabel}.
    The next two blocks in the definition of $\ConsistentChild$
    are responsible for checking that every pair of frames connected
    by the $\prevfr$ and $\nextfr$ fields and belonging to different labels
    encodes a step in the current instruction, according to the rules detailed
    in \S\ref{sec: labelling of lacetrees}.
    The two connected frames may represent a step downward (first block)
    or upward (second block) in the \lacetree.
    
    Consider the downward direction, as the other case is symmetrical.
    A downward connection between the frames $\sigma^a$ and $\tau^b$ 
    is detected in two cases:
    \begin{enumerate}
        \item First, if $\sigma^a$ is occupied (i.e., not available),
    linked to $\tau^b$ through the $\nextfr$ field, and $\tau^b$
    is also occupied. Then, we apply the predicate $\Psi_\FunDown$ to
    check that the step is valid.
    Note that it is possible that in a valid
    \lacetree $\tau^b$ is available, even though $\sigma^a.\nextfr$ points to it, but only when $\sigma^a$ is the last frame of the lace.
    \item Second, when $\tau^b$ is occupied and linked to $\sigma^a$
    via its $\prevfr$ field. In that case, $\sigma^a$ \emph{must} be
    occupied in any valid \lacetree, and we again check the correctness of the step using $\Psi_\FunDown$.\qed
    \end{enumerate}    
    \fussy
\end{proof}

\subsection{Proofs for Section~\ref{sec:reasoning}} \label{app:proof6}

\thmlabels*
\begin{proof}
We present only the ``if'' direction of the theorem; the ``only if'' direction is provided in Section~\ref{sec:reasoning}. 

Let $t$ be a node in $\ktree$ with log $\sigma$. We proceed by induction on the length $\ell$ of the lace of $\ktree$. 

\emph{Base case} ($\ell=1$): The root of $\ktree$ has one frame in the lace. {\CHC}~I or {\CHC}~II inserts $\sigma$ into  $\invCex$, depending on whether 1 or 2 frames are unavailable in $\sigma$.

\emph{Inductive step} ($\ell>1$): Let $\ktree_{\ell-1}$ be $\ktree$ with its the last frame of the lace removed. 
By inductive hypothesis, its logs are in $\invCex$. If $\sigma$ is a log of $\ktree_{\ell-1}$, the claim holds. Otherwise, $\sigma$ includes the last frame in the lace of $\ktree$, meaning that the logs of $t$'s neighboring nodes (its parent and children) remain unchanged from $\ktree_{\ell-1}$. By inductive hypothesis those logs belong to $\invCex$. Depending on the direction taken in the last step, $\sigma$ is added to $\invCex$ via {\CHC}~III or {\CHC}~V, as detailed in the above description of the individual {\CHC}s.\qed 
\end{proof}

\thmExitStatusProblem*
\begin{proof}
Let $\prob=(P, k, m, n, \ExitStatus)$. We prove the theorem by establishing the equivalent statement: \emph{ $\prob$ admits a negative answer if and only if $\Cex$ is satisfiable.}

For the ``only if'' direction, if $\prob$ has a negative answer, no $(m,n)$-\lacetree of $P$ has an exit status in $\ExitStatus$. 
Consider the interpretation of $\invCex$ containing exactly the node labels from all $(m,n)$-\lacetree prefixes of $P$.
By Theorem~\ref{thm:lab}, this is the minimal model of $\Ckt$  and  satisfies {\CHC}s~I-V.
Since none of these labels have an exit status in $\ExitStatus$,
the interpretation also satisfies the query VI, proving that $\Cex$ is satisfiable. 

For the ``if'' direction, if $\Cex$ is satisfiable, the interpretation of $\invCex$ in any model of $\Cex$ has the property that no label in $\invCex$ has an exit status in $\ExitStatus$ ({\CHC}~VI).
Since $\Cex$ implies $\Ckt$, by Theorem~\ref{thm:lab}
such interpretation of $\invCex$
contains at least all the logs of the $(m,n)$-{\lacetree} (prefixes) of $P$. We conclude that the answer to the exit status problem is negative.\qed
\end{proof}

\subsection{Proofs for Section~\ref{sec:memory safety}}\label{app:proof7}

Theorem~\ref{thm:exitstatus-positive--memorysafety-negative} is an immediate consequence of Claim~\ref{prop:tree-error} of Theorem~\ref{thm:kt-exits}.

\thmexitstatustomemorysafety*
\begin{proof}
Since no \lacetree has exit status $\mathbf{O}$, Claim~\ref{prop:infinite} of Theorem~\ref{thm:kt-exits} ensures that no execution is infinite.
Similarly, since no \lacetree has an exit status 
in $\{ \mathbf{O}, \mathbf{M}, \mathbf{E} \}$,
Claim~\ref{prop:execution-error} guarantees no execution ends in an error.
Thus, all executions reach a final configuration.\qed
\end{proof}

\section{Detailed Encoding}\label{app:encoding}

In this section, we describe in detail the predicates and functions appearing as constraints in the {\CHC}s
that recognize knitted-trees.

Consider again the system of {\CHC}s described in Figure \ref{fig:CHC-rules}.
Those {\CHC}s are based on the predicates
$\FirstFrame$, $\startframe$, $\ConsistentChild$, $\Psi_\FunInt$,
$\Psi_\FunDown$, $\Psi_\FunUp$, and $\LabelExit$,
each described in one of the following sections.

\subsection{The First Frame of Non-Root Nodes} \label{app:first-frame}

The predicate in this section is meant to constrain the first frame
of every non-root node of a \lacetree.
The only property to be enforced is a relationship
between the fields $\activefield$ and $\activechild$:
an auxiliary node ($\activefield = \false$) that is not the root
must be a leaf of the \lacetree.
Hence, no allocation of new nodes can be performed in this node.
This is enforced by setting the $\activechild_j$ flags to $\true$
for all indices $j \in [k+1,k+m]$.
Vice versa, all active nodes start with $m$ auxiliary children
with indices $k+1,\ldots,k+m$, available for allocation.
We obtain the following predicate:
\begin{align*}
    \underline{\FirstFrame}(f) \defeq\: 
    &\Big( f.\activefield \wedge {\textstyle\bigwedge_{j \in [k+1,k+m]}} \neg f.\activechild_j \Big) \vee \\
    &\Big( \neg f.\activefield \wedge {\textstyle\bigwedge_{j \in [k+1,k+m]}} f.\activechild_j \Big)
    \,.
\end{align*}

\subsection{The Start of the Lace} \label{app:start}

The predicate $\startframe$ sets constraints on the first two frames at the root of any {\lacetree}. These frames must agree on the fields $\activefield$ and $\datafield$.
The $\prevfr$ field of the second frame points to itself through the value $(\fwd,2)$.
For non-empty input trees, the first frame must be active and the second frame makes $\rootptr$ point to the root node, and $\activechild$ is non-deterministic on the first $k$ children, constraining only the children from index $k+1$ to $k+m$ as inactive.
For empty input trees, the first frame is inactive and the second frame sets $\rootptr$ to $\NIL$, and $\activechild$ requires all children to be inactive. 
In both cases, $\pc^2$ is set to the first program statement label, with all the other pointer variables set to $\NIL$.

\begin{align*}
\underline{\startframe}(\sigma) \defeq\, & \mathit{initial}(\sigma) \,\wedge \\
&\sigma.\activefield^2 = \sigma.\activefield^1 \,\wedge\,
\sigma.\datafield^2 = \sigma.\datafield^1 \wedge \\
&\sigma.\pc^2 = 0 \,\wedge\,
{\textstyle\bigwedge_{p \in \PV \setminus \{\rootptr\}}}\, \sigma.\isnil_{p}^2 \,\wedge \\
&\Big[ \, \Big( \sigma.\activefield^1 \,\wedge\, \sigma.\instr^2 = \symb{\rootptr \coloneq \here} \,\wedge\, \neg \sigma.\isnil_{\rootptr}^2 \,\wedge &&\text{\small\textcolor{gray}{Non-empty input tree}} \\
&\:\:\:\:\:{\textstyle\bigwedge_{j \in [k+1,k+m]}} \neg\sigma.\activechild^2_j \Big) \:\:\vee  \\
&\:\:\: \Big( \neg \sigma.\activefield^1 \wedge \sigma.\instr^2 = \NOP \,\wedge\, \sigma.\isnil_{\rootptr}^2 \,\wedge 
&&\text{\small\textcolor{gray}{Empty input tree}} \\
&\:\:\:\:\:{\textstyle\bigwedge_{j \in [k+m]}} \neg\sigma.\activechild^2_j \Big) \, \Big] \,,
\end{align*}
where the $\mathit{initial}$ predicate describes the distinguishing feature of the
initial frame of a lace -- namely, having itself as predecessor:
$$
\underline{\mathit{initial}}(\sigma) \defeq \neg \sigma.\avail^2 \wedge (\sigma.\prevfr^2 = (\fwd,2)) \,.
$$

\subsection{Lace Termination} \label{app:lace-end}

The function $\LabelExit(\sigma)$ returns the exit status
of any frame in $\sigma$, assuming that at most one frame in a label
is terminal. 
In the following, we denote by $P(i)$ the instruction located
at program counter $i$ in the program $P$.

\begin{align*}
\underline{\mathit{continues}}(f) &\defeq\: \neg f.\avail \wedge \FrameExit(f) = N \\
\underline{\FrameExit}(f) &\defeq 
   \begin{cases}
      C &\text{if } P(f.\pc) = \exit \\
      E &\text{if } f.\instr = \Error \\
      M &\text{if } f.\instr = \OOM \\      
      N &\text{otherwise.}
   \end{cases} \\
\underline{\LabelExit}(\sigma) &\defeq 
   \begin{cases}
      C &\text{if } \bigvee_{i \in [2,n]} \big( \neg\sigma.\avail^i \wedge \FrameExit(\sigma^i) = C \big) \\
      E &\text{if } \bigvee_{i \in [2,n]} \big( \neg\sigma.\avail^i \wedge \FrameExit(\sigma^i) = E \big) \\
      M &\text{if } \bigvee_{i \in [2,n]} \big( \neg\sigma.\avail^i \wedge \FrameExit(\sigma^i) = M \big) \\      
      O &\text{if } \neg \sigma.\avail^{n+1} \\
      N &\text{otherwise.}
   \end{cases} \\
\underline{\LabelExit}(\sigma, \ExitStatus) &\defeq 
\big( \LabelExit(\sigma) \in \ExitStatus \big)
\end{align*}

\subsection{Checking Parent-Child Consistency}\label{app:detailed-encoding-consistency-child}

The $\ConsistentChild(\tau,j,\sigma)$ 
predicate checks that the label $\tau$ of the $j$-th child
is consistent with the label $\sigma$ of the parent.
It uses the predicates $\ConsistentFirstFrame$, $\Psi_\FunDown$, and $\Psi_\FunUp$,
defined later.

{\small
\begin{align*}
\underline{\ConsistentChild}&(\tau, j, \sigma) \defeq\, 
\ConsistentFirstFrame(\tau,j,\sigma) \,\wedge \\
%
%
&\text{\small\textcolor{gray}{Consistency of steps, when going down:}} \\
&\bigwedge_{a \in [2,n] \atop b \in [2,n+1]} \Big[
\Big( \, \big( \neg\sigma.\avail^a \wedge \sigma.\nextfr^a = (j, b) \wedge \neg\tau.\avail^b \big) 
\,\vee \\[-2ex] 
&\hspace{1.5cm} \big( \neg\tau.\avail^b \wedge \tau.\prevfr^b = (\uparrow, a) \big) \, \Big)
\rightarrow \Psi_\FunDown(\sigma^{\leq a}, j, \tau^{<b}, \tau^b) 
\Big] \,\wedge \\[1ex]
&\text{\small\textcolor{gray}{Consistency of steps, when going up:}} \\
&\bigwedge_{a \in [2,n] \atop b \in [2,n+1]} \Big[
\Big( \, \big( \neg\tau.\avail^a \wedge \tau.\nextfr^a = (\uparrow, b) \wedge
\neg\sigma.\avail^b \big) \,\vee \\[-2ex] 
&\hspace{1.5cm} \big( \neg\sigma.\avail^b \wedge \sigma.\prevfr^b = (j, a) \big) \, \Big)
\rightarrow \Psi_\FunUp(\tau^{\leq a}, j, \sigma^{<b}, \sigma^b) \Big].
\end{align*}
}

It is worth pointing out that different child labels
may be consistent with the same parent label,
just like two different parent labels may be consistent with
the same child label.
First, $\ConsistentChild$ only constrains frames that
are directly involved in an interaction between parent and child,
i.e., pairs of frames belonging to these different nodes 
and adjacent in the lace.
Now, consider a pair of frames that are adjacent in the lace and belong
to two different nodes (parent and child). 
Here is a description of how $\ConsistentChild$ constrains each field in those frames:
\begin{description}
    \item{$\avail$:} this field must be false, because frames involved
    in an interaction belong to the lace;
    \item{$\activefield$:} this field is directly constrained only in the first frame of each label, where its value must follow the backbone rules described in Section~\ref{sec:backbone};
    this is ensured by the auxiliary predicate $\ConsistentFirstFrame$;
    the value in the subsequent frames of each label is only constrained indirectly,
    by the semantics of the $\New$ and $\Free$ instructions;
    \item{$\datafield$:} the default constraint for this field
    is to have the same value as the preceding frame in the
    same log; the only exception is the last step of the
    instruction $\field{p}{\data} \coloneq \mathit{exp}$, when the current
    value of $p$ has been ascertained via rewinding:
    in such a step the value of the $\datafield$ field of the new
    frame takes the value of the expression $\mathit{exp}$;
    \item{$\pc$:} the program counter must be consistent with the corresponding $\pc$ on the other side of the interaction;
    i.e., the two pc's must be equal if a rewind operation is 
    ongoing, and otherwise the instruction being executed is complete and the pc advances according to the control-flow graph of $P$;
    \item{$d$:} the value of this field in the two frames of an interaction is the
    same, except in the last step of an instruction of the type 
    $d \coloneq \field{p}{\data}$, when the current value of $p$ 
    has been ascertained via rewinding;
    \item{$\chg_p$:} the auxiliary predicates $\Psi_*$ constrain
    these fields to follow the rules described on page~\pageref{page:upd};
    \item{$\isnil_p$:} this field generally keeps its value in an interaction, except in the last step of an assignment to $p$;
    specifically, it can switch from $\true$ to $\false$
    in the last step of $p \coloneq q$ or $p \coloneq \field{q}{\pfield}$, when
    $q$ is not $\NIL$ and rewinding was needed to find its current
    value;
    moreover, it can switch from $\false$ to $\true$ 
    in the last step of $p \coloneq \field{q}{\pfield}$, when
    $q$ is not $\NIL$, rewinding was needed,
    and then $\field{q}{\pfield}$ is found to be $\NIL$;    
    \item{$\instr$:} the event in the second frame
    of an interaction is dictated by the program instruction being
    executed and by the meaning of each event; 
    for example, during a rewind process, the events
    $\RWD_i$ and $\RWD_{i,p}$ are used; the end of a rewinding
    may be marked by a event of the types 
    $\symb{\pfield \coloneq p}$, $\symb{\pfield \coloneq \NIL}$, 
    or $\symb{p \coloneq \here}$; in some cases, the final
    event is simply $\NOP$;
    \item{$\nextfr$ and $\prevfr$:} in a pair of interacting
    frames these fields must connect the two frames with each other;
    i.e., the $\nextfr$ field of the first frame points to the second
    frame, and the $\prevfr$ field of the second frame points to the first
    one.
\end{description}

The following auxiliary predicates are used within $\ConsistentChild$.
The $\ConsistentFirstFrame$ predicate describes the initial value of the 
$\activefield$ field, that starts as true on all nodes of the \lacetree that
correspond to nodes of the input tree, and false elsewhere.

\begin{align*}
    \underline{\ConsistentFirstFrame}(\tau, j, \sigma) \defeq\,
&\big( \mathit{initial}(\sigma) \vee \sigma.\activefield^1 \big) \,\wedge\,
\big( \tau.\activefield^1 \rightarrow \sigma.\activefield^1  \big) \,\wedge \\
&\big( j>k \rightarrow \neg\tau.\activefield^1 \big) \,\wedge\, 
\sigma.\activechild^1_j = \tau.\activefield^1 \,.
\end{align*}

We now present the three predicates that describe the next frame in the lace,
depending on the direction of the next step.
The first predicate, $\Psi_\FunInt$, describes internal steps.
It invokes the general predicate $\StepFrame$
that checks a single step in a lace, passing two adjacent frames from
the same log.
In addition, it makes sure that the $\chg_p$ flags
are all false after an internal step, according to the meaning of those
flags, as described on page~\pageref{page:upd}. \\
In the following formulas, for a label $\sigma$ and a frame $f$ we write
$\sigma \cdot f$ to denote the label obtained by the replacing the first
available frame of $\sigma$ with $f$.

\begin{align*}
\underline{\Psi_\FunInt}(\sigma, f) \defeq\; 
&\len(\sigma,a) \wedge \Continues(\sigma^{a}) \wedge \sigma.\nextfr^{a} = (\fwd, a+1) \,\wedge \\
&f.\prevfr = (\fwd, a) \wedge \StepFrame(\sigma, a; \sigma \cdot f, a+1)
\wedge {\textstyle\bigwedge_{p \in \PV}}\, \neg f.\chg_p \,.
\end{align*}

The following predicates $\Psi_\FunDown$ and $\Psi_\FunUp$ describe the next frame
in the lace, when the current step involves a change of node, either from a node
down to its $j$-th child or from a node up to its parent.
In particular, these predicates constrain the $\chg$ flags in the new frame
to respect their meaning, as described on page~\pageref{page:upd}.

Consider how the last two lines in $\Psi_\FunUp(\tau,j,\sigma,f)$
relate to the intended behavior of the $\chg$ flags.
In that context, $\tau^a$ is a frame in the $j$-th child of a parent node
with log $\sigma$, and the next frame in the lace is $f$, to be appended
on top of $\sigma$.
Note that in this situation the previous frame in the log of the parent    (i.e., $\sigma^{b-1}$) is followed in the lace by a frame in the $j$-th child.
Now, for every pointer variable $p \in \PV$, the $\chg_p$ flag when going
back to the parent is true iff $p$ is not
$\NIL$ at that time and we observe a marker for an update to $p$ within the frames of the child located between the step down and the step back up.
If the update to $p$ was performed in one of the frames of the child, the marker is a direct event of the type $\symb{p \coloneq \here}$.
If instead the update occurs further below in the subtree rooted in the child,
the marker observed in the child is a $\chg_p$ field set to true,
as encoded in the last lines of $\Psi_\FunUp$.

Finally, note that $\Psi_\FunUp$ is also responsible for synchronizing the $\activechild$
field in the parent with the $\activefield$ field in the child; these fields may become
out-of-sync after a $\Free$ operation performed on the child.

%

{\small
\begin{align*}
\underline{\Psi_\FunDown}(\sigma, j, \tau, f) \defeq\;
&\len(\sigma,a) \wedge \Continues(\sigma^{a}) \wedge \sigma.\nextfr^{a} = (j, b) \,\wedge 
\len(\tau,b-1) \,\wedge \\
&f.\prevfr = (\uparrow, a) \wedge \StepFrame(\sigma, a; \tau \cdot f, b) \,\wedge \\
&\bigwedge_{p \in \PV} \Big[\,  f\!.\chg_p \leftrightarrow \Big( 
\neg  f\!.\isnil_p \,\wedge b > 2 \wedge \tau\!.\nextfr^{b-1} = (\uparrow,a') \,\wedge \\[-1ex]
&\hspace{1cm} \Big( 
{\textstyle\bigvee_{c \in [a', a]}}   \big( \sigma.\instr^c = \symb{p \coloneq \here} \big) \,\vee\,
{\textstyle\bigvee_{c \in [a'+1, a]}} \sigma.\chg_p^c \Big) \, \Big)\, \Big] \\[1ex]
\underline{\Psi_\FunUp}(\tau, j, \sigma, f) \defeq\; 
&\len(\tau,a) \wedge \Continues(\tau^{a}) \wedge \tau.\nextfr^{a} = (\uparrow, b) \,\wedge 
\len(\sigma,b-1) \,\wedge \\
&f.\prevfr = (j, a) \wedge \StepFrame(\tau, a; \sigma \cdot f, b) \,\wedge 
f.\activechild = \tau.\activefield^{a} \wedge \\
&\bigwedge_{p \in \PV} \Big[\,  f\!.\chg_p \leftrightarrow \Big( 
\neg  f\!.\isnil_p \,\wedge b > 2 \wedge \sigma\!.\nextfr^{b-1} = (j,a') \,\wedge \\[-1ex]
&\hspace{1cm} \Big( 
{\textstyle\bigvee_{c \in [a', a]}}   \big(\tau.\instr^c = \symb{p \coloneq \here}\big) \,\vee\,
{\textstyle\bigvee_{c \in [a'+1, a]}} \tau.\chg_p^c \Big) \, \Big)\, \Big] .
\end{align*}
}

\setlength{\marginparsep}{-1cm}
\setlength{\marginparwidth}{2cm}

\subsection{Individual Statements}

The predicate $\StepFrame(\sigma,a;\tau,b)$ 
holds if $\tau^b$ is the logically correct frame to follow
$\sigma^a$ in a lace.
The structure of this predicate comprises an implication for each type of instruction
in our programming language, except for $\exit$.
The $\exit$ instruction does not produce a step,
because just having the program counter point to it
identifies a terminating execution.
{\small
\begin{align*} 
\underline{\StepFrame}&(\sigma, a; \tau, b) \defeq\: \\
   &P(\sigma.\pc^a) = \mathbf{skip} &&\rightarrow 
    \mathit{step\_skip(\sigma^a, \tau^b)} \;\wedge \\
   &P(\sigma.\pc^a) = (p \coloneqq \NIL) &&\rightarrow 
    \mathit{step\_assgn\_nil(\sigma^a, \tau^b, p)} \;\wedge \\
   &P(\sigma.\pc^a) = (d \coloneqq \dataexp) &&\rightarrow 
    \mathit{step\_assgn\_exp(\sigma^a, \tau^b, d, \dataexp)} \;\wedge \\
   &P(\sigma.\pc^a) = (d_\mathrm{bool} \coloneqq (p=q)) &&\rightarrow 
    \mathit{step\_assgn\_cond(\sigma, a, \tau, b, d_\mathrm{bool}, p, q)} \;\wedge \\
   &P(\sigma.\pc^a) = (p \coloneqq q) &&\rightarrow 
    \mathit{step\_assgn\_ptr(\sigma, a, \tau, b, p, q)} \;\wedge \\
   &P(\sigma.\pc^a) = (p \coloneqq \field{q}{\pfield}) &&\rightarrow 
    \mathit{step\_assgn\_from\_field(\sigma, a, \tau, b, p, q, \pfield)} \;\wedge \\    
   &P(\sigma.\pc^a) = (\field{p}{\pfield} \coloneq q) &&\rightarrow 
    \mathit{step\_assgn\_to\_field(\sigma, a, \tau, b, p, q, \pfield)} \;\wedge \\    
&P(\sigma.\pc^a) = (\field{p}{\datafield} \coloneq \dataexp) &&\rightarrow 
    \mathit{step\_assgn\_to\_data(\sigma, a, \tau, b, p, \dataexp)} \;\wedge \\
&P(\sigma.\pc^a) = (d \coloneq \field{p}{\datafield}) &&\rightarrow 
    \mathit{step\_assgn\_to\_var}(\sigma, a, \tau, b, d, p) \;\wedge \\
&P(\sigma.\pc^a) = (\New\ p) &&\rightarrow 
    \mathit{step\_new(\sigma^a, \tau^{b-1}, \tau^b, p)} \;\wedge \\    
&P(\sigma.\pc^a) = (\Free\ p) &&\rightarrow 
    \mathit{step\_free(\sigma, a, \tau, b, p)} \;\wedge \\    
&\big( P(\sigma.\pc^a) = (\mathbf{while}\ p=q ) \,\vee \\
&\:\: P(\sigma.\pc^a) = (\mathbf{if}\ p=q) \big)
&&\rightarrow 
    \mathit{step\_cmp\_ptr(\sigma, a, \tau, b, p, q, \false)} \;\wedge \\    
&\big( P(\sigma.\pc^a) = (\mathbf{while}\ \neg (p=q)) \,\vee \\
&\:\:      P(\sigma.\pc^a) = (\mathbf{if}\ \neg (p=q)) \big)
&&\rightarrow 
    \mathit{step\_cmp\_ptr(\sigma, a, \tau, b, p, q, \true)} \;\wedge \\    
&\big( P(\sigma.\pc^a) = (\mathbf{while}\ r(\dataexp_1,\ldots,\dataexp_l) ) \,\vee \\
&\:\:  P(\sigma.\pc^a) = (\mathbf{if}\ r(\dataexp_1,\ldots,\dataexp_l) ) \big)
&&\rightarrow 
    \mathit{step\_local\_branch}(\sigma, a, \tau, b, r, \dataexp_1,\ldots,\dataexp_l) \,.
\end{align*}
}

\subsubsection{Default values.} \label{app:defaults}

Regardless of any specific instruction, most fields in any new frame of the lace
are set to their default values, as described in Section~\ref{sec:defaults}.
The following $\mathit{default}$ predicate represents the full case, 
where all fields that have default values are constrained.
In the subsequent predicates, we add subscripts to $\mathit{default}$ to specify
which fields follow the default rules.
\begin{align*}
\underline{\mathit{default}}(f^\mathit{prev}, f^\mathit{below}, f) \defeq \:
&\neg f.\avail \,\wedge &&\textcolor{gray}{\avail} \\
&f.\activefield = f^\mathit{below}.\activefield \,\wedge &&\textcolor{gray}{\activefield} \\
&f.\datafield = f^\mathit{below}.\datafield \,\wedge &&\textcolor{gray}{\datafield} \\
&{\textstyle\bigwedge_{d \in \DV}}\, f.d = f^\mathit{prev}.d \,\wedge &&\textcolor{gray}{d} \\
&{\textstyle\bigwedge_{p \in \PV}}\, f.\isnil_p = f^\mathit{prev}.\isnil_p \,\wedge &&\textcolor{gray}{\isnil} \\
&f.\instr = \NOP \,\wedge &&\textcolor{gray}{\instr} \\
&f.\pc = f^\mathit{prev}.\pc \,\wedge &&\textcolor{gray}{\pc} \\
&\mathit{default}_\activechild(f^\mathit{prev}, f^\mathit{below}, f).  &&\textcolor{gray}{\activechild}
\end{align*}
The default for the $\activechild$ flags requires a little more care:
\begin{align*}
\underline{\mathit{default}_\activechild}&(f^\mathit{prev}, f^\mathit{below}, f) \defeq \: \\
&f.\prevfr = (j, *) \rightarrow \Big( f.\activechild_j = f^\mathit{prev}.\activefield \,\wedge \\
&\hspace{2.7cm} {\textstyle\bigwedge_{j \neq i}} \, \big( f.\activechild_i = f^\mathit{below}.\activechild_i \big) \, \Big) \:\wedge
\\
&f.\prevfr \neq (j, *) \rightarrow 
{\textstyle\bigwedge_{i \in [k+m]}}\, \big( f.\activechild_i = f^\mathit{below}.\activechild_i \big) \,. 
\end{align*}

\subsubsection{Internal instructions.} \label{app:local-steps}

Internal instructions push a single new frame on the current node of the \lacetree.
As our first type of internal instruction, we describe the $\mathit{step\_assgn\_nil}$ predicate.
As explained earlier, its encoding simply pushes a new frame $f_2$ 
on the current node, with the $\isnil_p$ flag set to \true.
All the other fields of the new frame take their default value, 
except the program counter, which advances to the next instruction:
\begin{align*}
\sidecomment{$p \coloneq \NIL$}
\underline{\mathit{step\_assgn\_nil}}&(f_1, f_2, p) \defeq\: 
(f_1.\nextfr = \fwd) \;\wedge \\
&f_2.\isnil_p \,\wedge\, {\textstyle\bigwedge_{q \in \PV \setminus \{ p \}}} (f_2.\isnil_q = f_1.\isnil_q) \,\wedge \\
&\mathit{advance\_pc}(f_1, f_2) \wedge \\
&\mathit{default}_{\avail,\activefield,\datafield,d,\instr,\activechild}(f_1, f_1, f_2).
\end{align*}
\noindent
The following auxiliary predicate encodes the advancement of the program counter:
$$
\underline{\mathit{advance\_pc}}(f_1, f_2) \defeq\, \big(f_2.\pc = \nextpc(f_1.\pc) \big).
$$
\noindent
Next, we present the predicate that encodes the $\New$ instruction.
Note that a failed $\New$ is only justified if all children of the current node
with position in $[k+1,k+m]$ are active (i.e., currently allocated).
\begin{align*}
\sidecomment{$\New\ p$}
\underline{\mathit{step\_new}}&(f, f', f'', p) \defeq\: \\
&\text{\small\textcolor{gray}{Case 1: Out-of-memory error}} \\
&\Big( {\textstyle\bigwedge_{j \in [k+1,k+m]}}\, f.\activechild_j \,\wedge\, f.\nextfr = (\fwd,*) \,\wedge\, f''.instr = \OOM \,\wedge \\ &\quad\mathit{default}_{\avail,\activefield,\datafield,d,\pc,\isnil,\activechild}(f, f', f'') \Big) \quad\vee \\
&\text{\small\textcolor{gray}{Case 2: Normal case}} \\
&\Big( \bigvee_{j \in [k+1,k+m]} \big( \neg f.\activechild_j \wedge {\bigwedge_{\mathclap{i \in [k+1,j-1]}}} \, f.\activechild_i \wedge f.\nextfr = (j,*) \big) \,\wedge \\
&\:\:\neg f''.\isnil_p \,\wedge\, \bigwedge_{q \in \PV \setminus \{ p \}} \big( f''.\isnil_q = f.\isnil_q \big) \;\wedge \\
&\:\:f''.\instr = \symb{ p \coloneq \here } \,\wedge\, f''.\activefield \,\wedge\,
\mathit{advance\_pc}(f, f'') \,\wedge \\
&\:\:\mathit{default}_{\avail,\datafield,d,\activechild}(f, f', f'') \,\Big) \,.
\end{align*}
\noindent
The following predicate handles assignments of arbitrary data expressions to data variables.
\begin{align*}
\sidecomment{$d \coloneq \dataexp$}
\underline{\mathit{step\_assgn\_exp}}&(f_1, f_2, d, \dataexp) \defeq\: \\
&\big( f_1.\nextfr = (\fwd,*) \big) \,\wedge\, f_2.d = exp[d' \mapsto f_1.d']_{d' \in \DV}  \,\wedge \\ 
&{\textstyle\bigwedge_{d' \in \DV \setminus \{d\} }}\, (f_2.d' = f_1.d') \,\wedge 
\mathit{advance\_pc}(f_1, f_2) \wedge \\
&\mathit{default}_{\avail,\activefield,\datafield,\isnil,\instr,\activechild}(f_1, f_1, f_2) \,.
\end{align*}
\noindent
Finally, the following is the straightforward encoding
of the $\Skip$ statement.
\begin{align*}
\sidecomment{$\Skip$}
\underline{\mathit{step\_skip}}(f_1, f_2, p) \defeq\: 
&\big( f_1.\nextfr = (\fwd,*) \big) \,\wedge\, \mathit{advance\_pc}(f_1, f_2) \,\wedge \\ &\mathit{default}_{\avail,\activefield,\datafield,d,\isnil,\instr,\activechild}(f_1, f_1, f_2).
\end{align*}

\subsubsection{Walking instructions.}

The following \emph{walking instructions} may add multiple frames to different nodes
of the \lacetree.
The first such instruction is the assignment of the form $p \coloneq q$,
which may require rewinding the lace to find the node currently
pointed by $q$. In fact, the encoding distinguishes three cases:
(1) when $q$ is $\NIL$, 
(3) when $q$ points elsewhere, and we need to start or keep rewinding the lace,
(2) when $q$ points to the current node (i.e.,
the node with label $\sigma$).
The predicates governing the rewinding operation are presented later in Section~\ref{app:rewind}.
\begin{align*}
\sidecomment{$p\coloneq q$}
 &\underline{\mathit{step\_assgn\_ptr}}(\sigma, a, \tau, b, p, q) \defeq\: \\
&\hspace{1cm}\text{\small\textcolor{gray}{Case 1: $q$ is $\NIL$; use the encoding for $p \coloneq \NIL$}} \\
 &\hspace{1cm}\Big( \sigma^a.\isnil_q \,\wedge\, 
  \mathit{step\_assgn\_nil}(\sigma^a, \tau^b, p) \Big) \quad\vee \\
&\hspace{1cm}\text{\small\textcolor{gray}{Case 2: $q$ points elsewhere; start or keep rewinding}} \\
&\hspace{1cm} \mathit{rewind(\sigma, a, \tau, b, q)} \quad\vee \\
&\hspace{1.0cm}\text{\small\textcolor{gray}{Case 3: $q$ points to the current node}} \\
 &\hspace{1cm}\Big( \mathit{stop\_rewind(\sigma, a, q)} \,\wedge\, 
\mathit{set\_ptr\_here}(\sigma^a, \tau^b, p) \Big) \,.
\end{align*}
The auxiliary predicate $\mathit{set\_ptr\_here}$ pushes a frame on the current
node with $\symb{ p \coloneqq \here }$ and updates $\isnil$ accordingly.
\begin{align*}
\underline{\mathit{set\_ptr\_here}}(f_1, f_2, p) \defeq\: 
&\big( f_1.\nextfr = (\fwd,*) \big) \,\wedge\, \mathit{advance\_pc}(f_1, f_2) \, \wedge \\
&f_2.\instr = \symb{ p \coloneqq \here } \,\wedge\, \\
&\neg f_2.\isnil_p \,\wedge\, {\textstyle\bigwedge_{q \in \PV \setminus \{ p \}}} \big( f_2.\isnil_q = f_1.\isnil_q \big) \;\wedge \\
&\mathit{default}_{\avail,\activefield,\datafield,d,\activechild}(f_1, f_1, f_2) \,.
\end{align*}
The following predicate encodes the statements of the form $\field{p}{\pfield} \coloneq q$.
\begin{align*}
\sidecomment{$\field{p}{\pfield} \coloneq q$}
&\underline{\mathit{step\_assgn\_to\_field}}(\sigma, a, \tau, b, p, q, \pfield) \defeq\: 
\mathit{find\_or\_fail}(\sigma, a, \tau, b, p) \:\vee \\
&\hspace{1cm}\text{\small\textcolor{gray}{$p$ points to the current node:}} \\ &\hspace{1cm} \Bigg( \mathit{stop\_rewind(\sigma, a, p)} \,\wedge\,  \sigma^a.\nextfr = (\fwd,a+1) \,\wedge 
 \mathit{advance\_pc}(\sigma^a, \tau^b) \, \wedge \\
 &\hspace{1.3cm} \Big( \big( \neg\sigma^a.\isnil_q \wedge \tau^b.\instr = \symb{ \pfield \coloneqq q } \big) \vee \\
 &\hspace{1.5cm} \big( \sigma^a.\isnil_q \wedge \tau^b.\instr = \symb{ \pfield \coloneqq \NIL } \big) \Big) \,\wedge\, \\
&\hspace{1.3cm}\mathit{default}_{\avail,\activefield,\datafield,d,\isnil,\activechild}(\sigma^a, \tau^{b-1}, \tau^b) 
    \Bigg).
\end{align*}

The predicate $\mathit{step\_assgn\_nil\_to\_field}$, corresponding to
the statements of the form $\field{p}{\pfield} \coloneq \NIL$,
is a simple variant of the above, that only sets
the event $\symb{ \pfield \coloneqq \NIL }$.

Next, we deal with the two instructions that write or read the data field
of a node.

\begin{align*}
\sidecomment{$\field{p}{\datafield} \coloneq \dataexp$}
&\underline{\mathit{step\_assgn\_to\_data}}(\sigma, a, \tau, b, p, \dataexp) \defeq\:
\mathit{find\_or\_fail}(\sigma, a, \tau, b, p) \:\vee \\
&\hspace{1cm}\text{\small\textcolor{gray}{$p$ points to the current node:}} \\ &\hspace{1cm}\Big( \mathit{stop\_rewind(\sigma, a, p)} \,\wedge\, \big( \sigma^a.\nextfr = (\fwd,a+1) \big) \,\wedge 
 \mathit{advance\_pc}(\sigma^a, \tau^b) \, \wedge \\
&\hspace{1.2cm}\tau^b.\datafield = \dataexp\big[d \mapsto \sigma^a.d\big]_{d \in \DV}  \,\wedge \\
&\hspace{1.2cm}\mathit{default}_{\avail,\activefield,d,\isnil,\instr,\activechild}(\sigma^a, \tau^{b-1}, \tau^b) 
    \Big) \,.
\end{align*}
\begin{align*} 
\sidecomment{$d \coloneq \field{p}{\datafield}$}
&\underline{\mathit{step\_assgn\_to\_var}}(\sigma, a, \tau, b, d, p) \defeq\:
\mathit{find\_or\_fail}(\sigma, a, \tau, b, p) \:\vee \\
&\hspace{1cm}\text{\small\textcolor{gray}{$p$ points to the current node:}} \\ &\hspace{1cm}\Big( \mathit{stop\_rewind(\sigma, a, p)} \,\wedge\, \big( \sigma^a.\nextfr = (\fwd,a+1) \big) \,\wedge 
 \mathit{advance\_pc}(\sigma^a, \tau^b) \, \wedge \\
&\hspace{1.2cm}\tau^b.d = \sigma^a.\datafield \,\wedge\,
{\textstyle\bigwedge_{d' \in \DV \setminus \{d\} }} (\tau^b.d' = \sigma^a.d') \,\wedge \\
&\hspace{1.2cm}\mathit{default}_{\avail,\activefield,\datafield,\isnil,\instr,\activechild}(\sigma^a, \tau^{b-1}, \tau^b) 
    \Big) \,.
\end{align*}

The following predicate handles the deallocation of the node pointed by a given pointer.

\begin{align*} 
\sidecomment{$\Free\ p$}
&\underline{\mathit{step\_free}}(\sigma, a, \tau, b, p) \defeq\: 
\mathit{find\_or\_fail}(\sigma, a, \tau, b, p) \:\vee \\
&\hspace{1cm}\text{\small\textcolor{gray}{$p$ points to the current node:}} \\ &\hspace{1cm}\Big( \mathit{stop\_rewind(\sigma, a, p)} \,\wedge\, \big( \sigma^a.\nextfr = (\fwd,*) \big) \,\wedge
 \mathit{advance\_pc}(\sigma^a, \tau^b) \, \wedge \\
&\hspace{1.2cm}\neg \tau^b.\activefield \,\wedge\, \\
&\hspace{1.2cm} \bigwedge_{q \in \PV} \Big( \, \big( \PointsHere(\sigma, a, q) \rightarrow \tau^b.\isnil_q \big) \,\wedge \\[-1ex]
&\hspace{2.3cm}\big( \neg\PointsHere(\sigma, a, q) \rightarrow \tau^b.\isnil_q = \sigma^a.\isnil_q \big) \,\Big) \,\wedge \\
&\hspace{1.2cm}\mathit{default}_{\avail,\datafield,d,\instr,\activechild}(\sigma^a, \tau^{b-1}, \tau^b) 
    \Big) \,.
\end{align*}

The following auxiliary predicates handle the search for the current value
of a pointer variable $p$, including issuing an error if $p$ is $\NIL$.
\begin{align*}
\underline{\mathit{find\_or\_fail}}(\sigma, a, \tau, b, p) \defeq\:
&\Big( \sigma^a.\isnil_p \,\wedge\, 
  \mathit{error}(\sigma^a, \tau^b) \Big) \vee \mathit{rewind(\sigma, a, \tau, b, p)} \\[1ex]
\underline{\mathit{error}}(f_1, f_2) \defeq\: 
&\big( f_1.\nextfr = (\fwd,*) \big) \,\wedge\,
(f_2.\instr = \Error) \,\wedge \\ &\mathit{default}_{\avail,\activefield,\datafield,d,\isnil,\pc,\activechild}(f_1, f_1, f_2) \,.
\end{align*}

Next, we move to the instruction that assigns to a Boolean variable
the result of the comparison between two pointers.
If at least one of the two pointers is $\NIL$, the comparison can be resolved locally.
Otherwise, a rewind operation may be necessary.
The auxiliary predicates $\mathit{rewind2}$, $\mathit{stop\_rewind2}$,
and $\mathit{are\_equal\_after\_rewind}$ are described in Sec.~\ref{app:rewind}.
{\allowdisplaybreaks
\begin{align*}
\sidecomment{$d_\mathrm{bool} \coloneq (p = q)$}
&\underline{\mathit{step\_assgn\_cond}}(\sigma, a, \tau, b, d_\mathrm{bool}, p, q) \defeq\: \\
&\hspace{1cm}\text{\small\textcolor{gray}{Case 1: at least one pointer is $\NIL$}}   \\
&\hspace{1cm}\Big( \, \big( \sigma^a.\isnil_p \vee \sigma^a.\isnil_q \big) \,\wedge \\
&\hspace{1cm}\quad \big( \sigma^a.\nextfr = (\fwd,*) \big) \,\wedge\, \mathit{advance\_pc}(\sigma^a, \tau^b) \,\wedge \\ 
&\hspace{1cm}\quad\tau^b.d_\mathrm{bool} = \big( \sigma^a.\isnil_p \leftrightarrow \sigma^a.\isnil_q) \big) \, \wedge \\
&\hspace{1cm}\quad{\textstyle\bigwedge_{d \in \DV \setminus \{d_\mathrm{bool}\} }} (\tau^b.d = \sigma^a.d) \,\wedge \\
&\hspace{1cm}\quad \mathit{default}_{\avail,\activefield,\datafield,\isnil,\instr,\activechild}\big( \sigma^a, \tau^{b-1}, \tau^b \big) \,\Big) \quad\vee\\[1ex]
&\hspace{1cm}\text{\small\textcolor{gray}{Case 2: start or keep rewinding}}  \\
&\hspace{1cm}\mathit{rewind2}(\sigma, a, \tau, b, p, q) \quad\vee \\[1ex] 
&\hspace{1cm}\text{\small\textcolor{gray}{Case 3: stop rewinding}}  \\
 &\hspace{1cm}\Big(\, \mathit{stop\_rewind2}(\sigma, a, p, q)
 \,\wedge\, \big( \sigma^a.\nextfr = (\fwd,*) \big) \,\wedge\, \mathit{advance\_pc}(\sigma^a, \tau^b) \,\wedge \\ 
&\hspace{1cm}\quad\tau^b.d_\mathrm{bool} = \mathit{are\_equal\_after\_rewind}(\sigma, a, p, q) \big) \, \wedge \\
&\hspace{1cm}\quad{\textstyle\bigwedge_{d \in \DV \setminus \{d_\mathrm{bool}\} }} (\tau^b.d = \sigma^a.d) \,\wedge \\
&\hspace{1cm}\quad \mathit{default}_{\avail,\activefield,\datafield,\isnil,\instr,\activechild}(\sigma^a, \tau^{b-1}, \tau^b) 
    \Big) \,.
\end{align*}
}

The most complex walking instruction is the assignment of the form
$p \coloneq \field{q}{\pfield}$, because it may involve \emph{two}
consecutive rewinding phases.
{\allowdisplaybreaks
\begin{align*} 
\sidecomment{$p \coloneq \field{q}{\pfield}$}
 &\underline{\mathit{step\_assgn\_from\_field}}(\sigma, a, \tau, b, p, q, \pfield) \defeq\: \\
&\hspace{1cm}\text{\small\textcolor{gray}{Case 1: $q$ is $\NIL$; null pointer dereference}} \\
&\hspace{1cm}\big( \sigma^a.\isnil_q \,\wedge\, 
  \mathit{error}(\sigma^a, \tau^b) \big) \quad\vee \\[1ex]
&\hspace{1cm}\text{\small\textcolor{gray}{Case 2a: phase I; $q$ points elsewhere; start or keep rewinding}} \\
&\hspace{1cm} \big( \sigma^a.\instr \neq \RWD_{*,*} \,\wedge\, \mathit{rewind(\sigma, a, \tau, b, q)} \big) \quad\vee \\
&\hspace{1cm}\text{\small\textcolor{gray}{Case 2b: phase II; $r$ points elsewhere; start or keep rewinding}} \\
&\hspace{1cm} \big( \sigma^a.\instr = \RWD_{i,r} \,\wedge\, \mathit{rewind\_special}(\sigma, a, \tau, b, r, i) \big) \quad\vee \\[1ex]
&\hspace{1cm}\text{\small\textcolor{gray}{Case 3a: end of phase I; $q$ points to the current node}} \\ 
&\hspace{1cm}\Bigg( \sigma^a.\instr \neq \RWD_{*,*} \wedge \mathit{stop\_rewind(\sigma, a, q)} \,\wedge\, \\[-1ex]
&\hspace{1.2cm}
\Big( \big( \mathit{is\_pfield\_nil}(\sigma, a, \pfield)
\vee \\ 
&\hspace{1.4cm} (\mathit{is\_pfield\_implicit}(\sigma, a, \pfield) \wedge 
 \neg \sigma^a.\activechild_{\mathit{index(\pfield)}} ) \big)
\rightarrow \\
&\hspace{1.7cm}\mathit{step\_assign\_nil}(\sigma^a, \tau^b, p) \Big)
 \,\wedge \\
&\hspace{1.2cm}\Big( \big( \mathit{is\_pfield\_implicit}(\sigma, a, \pfield) \,\wedge\,
\sigma^a.\activechild_{\mathit{index(\pfield)}} \big)
\rightarrow \\
&\hspace{1.7cm}\big( \sigma^a.\nextfr = (\mathit{index}(\pfield), b) \,\wedge\, 
\mathit{advance\_pc}(\sigma^a, \tau^b) \, \wedge \\
&\hspace{1.9cm} \tau^b.\instr = \symb{ p \coloneqq \here } \,\wedge\, \\
&\hspace{1.9cm}\mathit{default}_{\avail,\activefield,\datafield,d,\isnil,\activechild}(\sigma^a, \tau^{b-1}, \tau^b) \big) \,\Big) \,\wedge \\
&\hspace{1cm}\bigwedge_{r \in \PV, i \in [2,n]} \Big(\,
           \big( \mathit{is\_pfield\_ptr}(\sigma, a, \pfield, r, i) \wedge 
            \PointsHere(\sigma, i, r) \big) \rightarrow \\[-2ex]
&\hspace{3cm}\mathit{set\_ptr\_here}(\sigma^a, \tau^b, p) \,\Big) \:\wedge \\
 &\hspace{1cm}\bigwedge_{r \in \PV, i \in [2,n]} 
           \Big(\, \big( \mathit{is\_pfield\_ptr}(\sigma, a, \pfield, r, i) \wedge 
            \neg\PointsHere(\sigma, i, r) \big) \rightarrow \\[-2ex]
&\hspace{3cm}\mathit{rewind\_special}(\sigma, a, \tau, b, r, i) \Big) \Bigg) \quad\vee \\
&\hspace{1cm}\text{\small\textcolor{gray}{Case 3b: end of phase II; $r$ points to the current node}} \\
&\hspace{1cm}\Big( \sigma^a.\instr = \RWD_{i,r} \wedge \PointsHere(\sigma, i, r) \,\wedge\, \mathit{set\_ptr\_here}(\sigma^a, \tau^b, p) \Big) \,.
\end{align*}}

\noindent
The following auxiliary predicates check the value of the node field $\pfield$
at $(\sigma,a)$.
{\allowdisplaybreaks
\begin{align*}
&\text{\small\textcolor{gray}{$\pfield$ is $\NIL$ at the frame $(\sigma,a)$ of the lace:}}\\
&\underline{\mathit{is\_pfield\_nil}}(\sigma, a, \pfield) \defeq \\
&\hspace{1cm}\bigvee_{i \in [2,a]} \Big( \sigma^i.\instr = \symb{ \pfield \coloneq \NIL } \wedge
   \bigwedge_{j \in [i+1,a]} \sigma^j.\instr \neq \symb{ \pfield \coloneq * } \Big) \\[1ex]
&\text{\small\textcolor{gray}{$\pfield$ has the same value as $r$ at the frame $(\sigma,a)$ of the lace:}}\\
&\underline{\mathit{is\_pfield\_ptr}}(\sigma, a, \pfield, r, i) \defeq \\
&\hspace{1cm}\sigma^i.\instr = \symb{ \pfield \coloneq r } \wedge
   \bigwedge_{j \in [i+1,a]} \sigma^j.\instr \neq \symb{ \pfield \coloneq * } \\[1ex]
&\text{\small\textcolor{gray}{$\pfield$ has never been assigned up to the frame 
$(\sigma,a)$ of the lace:}}\\
&\underline{\mathit{is\_pfield\_implicit}}(\sigma, a, \pfield) \defeq
   \bigwedge_{j \in [2,a]} \sigma^j.\instr \neq \symb{ \pfield \coloneq * } \,.
\end{align*}}

\subsection{Boolean Conditions and Control-Flow Instructions}

Control-flow statements $\mathbf{if}$ and $\mathbf{while}$ have two successors, 
depending on the value of their Boolean condition. To support those statements, we introduce
a 3-argument version of the predicate that advances the program counter:
\begin{align*}
\underline{\mathit{advance\_pc}}(f_1, f_2, \mathit{cond}) \defeq\,
\big(f_2.\pc = \nextpc(f_1.\pc, \mathit{cond}) \big).
\end{align*}
The next predicate handles the Boolean conditions of the form $p = q$:
\begin{align*}
\sidecomment{if $p = q$}
\underline{\mathit{step\_cmp\_ptr}}&(\sigma, a, \tau, b, p, q, \mathit{neg}) \defeq\: \\
&\text{\small\textcolor{gray}{Case 1: at least one pointer is $\NIL$}}   \\
&\Big( \, \big( \sigma^a.\isnil_p \vee \sigma^a.\isnil_q \big) \,\wedge\,
\big( \sigma^a.\nextfr = (\fwd, a+1) \big) \,\wedge\, \\
&\quad\mathit{advance\_pc}(\sigma^a, \tau^b, \mathrm{xor}(\mathit{neg}, \sigma^a.\isnil_p \leftrightarrow \sigma^a.\isnil_q) ) \, \wedge \\
&\quad \mathit{default}_{\avail,\activefield,\datafield,d,\isnil,\instr,\activechild}\big( \sigma^a, \tau^{b-1}, \tau^b \big) \,\Big) \quad\vee\\[1ex]
&\text{\small\textcolor{gray}{Case 2: start or keep rewinding}}  \\
&\mathit{rewind2}(\sigma, a, \tau, b, p, q) \quad\vee \\[1ex] 
&\text{\small\textcolor{gray}{Case 3: stop rewinding}}  \\
 &\Big(\, \mathit{stop\_rewind2}(\sigma, a, p, q)
 \,\wedge\, \big( \sigma^a.\nextfr = (\fwd, a+1) \big) \,\wedge \\
&\;\; \mathit{advance\_pc}\big( \sigma^a, \tau^b, \mathrm{xor}(\mathit{neg}, \mathit{are\_equal\_after\_rewind}(\sigma, a, p, q) ) \big) \, \wedge \\
&\;\; \mathit{default}_{\avail,\activefield,\datafield,d,\isnil,\instr,\activechild}(\sigma^a, \tau^{b-1}, \tau^b) 
    \Big) \,.
\end{align*}
The next predicate handles the Boolean conditions based on the content of the data variables.
Those conditions can always be resolved locally.
\begin{align*}
\sidecomment{if $r(\dataexp_1,\ldots,\dataexp_l)$}
&\underline{\mathit{step\_local\_branch}}(f_1, f_2, r, \dataexp_1, \ldots, \dataexp_l) \defeq\: \\
&\hspace{1cm}\big( f_1.\nextfr = (\fwd,*) \big) \,\wedge\, \\
&\hspace{1cm}\mathit{advance\_pc}\big(\, f_1, f_2, 
r(\dataexp_1[d \mapsto f_1.d]_{d \in \DV}, \ldots, 
  \dataexp_l[d \mapsto f_1.d]_{d \in \DV}) \,\big) \, \wedge \\
&\hspace{1cm}\mathit{default}_{\avail,\activefield,\datafield,d,\isnil,\instr,\activechild}(f_1, f_1, f_2) \,.
\end{align*}

\subsection{Rewinding the Lace} \label{app:rewind}

The following predicate is true when the pointer $q$ is not $\NIL$ at $(\sigma,a)$
and its value cannot be ascertained by analyzing the label $\sigma$ alone.
In that case, it is necessary to perform at least one rewinding step,
represented by the frame $(\tau,b)$.
\begin{align*}
\underline{\mathit{rewind}}&(\sigma, a, \tau, b, q) \defeq \\
&\sigma^a.\nextfr = (\dir, b) \,\wedge\, \neg\sigma^a.\isnil_q \,\wedge\, a' = \mathit{cur\_rewind\_pos}(\sigma, a) \,\wedge \\
&\neg\PointsHere(\sigma, a', q) \,\wedge\, 
a'' = \mathit{last\_upd}(\sigma,a',q) \,\wedge \\
&\sigma^{a''}.\prevfr = (\dir, b') \,\wedge\, 
\tau^{b'}.\nextfr = (\dir', a'') \,\wedge \\
&b = \mathit{last\_visit}(\tau,\dir',a') +1 \,\wedge\,
 \tau^b.\instr = \RWD_{b'}  \,\wedge\, \\
&\mathit{default}_{\avail,\activefield,\pc,\datafield,d, \isnil,\activechild}(\sigma^a, \tau^{b-1}, \tau^b) 
 \,.
\end{align*}
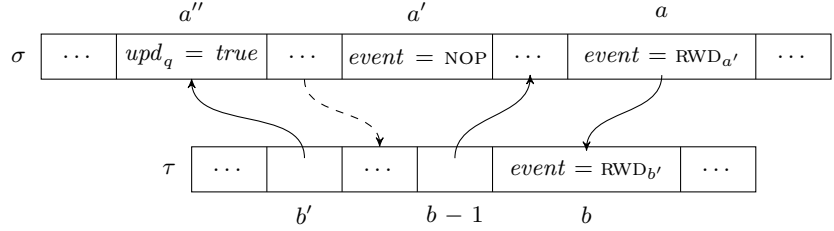
\begin{figure}
\begin{tikzpicture}[font=\footnotesize]
\begin{scope}[align=center]
    \draw (0,0.2) rectangle (10.5,0.8);
    \foreach \x in {1,3,4,6,7,9.5}
        \draw[-] (\x,0.2) -- (\x,0.8);
        
    \node[text width=2cm,font=\small] at (-0.3,0.5) {$\sigma$};
    \node[text width=1.5cm] at (0.5,0.5) {\dots};
    \node[text width=2.5cm] (a2) at (2,0.5) {$\chg_q = \true$};
    \node[text width=2cm] at (2,1.1) {$a''$};
    \node[text width=1.5cm] at (3.5,0.5) {\dots};
    \node[text width=2.5cm] at (5,0.5) {$\instr = \NOP$};
    \node[text width=2cm] at (5,1.1) {$a'$};
    \node[text width=1.5cm] (space) at (6.5,0.5) {\dots};
    \node[text width=2.3cm] (a0) at (8.25,0.5) {$\instr = \RWD_{a'}$};
    \node[text width=2cm] at (8.25,1.1) {$a$};
    \node[text width=1.5cm] at (10,0.5) {\dots};
    \coordinate[below=0.1cm of space] (below_space);
    \coordinate (sigma_space) at (3.5,0.2);
\end{scope}
\begin{scope}[align=center,
              xshift=2cm,yshift=-1.5cm]
    \draw (0,0.2) rectangle (7.5,0.8);
    \foreach \x in {1,2,3,4,6.5}
        \draw[-] (\x,0.2) -- (\x,0.8);
        
    \node[text width=2cm,font=\small] at (-0.3,0.5) {$\tau$};
    \node[text width=1.5cm] at (0.5,0.5) {\dots};
    \node[text width=2cm] (b1) at (1.5,-0.1) {$b'$};
    \node[text width=1.5cm] at (2.5,0.5) {\dots};
    \node[text width=2cm] (bminus) at (3.5,-0.1) {$b-1$};
    \node[text width=2.2cm] (b0) at (5.25,0.5) {$\instr = \RWD_{b'}$};
    \node[text width=2cm] at (5.25,-0.1) {$b$};
    \node[text width=1.5cm] at (7,0.5) {\dots};
    \coordinate[above=0.5cm of b1] (above_b1);
    \coordinate[above=0.5cm of bminus] (above_bminus);
    \coordinate (tau_space) at (2.5,0.8);
\end{scope}
\draw[->]  (above_b1) to[out=90, in=-90] (a2);
\draw[->]  (a0) to[out=-90, in=90] (b0);
\draw[->]  (above_bminus) to[out=90, in=-90] (below_space);
\draw[dashed,->] (sigma_space) to[out=-90, in=90] (tau_space);
\end{tikzpicture}
\caption{An example of the $\mathit{rewind}(\sigma,a,\tau,b,q)$ predicate.
Arrows connect a frame to its successor in the lace.}
\end{figure}

The above predicate uses several auxiliary functions and predicates.
The function $\mathit{cur\_rewind\_pos}(\sigma,a)$ 
returns the position within $\sigma$ of the current rewinding.
If the rewinding is just starting, this position will simply be $a$.
Otherwise, the current instruction will be of the type $\RWD_{b}$,
and the current rewinding position will be $b$.
$$
 \underline{\mathit{cur\_rewind\_pos}}(\sigma,a) \defeq 
 \begin{cases}
    b &\text{if } \sigma^a.\instr =\RWD_{b} \text{ for some } b \in [n], \\
    a &\text{otherwise.}
 \end{cases}
$$
The predicate $\PointsHere(\sigma, a, q)$ holds when $q$ points
to the node labeled with $\sigma$ when the lace is at $(\sigma,a)$.
This occurs when an instruction of the type $\symb{ q \coloneq \here }$
is found in $\sigma$ at position at most $a$, and all $\chg_q$ flags
from that position to position $a$ are false.
$$
\underline{\PointsHere}(\sigma, a, q) \defeq
  \bigvee_{i \in [2,a]} \Big( 
    \sigma^i.\instr = \symb{ q \coloneq \here } \,\wedge\,
    \bigwedge_{\mathclap{j \in [i+1,a]}} \neg\sigma^j.\chg_q \Big) \,.
$$
When the value of pointer $q$ cannot be ascertained by the current label $\sigma$,
we search for its most recent assignment by identifying the latest
position in $\sigma$ that comes before position $a$ and contains the flag $\chg_q = \true$.
This is the job of the function $\mathit{last\_upd}(\sigma, a, q)$.
If no position in $\sigma$ before $a$ contains $\chg_q = \true$,
this function returns 2, because in that case the rewinding operation must go back
to the frame that led to the first visit to the current node. 
{\small
$$
 \underline{\mathit{last\_upd}}(\sigma, a, q) \defeq\,
\begin{cases}
    \max \{ i \in [2,a] \mid \sigma^i.\chg_q = \true \} &\text{if } \{ i \in [2,a] \mid \sigma^i.\chg_q = \true \} \neq \emptyset \\
    2 &\text{otherwise.}
\end{cases} 
$$
}
The last auxiliary function, $\mathit{last\_visit}(\tau, \dir, j)$,
returns the position of the last frame from $\tau$ whose $\nextfr$
field is of the form $(\dir, c)$ for some $c \leq j$.
 \begin{align*}
  \underline{\mathit{last\_visit}}(\tau, \dir, j) &\defeq\,
 \max \{ i \in [n] \mid \tau^i.\nextfr = (\dir,c), c \leq j \}  \,.
\end{align*}

\subsubsection{Searching for two pointers.}

We describe a variant of the predicate 
$\mathit{rewind}$,
for the instructions that search for \emph{two}
different pointers at the same time.
In fact, this only happens in the encoding of the Boolean condition $q_1 = q_2$.
\begin{align*}
   \underline{\mathit{rewind2}}&(\sigma, a, \tau, b, q_1, q_2) \defeq \\
&\sigma^a.\nextfr = (\dir, b) \,\wedge\, \neg\sigma^a.\isnil_{q_1} \,\wedge\, \neg\sigma^a.\isnil_{q_2} \,\wedge\, \\
&a' = \mathit{cur\_rewind\_pos}(\sigma, a) \,\wedge \\
&\neg\PointsHere(\sigma, a', q_1) \,\wedge\, 
                \neg\PointsHere(\sigma, a', q_2) \, \wedge \\
&a'' = \max \big\{ \mathit{last\_upd}(\sigma,a',q_1), \mathit{last\_upd}(\sigma,a',q_2) \big\} \,\wedge \\
&\sigma^{a''}.\prevfr = (\dir, b') \,\wedge\, \tau^{b'}.\nextfr=(\dir', a'') \,\wedge \\
&b = \mathit{last\_visit}(\tau,\dir',a') +1 \,\wedge\,
 \tau^b.\instr = \RWD_{b'} \,\wedge\, \\
&\mathit{default}_{\avail,\activefield,\pc,\datafield,d, \isnil,\activechild}(\sigma^a, \tau^{b-1}, \tau^b) 
 \,. 
\end{align*}
Next, the version of rewinding used by the second phase of the 
statement $p \coloneq \field{q}{\pfield}$.
\begin{align*}
\underline{\mathit{rewind\_special}}&(\sigma, a, \tau, b, r, a') \defeq \\
&\sigma^a.\nextfr = (\dir, b) \,\wedge\, \neg\sigma^a.\isnil_{r} \,\wedge\,  \\
&\neg\PointsHere(\sigma, a', r) \,\wedge\, 
 a'' = \mathit{last\_upd}(\sigma, a', r) \,\wedge \\
&\sigma^{a''}.\prevfr=(\dir, b') \,\wedge\, \tau^{b'}.\nextfr=(\dir', a'') \,\wedge \\
&b = \mathit{last\_visit}(\tau,\dir', a') +1 \,\wedge\,
 \tau^b.\instr = \RWD_{b', r}  \,\wedge\, \\
&\mathit{default}_{\avail,\activefield,\pc,\datafield,d, \isnil,\activechild}(\sigma^a, \tau^{b-1}, \tau^b) 
 \,.
\end{align*}

The following predicates check whether the search for one or two pointers
is finished because those variables point to the current node (i.e., the node
with label $\sigma$):
\begin{align*}
\underline{\mathit{stop\_rewind}} (\sigma, a, q) \defeq\: 
& \neg \sigma^a.\isnil_q \,\wedge\, a' = \mathit{cur\_rewind\_pos}(\sigma, a) \,\wedge \\
&\PointsHere(\sigma, a', q) \\
\underline{\mathit{stop\_rewind2}}(\sigma, a, q_1, q_2) \defeq\:
&\mathit{stop\_rewind}(\sigma, a, q_1) \vee \mathit{stop\_rewind}(\sigma, a, q_2) .
 \end{align*}

The following predicate is used by the instructions that have already searched for two pointers
and want to check whether they are equal.
\begin{multline*}    
\underline{\mathit{are\_equal\_after\_rewind}}(\sigma, a, p, q) \defeq    \\
a' = \mathit{cur\_rewind\_pos}(\sigma, a) \,\wedge \,
\PointsHere(\sigma, a', p) \wedge \PointsHere(\sigma, a', q) \,.
\end{multline*}

\end{document}